\newtheorem{theorem}{Theorem} 
\newtheorem{corollary}{Corollary} 
\newtheorem{lemma}{Lemma} 
\newtheorem{proposition}{Proposition} 
\newtheorem{definition}{Definition}
\theoremstyle{definition}
\newtheorem{remark}{Remark}  
\newcommand{\E}{\mathbb{E}}
\newcommand{\R}{\mathbb{R}}
\renewcommand{\P}{\mathbb{P}}
\newcommand{\bs}{\boldsymbol}
\begin{document}

\def\spacingset#1{\renewcommand{\baselinestretch}%
{#1}\small\normalsize} \spacingset{1}


  \title{\bf Differentially private analysis of networks with covariates via a generalized $\beta$-model}
  \author{Ting Yan\thanks{
    Department of Statistics, Central China Normal University, Wuhan, 430079, China. \texttt{Emails}: tingyanty@mail.ccnu.edu.cn
    }}
    \date{}
  \maketitle

\bigskip
\spacingset{1.25}
\begin{abstract}
How to achieve the tradeoff between privacy and utility is one of fundamental problems
in private data analysis.
In this paper, we give a rigourous differential privacy analysis of networks in the appearance of covariates via
a generalized $\beta$-model, which has an $n$-dimensional degree parameter $\beta$
and a $p$-dimensional homophily parameter $\gamma$.
Under $(k_n, \epsilon_n)$-edge differential privacy, we use the popular Laplace mechanism to release the network statistics.
The method of moments is used to estimate the unknown model parameters.
We establish the conditions guaranteeing consistency of the differentially private estimators $\widehat{\beta}$ and $\widehat{\gamma}$
as the number of nodes $n$ goes to infinity, which reveal an
interesting tradeoff between a privacy parameter and model parameters.
The consistency is shown by applying a two-stage Newton's method to obtain the upper bound of the error between $(\widehat{\beta},\widehat{\gamma})$ and its true value $(\beta, \gamma)$
in terms of the $\ell_\infty$ distance,
which has a convergence rate of rough order $1/n^{1/2}$ for $\widehat{\beta}$ and $1/n$ for $\widehat{\gamma}$, respectively.
Further, we derive the asymptotic normalities of $\widehat{\beta}$ and $\widehat{\gamma}$, whose asymptotic variances are the same as those of
the non-private estimators under some conditions.
Our paper sheds light on
how to explore asymptotic theory under differential privacy in a principled manner;
these principled methods should be applicable to a class
of network models with covariates beyond the generalized $\beta$-model.
Numerical studies and a real data analysis demonstrate our theoretical findings.

\end{abstract}

\noindent%
{\it Keywords:}  Asymptotic normality; Consistency; Covariate; Differential privacy; Network data
\vfill

\newpage
\spacingset{1.45}

\section{Introduction}

Social network data may contain sensitive information about relationships between individuals (e.g., friendship,
email exchange, sexual interaction) and even individuals themselves (e.g., respondents in sexual partner networks).
Undoubtedly, it will expose individual's privacy if these data are directly released to the public for various research purposes.
Even if individuals are anonymized by removing identifications before being made public, it is still easy to attack by
applying some de-anonymization techniques [e.g., \cite{Narayanan:Shmatikov:2009}].  
A randomized data releasing mechanism that injects random noises to the original data (i.e., input perturbation) or their aggregate statistics
to queries (i.e., output perturbation), provides an alternative to protect data privacy.
To rigorously restrict privacy leakage,
\cite{Dwork:Mcsherry:Nissim:Smith:2006} developed a  privacy notation--\emph{differential privacy} that
requires that the output of a query does not change too much if we add/remove any single individual's record to/from a database in
randomized data releasing mechanisms. Since then, it has been widely accepted as a privacy standard for releasing sensitive data.

Many differentially private algorithms have been proposed to release network data or their aggregate statistics, especially in computer and machine learning literature
[e.g., \cite{Day:Li:Lyu:2016, MACWAN2018786, Nguyen-Imine-2016,Wang-Yan-Jiang-Leng-2022}].
On the other hand, denoising approaches have been developed to improve the estimation of
network statistics
[e.g, \cite{Hay:2009,Karwa:Slakovic:2016,Yan-2021-SS}].
However, differentially private inference in network models is still in its infancy.
This is partly because network data are nonstandard and asymptotic analysis is usually based on only one observed  network.
The increasing dimension of parameters and the appearance of noises poses additional challenge as well [\cite{fienberg2012a}].
Recently, \cite{Karwa:Slakovic:2016} derived consistency and asymptotic normality of the differentially private estimator of the parameter
constructed from the denoised degree sequence in the $\beta$-model, which is an exponential random graph model with
the degree sequence as the sufficient statistic [\cite{Chatterjee:Diaconis:Sly:2011}]. \cite{Yan-2021-SS} derived asymptotic properties of the differentially private estimators in the $p_0$ model for directed networks.
In despite of these recent developments, to the best of our knowledge,
the differentially private analyses of networks in the presence of covariates have not been explored,
in that neither their
releasing methods nor their theoretical properties are well understood.

The covariates of nodes could have important implications on the link formation.
A commonly existing phenomenon in social and econometric network data is that
individuals tend to form connections with those like themselves, which is referred to as \emph{homophily}.
Therefore, it is of interest to see how covariates influence differentially private estimation in network models.
The aim of this paper is to give a rigorously differentially private analysis of networks in the presence of covariates via
a generalized $\beta$-model.
It contains an $n$-dimensional degree parameter $\beta$ characterizing the variation in the node degrees
and a $p$-dimensional regression parameter $\gamma$ of covariates measuring phomophic or heteriphic effects.
A detailed description is given in Section \ref{subsection:model}.
This model had been introduced in  \cite{Graham2017} to model economic networks and a directed version was proposed in  \cite{Yan-Jiang-Fienberg-Leng2018}.
In this model, the degree sequence $d$ and the covariate term $y$  are the sufficient statistics.
Therefore, it is sufficient to treat only  these information as privacy contents in this model.
Under $(k_n, \epsilon_n)$-edge differential privacy, we propose to use a joint Laplace mechanism to release the network statistics $d$ and $y$,
which are added the discrete Laplacian noises and continuous Laplacian noises, respectively.

We construct estimating equations to infer model parameters based on the original maximum likelihood equations,
in which the original network statistics are directly replaced by their noisy outputs.
We develop new approaches to establish asymptotic theory of differentially private estimators.
Owning to noises having zero mean, they are the same as the moment equations. 
The main contributions are as follows.
First, we establish the conditions on a privacy parameter and model parameters
that guarantee consistency of the differentially private estimator, which
control the trade-off between privacy and utility.
A key idea for the proof is that we use a two-stage Newton's method  that first obtains the upper bound of the error
in terms of $\ell_\infty$ norm between $\widehat{\beta}_{\gamma}$ and $\beta$ with a given $\gamma$, and then derives the upper bound of the error between $\widehat{\gamma}$ and $\gamma$ by using a profiled function, where $\widehat{\beta}$ and $\widehat{\gamma}$ are the differentially private estimators of $\beta$ and $\gamma$,
respectively.
As a result, we obtain the convergence rates of $\widehat{\beta}$ and $\widehat{\gamma}$ having respective
orders of $O_p(n^{-1/2})$ and $O_p(n^{-1})$ roughly, both up to a logarithm factor.
Notably, the convergence rate for $\widehat{\beta}$ matches the minimax optimal upper bound $\|\widehat{\beta}- \beta\|_\infty =
O_p((\log p/n)^{1/2})$ for the Lasso estimator in the linear model with
$p = n$-dimensional parameter $\beta$ and the
sample size $n$ in \cite{lounici2008sup-norm}.
Second, we derive the asymptotic normal distributions of $\widehat{\beta}$ and $\widehat{\gamma}$.
This is proved by applying Taylor's expansions to a series of functions constructed from estimating equations and showing that various
remainder terms in the expansions are  asymptotically neglect.
The convergence rate $1/n$ of $\widehat{\gamma} $ makes that the asymptotic distribution
of $\widehat{\beta}$ does not depend on $\widehat{\gamma}$ and therefore has no bias.
The asymptotic distribution of $\widehat{\gamma}$ of  the homophily parameter $\gamma$ contains a bias term in terms of
a weighted sum of covariates.
Finally, we provide simulation studies as well as  a real data analysis to illustrate the theoretical results.

We note that \cite{Karwa:Slakovic:2016} obtained asymptotic results of the edge-differentially private estimator based on the denoising process
while our asymptotic results do not require the denoising process.
Another important difference from \cite{Karwa:Slakovic:2016} is that we characterize how errors of estimators depend on the privacy parameter
and we do not make the assumption that all parameters are bounded above by a constant in asymptotic theories.

For the rest of the paper, we proceed as follows.
In Section \ref{section-model-dp}, we give a necessary background on the generalized $\beta$-model and differential privacy.
In Section \ref{section-releasing-estimation}, we present the estimation.
In Section \ref{section:asymptotic}, we present the consistency and asymptotic normality of the differentially private estimator.
We carry out simulations and illustrate our results by a real data analysis in Section \ref{section:simulation}.
We give the summary and further discussion in Section \ref{section:sd}.
The proofs of the main results are regelated into Section \ref{section:appendix}.
The proofs of supported lemmas are given in the supplementary material.

\section{Model and differential privacy}
\label{section-model-dp}

In this section, we introduce the generalized $\beta$-model with covariates and present the necessary background for
differential privacy.

\subsection{Generalized $\beta$-model}
\label{subsection:model}

Let $G_n$ be an undirected graph on $n\geq 2$ nodes labeled by ``$1, \ldots, n$".
Let $A=(a_{ij})_{n\times n}$ be the adjacency matrix of $G_n$, where
$a_{ij}$ is an indicator denoting whether node $i$ is connected to node $j$.
That is, $a_{ij}=1$ if there is a link between $i$ and $j$; $a_{ij}=0$ otherwise.
We do not consider self-loops here, i.e., $a_{ii}=0$.
Let $d_i= \sum_{j \neq i} a_{ij}$ be the degree of node $i$
and $d=(d_1, \ldots, d_n)^\top$ be the degree sequence of the graph $G_n$.
We also observe a $p$-dimensional vector $z_{ij}$, the covariate information attached to the edge between nodes $i$ and $j$.
The covariate $z_{ij}$ can be formed according to the similarity or dissimilarity between
nodal attributes $z_i$ and $z_j$ for nodes $i$ and $j$. Specifically, $z_{ij}$ can be represented through a symmetric function $g(\cdot, \cdot)$ with $z_i$ and $z_j$ as its arguments. As an example if $z_{i}$ is an indicator of genders (e.g., $1$ for male and $-1$ for female),
then we could use $z_{ij} =z_i\cdot z_j$ to denote the similarity or dissimilarity measurement between $i$ and $j$.

The $\beta$-model with covariates [\cite{Graham2017,Yan-Jiang-Fienberg-Leng2018}] assumes that the edge $a_{ij}$ between $i$ and $j$
 conditional on the unobserved degree effects and observed covariates has the following probability:
\begin{equation}
\label{model}
\P( a_{ij}=a |\beta, z_{ij}) = \frac{ e^{(\beta_i + \beta_j + z_{ij}^\top \gamma)a }}{
1 +  e^{\beta_i + \beta_j + z_{ij}^\top \gamma }},~~ a\in\{0, 1\},
\end{equation}
independent of other edges.
The parameter $\beta_i$ is the intrinsic individual effect that reflects the node heterogeneity to participate in network connection.
The common parameter $\gamma$ is exogenous, measuring the homophily or heterophilic effect.
A larger homophily component $z_{ij}^\top \gamma$ means a larger homophily effect.
We will refer to $\gamma$ as the homophily parameter hereafter although it could represent heterophilic measurement.
Hereafter, we call model \eqref{model} the \emph{covariate-adjusted $\beta$-model}.

The log-likelihood function is
\begin{equation}
\label{eq-likelihood-fun}
\ell(\beta, \gamma) = \sum_{i=1}^n \beta_i d_i + \sum_{i<j} a_{ij}z_{ij}^\top \gamma
- \sum_{i<j} \log \left( 1 + \frac{e^{\beta_i + \beta_j + z_{ij}^\top \gamma }}{1+e^{\beta_i + \beta_j + z_{ij}^\top \gamma }} \right).
\end{equation}
The maximum likelihood equations are
\begin{equation}
\label{eq-MLE-equation}
\begin{array}{rcl}
d_i & = & \sum_{j\neq i} \frac{ e^{\beta_i + \beta_j + z_{ij}^\top \gamma }}{
1 +  e^{\beta_i + \beta_j + z_{ij}^\top \gamma }}, \\
\sum_{i<j} a_{ij}z_{ij}
& = &
\sum_{i<j} \frac{ z_{ij}e^{\beta_i + \beta_j + z_{ij}^\top \gamma }}{
1 +  e^{\beta_i + \beta_j + z_{ij}^\top \gamma }}.
\end{array}
\end{equation}
The R language provides a standard package ``glm" to solve \eqref{eq-MLE-equation}, which implements an iteratively reweighted least squares method for generalized linear models [\cite{McCullagh-Nelder-1989}].

\subsection{Differential privacy}
Given an original database $D$ with records of $n$ persons, we consider
a randomized data releasing mechanism $Q$ that takes $D$ as input and outputs a sanitized database $S=(S_1, \ldots, S_\ell)$ for public use.
As an illustrated example,
the additive noise mechanism returns the answer $f(D)+ z$ to the query $f(D)$, where $z$ is a
random noise. 
Let $\epsilon$  be a positive real number and $\mathcal{S}$ denote the sample space of $Q$. The data releasing mechanism $Q$ is \emph{$\epsilon$-differentially private}
if for any two neighboring databases $D_1$ and $D_2$ that differ on a single element (i.e., the data of one person),
and all measurable subsets $B$ of $\mathcal{S}$ [\cite{Dwork:Mcsherry:Nissim:Smith:2006}],
\[
Q(S\in B |D_1) \leq e^{\epsilon }\times Q(S\in B |D_2).
\]
This says the probability of an output $S$ given the input $D_1$ is less than that given the input $D_2$
multiplied by a privacy factor $e^{\epsilon}$.
The privacy parameter $\epsilon$ is chosen according to the privacy policy, which controls the trade-off
between privacy and utility. It is generally public.
Smaller value of $\epsilon$ means more privacy protection.

Differential privacy requires that the distribution of the output is almost the same whether or not an individual's record appears in the original
database.  We illustrate why it protects privacy with an example. Suppose
a hospital wants to release some statistics on the medical records of their
patients to the public. In response,
a patient may wish to make his record omitted from
the study due to a privacy concern that the published results will
reveal something about him personally.
Differential privacy alleviates this concern because
whether or not the patient participates in the study, the
probability of a possible output is almost the same.
From a theoretical point, any test statistic has nearly no power for testing
whether an individual's data is in the original database or not [\cite{Wasserman:Zhou:2010}]. 

What is being protected in the differential privacy is precisely the difference between two
neighboring databases.
Within network data, depending on the definition of
the graph neighbor, \emph{differential privacy} is divided into \emph{$k$-node differential privacy} [\cite{Hay:2009}] and
\emph{$k$-edge differential privacy} [\cite{Nissim:Raskhodnikova:Smith:2007}].
Two graphs are called neighbors if they differ in exactly $k$
edges, then \emph{differential privacy} is \emph{$k$-edge differential privacy}.
The special case with $k=1$ is generally referred to as edge differential privacy.
Analogously, we can define
\emph{$k$-node differential privacy} by letting graphs be neighbors if one can be obtained from the other by
removing $k$ nodes and its adjacent edges.
Edge differential privacy protects edges not to be detected, whereas node differential privacy protects nodes together with their
adjacent edges, which is a stronger privacy policy.
However, it may be infeasible to design algorithms that are
both node differential privacy and have good utility since it generally needs a large noise [e.g., \cite{Hay:2009}].
Following \cite{Hay:2009} and \cite{Karwa:Slakovic:2016}, we use edge differential privacy here.

Let $\delta(G, G^\prime)$ be the harming distance between two graphs $G$ and $G^\prime$,
i.e., the number of edges on which $G$ and $G^\prime$ differ.
The formal definition of $(\epsilon, k)$-edge differential privacy is as follows.

\begin{definition}[Edge differential privacy]
Let $\epsilon>0$ be a privacy parameter. A randomized mechanism
$Q(\cdot |G)$ is $(k,\epsilon)$-edge differentially private if
\[
\sup_{ G, G^\prime \in \mathcal{G}, \delta(G, G^\prime)=k } \sup_{ S\in \mathcal{S}}  \frac{ Q(S|G) }{ Q(S|G^\prime ) } \le e^\epsilon,
\]
where $\mathcal{G}$ is the set of all graphs of interest on $n$ nodes and
$\mathcal{S}$ is the set of all possible outputs.
\end{definition}

Let $f: \mathcal{G} \rightarrow \mathbb{R}^{\ell}$ be a function. The global sensitivity [\cite{Dwork:Mcsherry:Nissim:Smith:2006}] of the function $f$, denoted $\Delta f$, is defined below.

\begin{definition}
\label{definiton-2}
(Global Sensitivity).
Let $f:\mathcal{G} \to \R^\ell$. The global sensitivity of $f$ is defined as
\[
\Delta(f) = \max_{ \delta( G, G^\prime) =k } \| f(G)- f(G^\prime) \|_1
\]
where $\| \cdot \|_1$ is the $L_1$ norm.
\end{definition}

The global sensitivity measures the largest change for the query function $f$ in terms of the $L_1$-norm between any two neighboring graphs.
The magnitude of noises added in the differentially private algorithm $Q$ crucially depends on the global sensitivity.
If the outputs are the network statistics, then a simple algorithm to guarantee edge differential privacy is the Laplace Mechanism [e.g., \cite{Dwork:Mcsherry:Nissim:Smith:2006}]
that adds the Laplacian noise proportional to the global sensitivity of $f$.

\begin{lemma}\label{lemma:DLM}(Laplace Mechanism).
Suppose that $f:\mathcal{G} \to \R^\ell$ is a output function in $\mathcal{G} $. Let $z_1, \ldots, z_\ell$ be independently and identically distributed Laplace random variables with
density function $e^{-|z|/\lambda}/(2\lambda)$.
Then the Laplace mechanism outputs $f(G)+(z_1, \ldots, z_\ell)$ is $(\epsilon,k)$-edge differentially private, where $\epsilon= \Delta(f)/ \lambda$.
\end{lemma}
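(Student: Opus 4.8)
The plan is to verify the defining inequality of $(k,\epsilon)$-edge differential privacy directly from the explicit product form of the Laplace density, reducing everything to a one-line application of the triangle inequality together with the definition of global sensitivity. First I would observe that, because $z_1,\ldots,z_\ell$ are i.i.d.\ with density $h(z)=e^{-|z|/\lambda}/(2\lambda)$, the output $f(G)+(z_1,\ldots,z_\ell)$ has, for a fixed input graph $G$, the density $q(s\,|\,G)=\prod_{m=1}^\ell h\bigl(s_m-f_m(G)\bigr)=(2\lambda)^{-\ell}\exp\bigl(-\|s-f(G)\|_1/\lambda\bigr)$, where $f_m$ denotes the $m$-th coordinate of $f$ and $s=(s_1,\ldots,s_\ell)$. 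This is where the $L_1$ norm enters naturally: the product of univariate Laplace densities collapses to a single exponential of the $L_1$ distance between $s$ and the shifted center $f(G)$.

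Next I would take any two neighboring graphs $G,G'$ with $\delta(G,G')=k$ and form the likelihood ratio at an arbitrary output point $s$:
\[
\frac{q(s\,|\,G)}{q(s\,|\,G')}=\exp\!\left(\frac{\|s-f(G')\|_1-\|s-f(G)\|_1}{\lambda}\right).
\]
By the triangle inequality, $\|s-f(G')\|_1-\|s-f(G)\|_1\le \|f(G)-f(G')\|_1$, and by Definition \ref{definiton-2} the right-hand side is at most $\Delta(f)$ whenever $\delta(G,G')=k$. Hence the ratio is bounded by $\exp(\Delta(f)/\lambda)=e^{\epsilon}$ with the stated choice $\epsilon=\Delta(f)/\lambda$. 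Taking the supremum over $s\in\mathcal{S}$ and over all neighboring pairs $G,G'$ gives exactly the inequality in the definition of edge differential privacy. For the general measurable-set formulation one integrates the pointwise bound $q(s\,|\,G)\le e^{\epsilon}q(s\,|\,G')$ over any measurable $B\subseteq\mathcal{S}$ to obtain $Q(S\in B\,|\,G)\le e^{\epsilon}Q(S\in B\,|\,G')$; if part of $\mathcal{S}$ is discrete (as for integer-valued degree statistics with discrete Laplace noise) the same argument runs with sums in place of integrals, or with densities with respect to the appropriate base measure.

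I do not expect a genuine obstacle here; the only points requiring a modicum of care are (i) making sure the density is written with respect to a common dominating measure so that the ratio is well defined on all of $\mathcal{S}$ (Lebesgue measure when the noise is continuous Laplace, counting measure when it is the discrete Laplace used for $d$), and (ii) noting that the symmetry $\|s-f(G)\|_1 = \|f(G)-s\|_1$ is what lets the triangle inequality be applied in the direction needed. Everything else is the elementary computation above, so the write-up will be short.
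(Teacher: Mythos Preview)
Your argument is correct and is the standard proof of the Laplace mechanism. The paper itself does not prove this lemma; it is stated as a known result attributed to \cite{Dwork:Mcsherry:Nissim:Smith:2006}, so there is no paper proof to compare against---your density-ratio plus triangle-inequality computation is exactly the canonical derivation.
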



When $f(G)$ is integer, one can use a discrete Laplace random variable as the noise, where it has the probability mass function:
\begin{equation}
\label{equ:discrete}
\P(X=x)= \frac{1-\lambda}{1+\lambda} \lambda^{|x|},~~x \in \{0, \pm 1, \ldots\}, \lambda\in(0,1).
\end{equation}
Lemma \ref{lemma:DLM} still holds if the continuous Laplace distribution is replaced by the discrete version and the privacy parameter is chosen by
$\epsilon=-\Delta(f) \log \lambda $; see \cite{Karwa:Slakovic:2016}.

We introduce a nice property on differential privacy: any function of a differentially
private mechanism is also differentially private, as stated in the lemma below.

\begin{lemma}[\cite{Dwork:Mcsherry:Nissim:Smith:2006}]
\label{lemma:fg}
Let $f$ be an output of an $\epsilon$-differentially private mechanism and $g$ be any function. Then
$g(f(G))$ is also $\epsilon$-differentially private.
\end{lemma}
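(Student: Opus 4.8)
The plan is to prove this by the standard \emph{post-processing} argument: the point is that composing with a deterministic map can only coarsen the output $\sigma$-algebra, so every event concerning $g(f(G))$ pulls back to an event concerning $f(G)$, to which the differential privacy inequality already applies. Nothing about the structure of graphs or the generalized $\beta$-model is used; the lemma holds verbatim for any input domain and any notion of neighboring inputs (in particular for the $(k,\epsilon)$-edge neighbors of Definition~1).

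First I would fix the randomized mechanism $Q(\cdot\mid G)$ producing the output $f(G)$ and assume it is $\epsilon$-differentially private, so that for every pair of neighboring graphs $G,G'$ and every measurable set $A$ in the output space $\mathcal{S}$ of $f$ one has $Q(f(G)\in A\mid G)\le e^{\epsilon}Q(f(G)\in A\mid G')$. Let $g:\mathcal{S}\to\mathcal{T}$ be a (measurable) function and let $B\subseteq\mathcal{T}$ be an arbitrary measurable set. Setting $A:=g^{-1}(B)$, measurability of $g$ guarantees that $A$ is a measurable subset of $\mathcal{S}$, and by definition of the pushforward one has the event identity $\{\,g(f(G))\in B\,\}=\{\,f(G)\in A\,\}$. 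Hence
\[
\P\big(g(f(G))\in B\big)=Q(f(G)\in A\mid G)\le e^{\epsilon}\,Q(f(G)\in A\mid G')=e^{\epsilon}\,\P\big(g(f(G'))\in B\big),
\]
which is exactly the $\epsilon$-differential privacy inequality for $g\circ f$, since $G$, $G'$, and $B$ were arbitrary.

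To cover the case where $g$ is itself randomized (if one wants the statement in full generality), I would represent $g$ through a Markov kernel, or equivalently write $g(x)=h(x,U)$ for a deterministic $h$ and an auxiliary randomness $U$ independent of $Q$. Conditioning on $U=u$, the map $x\mapsto h(x,u)$ is deterministic, so the displayed bound applies for each fixed $u$; integrating over the law of $U$ and using independence together with Fubini--Tonelli to exchange integration with the inequality preserves the factor $e^{\epsilon}$ and yields the claim.

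I do not anticipate a genuine obstacle here: the only points needing minor care are (i) verifying that $g$ is measurable, so that $g^{-1}(B)$ is a legitimate event, and (ii) in the randomized version, justifying the interchange of the expectation over the auxiliary randomness with the privacy bound. Both are routine, and the lemma is in essence a restatement of the fact that differential privacy is immune to data-independent post-processing.
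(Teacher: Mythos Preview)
Your argument is correct and is exactly the standard post-processing proof. The paper does not supply its own proof of this lemma; it simply quotes the result from \cite{Dwork:Mcsherry:Nissim:Smith:2006}, so there is nothing to compare against beyond noting that your write-up matches the classical justification.
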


\section{Releasing network statistics and estimation}
\label{section-releasing-estimation}

\subsection{Releasing}
From the log-likelihood function \eqref{eq-likelihood-fun}, we know that $(d, \sum_{i<j} a_{ij}z_{ij})$ is the sufficient statistic.
Thus, the private information in the covariates-adjusted $\beta$--model is  $(d, \sum_{i<j} a_{ij}z_{ij})$.
We use the continuous Laplace mechanism  in Lemma \ref{lemma:DLM} and its discrete version  to
release the covariate statistic $\sum_{i<j} z_{ij}a_{ij}$ and the degree sequence $d$ under $(\epsilon_n/2, k_n)$-edge differential privacy, respectively.
The joint mechanism  satisfies $(\epsilon_n, k_n)$-edge differential privacy.
The subscript $n$ means that $k_n$ and $\epsilon_n$ are allowed to depend on $n$.
If we add $k_1$ or remove $k_2 = k_n-k_1$ edges in $G_n$ and denote the induced graph as $G_n^\prime$, then
\[
\| d - d^\prime \|_1 = 2k_n,~~ \| \sum_{i<j} ( a_{ij} - a^\prime_{ij})z_{ij} \|_1  \le pk_n z_*
\]
where $d^\prime$ is the degree sequence of $G_n^\prime$, $a^\prime_{ij}$ is the value of
edge $(i,j)$ in $G_n^\prime$ and  $z_*=\max_{ijk} |z_{ijk}|$.
So the global sensitivity is $ 2k_n$  for $d$ and $pk_n z_*$ for $\sum_{i<j} a_{ij}z_{ij})$.
We release the sufficient statistics $d$ and $y:=\sum_{i<j} a_{ij}z_{ij}$ as follows:
\begin{eqnarray}\label{eq-release}
\begin{array}{rcl}
\tilde{d}_i & = & d_i + \xi_i,~~i=1, \ldots, n, \\
\tilde{y}_t & = & \sum_{i<j} z_{ijt}a_{ij} + \eta_{t},~~t=1, \ldots, p,
\end{array}
\end{eqnarray}
where $\xi_i$, $i=1, \ldots, n$, are independently generated from the discrete Laplace distribution
with $\lambda_{n1}= e^{-\epsilon_n /(4k_n)}$,
and $\eta_t$, $t=1, \ldots, p$, are independently generated from the Laplace distribution with $\lambda_{n2}= 2pk_nz_*/\epsilon_n$.

\subsection{Estimation}
\label{section:estimation}

Write $\mu(x)=e^x/(1+e^x)$.
Define
\begin{equation}\label{definition-pi}
\pi_{ij}:=z_{ij}^\top \gamma + \beta_i + \beta_j.
\end{equation}
It is clear that $\mu(\pi_{ij})$ is the expectation of $a_{ij}$. When we emphasize the arguments $\beta$ and $\gamma$ in $\mu(\cdot)$, we write $\mu_{ij}(\beta, \gamma)$ instead of $\mu(\pi_{ij})$.
To estimate model parameters, we directly replace $d$ and $y$ in maximum likelihood equations \eqref{eq-MLE-equation} with their noisy
observed values $\tilde{d}$ and $\tilde{y}$:
\renewcommand{\arraystretch}{1.2}
\begin{equation}\label{eq:moment:dp}
\begin{array}{rcl}
\tilde{d}_i & = & \sum_{j\neq i} \mu_{ij}(\beta, \gamma),~~ i=1, \ldots, n, \\
\tilde{y} & = & \sum_{i=1}^n \sum_{j=1, j<i}^n   z_{ij} \mu_{ij}(\beta, \gamma).
\end{array}
\end{equation}
Because the expectations of the noises are zero,  the above equation are the same as the moment equations.

Let $(\widehat{\beta}, \widehat{\gamma})$ be the solution
to the equations \eqref{eq:moment:dp}.
Since $(\tilde{d}, \tilde{y})$ satisfies $(\epsilon_n, k_n)$-edge differential privacy, $(\widehat{\beta}, \widehat{\gamma})$ is also
$(\epsilon_n, k_n)$-edge differentially private according to Lemma \ref{lemma:fg}.
A two-step iterative algorithm by alternating between solving the first equation in  \eqref{eq:moment:dp} via the fixed point method in \cite{Chatterjee:Diaconis:Sly:2011} for a given $\gamma$
and solving the second equation in \eqref{eq:moment:dp} via the Newton method or the
gradient descent method, can be employed to obtain the solution.

\section{Asymptotic properties}
\label{section:asymptotic}
In this section, we present consistency and asymptotic normality of the differentially private estimator $(\widehat{\beta}, \widehat{\gamma})$.
We first introduce some notations. For a subset $C\subset \R^n$, let $C^0$ and $\overline{C}$ denote the interior and closure of $C$, respectively.
For a vector $x=(x_1, \ldots, x_n)^\top\in \R^n$, denote by $\|x\|$ for a general norm on vectors with the special cases
$\|x\|_\infty = \max_{1\le i\le n} |x_i|$ and $\|x\|_1=\sum_i |x_i|$ for the $\ell_\infty$- and $\ell_1$-norm of $x$ respectively.
Let $B(x, \epsilon)=\{y: \| x-y\|_\infty \le \epsilon\}$ be an $\epsilon$-neighborhood of $x$.
For an $n\times n$ matrix $J=(J_{ij})$, $\|J\|_\infty$ denotes the matrix norm induced by the $\ell_\infty$-norm on vectors in $\R^n$, i.e.,
\[
\|J\|_\infty = \max_{x\neq 0} \frac{ \|Jx\|_\infty }{\|x\|_\infty}
=\max_{1\le i\le n}\sum_{j=1}^n |J_{ij}|,
\]
and $\|J\|$ be a general matrix norm.
Define the matrix maximum norm: $\|J\|_{\max}=\max_{i,j}|J_{ij}|$.
We use the superscript ``*" to denote the true parameter under which the data are generated.
When there is no ambiguity, we omit the superscript ``*".
Define
\begin{equation*}\label{definition-kappa}
z_* := \max_{i,j} \| z_{ij} \|_\infty.
\end{equation*}
The notation $\sum_{j<i}$  is a shorthand for $\sum_{i=1}^n \sum_{j=1, j<i}^n$.

Recall that $\mu(x)=e^x/(1+e^x)$. Write $\mu^\prime$, $\mu^{\prime\prime}$ and $\mu^{\prime\prime\prime}$ as the first, second and third derivative of $\mu(x)$ on $x$, respectively.
A direct calculation gives that
\[
\mu^\prime(x) = \frac{e^x}{ (1+e^x)^2 },~~  \mu^{\prime\prime}(x) = \frac{e^x(1-e^x)}{ (1+e^x)^3 },~~ \mu^{\prime\prime\prime}(x) = \frac{e^x(1-4e^x+e^{2x})}{ (1+e^x)^4 }.
\]
It is easily checked that
\begin{equation}\label{eq-mu-d-upper}
|\mu^\prime(x)| \le \frac{1}{4}, ~~ |\mu^{\prime\prime}(x)| \le \frac{1}{4},~~ |\mu^{\prime\prime\prime}(x)| \le \frac{1}{4}.
\end{equation}

Let $\epsilon_{n1}$ and $\epsilon_{n2}$ be two small positive numbers.
Note that $f(x)=e^x(1+e^x)^{-2}$ is a decreasing function of $x$ when $x\ge 0$ and $f(x)=f(-x)$.
Recall that $\pi_{ij}=\beta_i + \beta_j + z_{ij}^\top \gamma$. Define
\begin{equation}\label{eq-definition-bn}
b_{n} := \sup_{\beta \in B(\beta^*, \epsilon_{n1}),\gamma\in B(\gamma^*, \epsilon_{n2})}  \max_{ i,j} \frac{(1+e^{\pi_{ij}})^2 }{ e^{\pi_{ij}} }
=O( e^{ 2\|\beta^*\|_\infty + \|\gamma^*\|_\infty }).
\end{equation}
In other words, we have
\[
\inf_{\beta \in B(\beta^*, \epsilon_{n1}),\gamma\in B(\gamma^*, \epsilon_{n2})}  \min_{ i,j} \frac{e^{\pi_{ij}} }{ (1+e^{\pi_{ij}})^2 }
\ge \frac{1}{b_n}.
\]
Note that $b_n \ge 1/4$.
When causing no confusion, we will simply write $\mu_{ij}$ stead of $\mu_{ij}(\beta, \gamma)$ for shorthand.
We will use the notations $\mu(\pi_{ij})$ and $\mu_{ij}(\beta, \gamma)$ interchangeably.
Hereafter, we assume that
the dimension $p$ of $z_{ij}$ is fixed.

\subsection{Consistency}
To derive consistency of the differentially private estimator,
let us first define a system of functions based on the estimating equations \eqref{eq:moment:dp}.
Define
\begin{equation}\label{eqn:def:F}
 F_i(\beta, \gamma)=  \sum\limits_{j=1, j\neq i}^n \mu_{ij}(\beta, \gamma) - \tilde{d}_i ,~~i=1, \ldots, n,
\end{equation}
and $F(\beta, \gamma)=(F_1(\beta, \gamma), \ldots, F_n(\beta, \gamma))^\top$.
Further, we define $F_{\gamma,i}(\beta)$ as the value of $F_i(\beta, \gamma)$ for an arbitrarily fixed $\gamma$ 
and $F_\gamma(\beta)=(F_{\gamma,1}(\beta), \ldots, F_{\gamma,n}(\beta))^\top$.
Let $\widehat{\beta}_\gamma$ be a solution to $F_\gamma(\beta)=0$.
Correspondingly, we define two functions for exploring the asymptotic behaviors of the estimator of the homophily parameter:
\begin{eqnarray}
\label{definition-Q}
Q(\beta, \gamma)= \sum_{j<i} z_{ij}  \mu_{ij}(\beta, \gamma) -  \tilde{y}, \\
\label{definition-Qc}
Q_c(\gamma)= \sum_{j<i} z_{ij}  \mu_{ij}(\widehat{\beta}_\gamma, \gamma) - \tilde{y}.
\end{eqnarray}
$Q_c(\gamma)$ could be viewed as the profile function of $Q(\beta, \gamma)$ in which the degree parameter $\beta$ is profiled out.
It is clear that
\begin{equation*}\label{equation:FQ}
F(\widehat{\beta}, \widehat{\gamma})=0,~~F_\gamma(\widehat{\beta}_\gamma)=0,~~Q(\widehat{\beta}, \widehat{\gamma})=0,~~Q_c(\widehat{\gamma})=0.
\end{equation*}

By the compound function derivation law, we have
\begin{eqnarray}\label{equ-derivation-a}
0=\frac{ \partial F_\gamma(\widehat{\beta}_\gamma) }{\partial \gamma^\top}
 = \frac{ \partial F(\widehat{\beta}_\gamma, \gamma) }{\partial \beta^\top}
  \frac{\partial \widehat{\beta}_\gamma }{\gamma^\top} + \frac{\partial F(\widehat{\beta}_\gamma, \gamma)}{\partial \gamma^\top},
  \\
  \label{equ-derivation-b}
\frac{ \partial Q_c(\gamma)}{ \partial \gamma^\top} = \frac{\partial Q(\widehat{\beta}_\gamma, \gamma)}{\partial \beta^\top}
 \frac{\partial \widehat{\beta}_\gamma }{\gamma^\top} + \frac{ \partial Q(\widehat{\beta}_\gamma, \gamma) }{ \partial \gamma^\top}.
\end{eqnarray}
By solving $\partial\widehat{\beta}_\gamma / \partial \gamma^\top$ in \eqref{equ-derivation-a}
and substituting it into \eqref{equ-derivation-b}, we get
the Jacobian matrix
$Q_c^\prime(\gamma)$ $(=\partial Q_c(\gamma)/\partial \gamma)$:
\begin{eqnarray}\label{equation:Qc-derivative}
\frac{ \partial Q_c(\gamma) }{ \partial \gamma^\top }  =
\frac{ \partial Q(\widehat{\beta}_\gamma, \gamma) }{ \partial \gamma^\top}
 - \frac{ \partial Q(\widehat{\beta}_\gamma, \gamma) }{\partial \beta^\top}
 \left[\frac{\partial F(\widehat{\beta}_\gamma,\gamma)}{\partial \beta^\top}  \right]^{-1}
\frac{\partial F(\widehat{\beta}_\gamma,\gamma)}{\partial \gamma^\top},
\end{eqnarray}
where
\[
\frac{ \partial Q(\widehat{\beta}_\gamma, \gamma) }{ \partial \gamma^\top}:=\frac{ \partial Q(\beta, \gamma) }{ \partial \gamma^\top}{\Big |}_{\beta=\widehat{\beta}_\gamma, \gamma=\gamma},
~~ \frac{\partial F(\widehat{\beta}_\gamma,\gamma)}{\partial \beta^\top}:= \frac{\partial F(\beta,\gamma)}{\partial \beta^\top}{\Big |}_{\beta=\widehat{\beta}_\gamma, \gamma=\gamma}.
\]
The asymptotic behavior of $\widehat{\gamma}$ crucially depends on the Jacobian matrix
$Q_c^\prime(\gamma)$.
Since $\widehat{\beta}_\gamma$ does not have a closed form, conditions that are directly imposed on $Q_c^\prime(\gamma)$ are not easily checked.
To derive feasible conditions, we define
\begin{equation}
\label{definition-H}
H(\beta, \gamma) = \frac{ \partial Q(\beta, \gamma) }{ \partial \gamma} - \frac{ \partial Q(\beta, \gamma) }{\partial \beta} \left[ \frac{\partial F(\beta, \gamma)}{\partial \beta} \right]^{-1}
\frac{\partial F(\beta, \gamma)}{\partial \gamma},
\end{equation}
which is a general form of $ \partial Q_c(\gamma) / \partial \gamma$.
Note that
$H(\widehat{\beta}_\gamma,\gamma)$ is the Fisher information matrix of the concentrated likelihood function $\ell_c(\gamma)$,
where the degree parameter $\beta$ is profiled out.
When $\beta\in B(\beta^*, \epsilon_{n1})$ and $b_n^2\kappa_n^2\epsilon_{n1}=o(1)$, we have the approximation:
\begin{equation*}\label{equation-H-appro}
\frac{1}{n^2} H(\beta, \gamma^*) = \frac{1}{n^2} H(\beta^*, \gamma^*) + o(1),
\end{equation*}
whose proof is given in the supplementary material.
We assume that there exists a number $\rho_n$ such that
\[
\sup_{\beta\in B(\beta^*, \epsilon_{n1})} \| H^{-1}(\beta, \gamma^*)\|_\infty \le  \frac{ \rho_n }{ n^2}.
\]
Note that the dimension of $H(\beta, \gamma)$ is fixed and every its entry is a sum $n(n-1)/2$ of terms.
If $n^{-2}H(\beta, \gamma^*)$ converges to a constant matrix, then $\rho_n$ is bounded.
Moreover, if $n^{-2}H(\beta, \gamma^*)$ is positively definite, then
\[
\rho_n = \sqrt{p}\times \sup_{\beta\in B(\beta^*, \epsilon_{n1})} 1/\lambda_{\min}(\beta),
\]
where  $\lambda_{\min}(\beta)$ is the smallest eigenvalue of $n^{-2}H(\beta, \gamma^*)$.

We use a two-stage Newton iterative sequence to show consistency.
In the first stage, we obtain an upper bound of the error between $\widehat{\beta}_{\gamma}$ and $\beta$
in terms of the $\ell_\infty$ norm for a given $\gamma$.
This is done by verifying the well-known Newton-Kantororich conditions, under which the optimal error bounds are established.
Then we derive the upper bound of the error between $\widehat{\gamma}$ and $\gamma$ by using a profiled function $Q_c(\gamma)$ constructed  from
estimating equations.
Now we formally state the consistency result.

\begin{theorem}\label{Theorem:con}
Let $\tilde{\epsilon}_n = 1 + (4k_n/\epsilon_n)(\log n/ n)^{1/2}$ and $\tau_n = b_n^3 + (k_n/\epsilon_n)b_n^3 + z_*/(\log n)^{1/2}$. If
\[
b_n^2 \tilde{\epsilon}_n = o\left( \sqrt{\frac{ n }{\log n}} \right), ~~ b_n^3 \rho_n^2 \tau_n z_*^2 = o\left(\frac{ n}{\log n}\right),
\]
then the differentially private estimator $(\widehat{\beta}, \widehat{\gamma})$ exists with probability approaching one
and is consistent in the sense that
\begin{align*}\label{Newton-convergence-rate}
\| \widehat{\gamma} - \gamma^{*} \|_\infty &=  O_p\left(
 \frac{ \rho_n \tau_n  \log n }{ n }  \right  )=o_p(1), \\
\| \widehat{\beta} - \beta^* \|_\infty &= O_p\left( b_n \tilde{\epsilon}_n \sqrt{\frac{\log n}{n}} \right)=o_p(1).
\end{align*}
\end{theorem}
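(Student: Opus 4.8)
The plan is to implement the two-stage Newton scheme announced in the text, treating the $\beta$-block and the $\gamma$-block separately. For the first stage, fix $\gamma$ near $\gamma^*$ and study the map $F_\gamma(\beta)$. Its Jacobian $\partial F_\gamma/\partial\beta^\top$ is the matrix with diagonal entries $\sum_{j\neq i}\mu'(\pi_{ij})$ and off-diagonal entries $\mu'(\pi_{ij})$; this is a symmetric, diagonally dominant matrix of the type studied in the $\beta$-model literature, and by the definition of $b_n$ its diagonal entries are at least $(n-1)/b_n$. I would invoke the known approximation of such matrices' inverses by the simple ``$S=D^{-1}+$ rank-one''-type formula (so that the induced $\ell_\infty$ operator norm of the inverse is $O(b_n/n)$), plus a Lipschitz bound on the Jacobian coming from $|\mu''|\le 1/4$ in \eqref{eq-mu-d-upper}. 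The initial point is $\beta^*$, and $\|F_\gamma(\beta^*)\|_\infty$ is controlled by: (i) the concentration of $d_i-\E d_i$, which is $O_p(\sqrt{n\log n})$ by Bernstein/Hoeffding for the sum of $n-1$ independent Bernoullis; (ii) the discrete-Laplace noise $\xi_i$, whose magnitude is $O_p((k_n/\epsilon_n)\log n)$ uniformly over $i$; and (iii) the perturbation in $\gamma$, contributing $O(n z_* \epsilon_{n2})$ via $|\mu'|\le 1/4$. Feeding these into the Newton--Kantorovich inequality (the product of the inverse-norm bound, the Lipschitz constant, and $\|F_\gamma(\beta^*)\|_\infty$ being $<1/2$) yields existence of $\widehat\beta_\gamma$ and the bound $\|\widehat\beta_\gamma-\beta^*\|_\infty = O_p(b_n\tilde\epsilon_n\sqrt{\log n/n})$, provided the first hypothesis $b_n^2\tilde\epsilon_n = o(\sqrt{n/\log n})$ holds; this is precisely what makes the Kantorovich product vanish.

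For the second stage, I would work with the profile function $Q_c(\gamma)=\sum_{j<i} z_{ij}\mu_{ij}(\widehat\beta_\gamma,\gamma)-\tilde y$, whose Jacobian is the matrix $H$ of \eqref{definition-H} evaluated at $\widehat\beta_\gamma$, by the chain-rule identity \eqref{equation:Qc-derivative}. The assumed bound $\|H^{-1}(\beta,\gamma^*)\|_\infty\le \rho_n/n^2$ over $B(\beta^*,\epsilon_{n1})$, together with the stated approximation $n^{-2}H(\beta,\gamma^*)=n^{-2}H(\beta^*,\gamma^*)+o(1)$, gives the inverse-norm control needed for a second Newton--Kantorovich argument, now in the fixed dimension $p$. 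The starting point is $\gamma^*$, and I must bound $\|Q_c(\gamma^*)\|_\infty$; the key point is that $Q_c(\gamma^*)=Q(\widehat\beta_{\gamma^*},\gamma^*)$, and a Taylor expansion in $\beta$ around $\beta^*$ writes this as $[Q(\beta^*,\gamma^*)] + [\partial Q/\partial\beta^\top](\widehat\beta_{\gamma^*}-\beta^*) + \text{(quadratic remainder)}$. The first bracket is $\sum_{j<i} z_{ij}(\mu_{ij}^*-\E a_{ij}) - \eta$, i.e. a centered covariate-weighted sum of Bernoullis (size $O_p(\sqrt{n^2\log n})=O_p(n\sqrt{\log n})$ after accounting for $z_*$) plus the continuous-Laplace noise $\eta$ of magnitude $O_p((pk_nz_*/\epsilon_n)\log p)$. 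The linear bracket is the dangerous one: $\partial Q/\partial\beta^\top$ has entries of size $O(n)$ and $\widehat\beta_{\gamma^*}-\beta^*$ has $\ell_\infty$ size from stage one, so naively this is $O(n^2 \cdot b_n\sqrt{\log n/n})$, far too big. The resolution — and this is where the factor structure of $H$ is used — is that the stage-one error $\widehat\beta_{\gamma^*}-\beta^*$ is itself (to leading order) $[\partial F/\partial\beta^\top]^{-1}$ applied to the stage-one residual, so that $\partial Q/\partial\beta^\top \cdot (\widehat\beta_{\gamma^*}-\beta^*)$ largely cancels against the corresponding piece inside the definition of $H$; what survives is a quadratic-in-error term of size $O(n^2 b_n^2 (\log n/n))$ times lower-order constants. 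Collecting, $\|Q_c(\gamma^*)\|_\infty = O_p(n\,\tau_n\sqrt{\log n}\,)$ roughly (the definition of $\tau_n$ is engineered so that $b_n^3$, $(k_n/\epsilon_n)b_n^3$ from the noise, and $z_*/\sqrt{\log n}$ from the martingale term all appear), and the Newton--Kantorovich product $\rho_n/n^2 \times \text{Lip} \times \|Q_c(\gamma^*)\|_\infty$ is $o(1)$ exactly under the second hypothesis $b_n^3\rho_n^2\tau_n z_*^2 = o(n/\log n)$. This yields $\|\widehat\gamma-\gamma^*\|_\infty = O_p(\rho_n\tau_n\log n/n)$, and then substituting this back into the stage-one bound (with $\gamma=\widehat\gamma$, so that $\epsilon_{n2}$ is replaced by the just-derived rate) gives the stated bound on $\|\widehat\beta-\beta^*\|_\infty$.

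The main obstacle, as indicated above, is the second stage: specifically, showing that the linear term $\partial Q/\partial\beta^\top\cdot(\widehat\beta_{\gamma^*}-\beta^*)$ does not blow up. This requires not merely the stage-one $\ell_\infty$ bound but an accurate \emph{expansion} of $\widehat\beta_{\gamma^*}-\beta^*$ as (essentially) the inverse-Jacobian applied to the residual, with an explicit bound on the remainder, so that the cancellation built into the Schur-complement form of $H$ can be exploited. Establishing a uniform-over-$\gamma\in B(\gamma^*,\epsilon_{n2})$ version of all the stage-one estimates (needed because the second Newton iteration moves $\gamma$) is a further technical burden, handled by the same concentration inequalities with a union bound over a fine enough net of the fixed-dimensional $\gamma$-ball. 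I would also need to verify at the end that the final $\widehat\gamma$ indeed lands in $B(\gamma^*,\epsilon_{n2})$ and $\widehat\beta$ in $B(\beta^*,\epsilon_{n1})$ so that all the invoked bounds ($b_n$, $\rho_n$, the $H$-approximation) apply self-consistently; this is a standard bootstrap/continuity argument given that both derived rates are $o(1)$ under the hypotheses.
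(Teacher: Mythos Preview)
Your two-stage Newton--Kantorovich plan is exactly the paper's strategy, and your Stage-1 argument matches Lemma~\ref{lemma-hatbeta-con} essentially line for line. The Stage-2 outline is also architecturally correct: the paper applies Newton--Kantorovich to $Q_c$ with $\aleph=\|[Q_c'(\gamma^*)]^{-1}\|_\infty = O(\rho_n/n^2)$, Lipschitz constant $\lambda = n^2 b_n^3$ (Lemma~\ref{lemma-Q-Lip}), and $\|Q_c(\gamma^*)\|_\infty = O_p(\tau_n n\log n)$, and your identification of the linear term $[\partial Q/\partial\beta^\top](\widehat\beta_{\gamma^*}-\beta^*)$ as the crux is right.

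Where your explanation goes wrong is the mechanism for controlling that linear term. You describe it as a \emph{cancellation} against ``the corresponding piece inside the definition of $H$'', leaving only a quadratic remainder. But $H$ is the Jacobian $Q_c'(\gamma)$; it governs the inverse-norm $\aleph$ in the Kantorovich scheme, not the function value $Q_c(\gamma^*)$ itself, so there is nothing for the linear term to cancel against. The linear term does not disappear---it is in fact the dominant contribution to $\|Q_c(\gamma^*)\|_\infty$ and is what produces the $b_n^3$ and $(k_n/\epsilon_n)b_n^3$ pieces of $\tau_n$. The paper bounds it directly (Lemma~\ref{lemma-order-Q-beta}), using the Stage-1 expansion of Lemma~\ref{lemma-asym-expansion-beta}, $\widehat\beta_{\gamma^*}-\beta^*=V^{-1}F(\beta^*,\gamma^*)+V^{-1}R$, to write the linear term as $V_{\gamma\beta}V^{-1}F(\beta^*,\gamma^*)$ plus a remainder; the gain over the naive $O(n^{3/2})$ estimate comes from the size of the operator $V_{\gamma\beta}V^{-1}$ acting on $\ell_\infty$, not from an algebraic cancellation with $H$. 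You have the right ingredient (the Stage-1 expansion) but the wrong story for why it helps.

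One minor over-complication: the uniformity over $\gamma\in B(\gamma^*,\epsilon_{n2})$ does not require a net. The high-probability event $E_n=\{\|\tilde d-\E d\|_\infty\le\tilde\epsilon_n\sqrt{n\log n}\}$ in the proof of Lemma~\ref{lemma-hatbeta-con} is independent of $\gamma$; on $E_n$ the Kantorovich conditions, and hence the bound on $\|\widehat\beta_\gamma-\beta^*\|_\infty$, hold deterministically for every $\gamma$ in the ball, so the Stage-2 Newton iterates automatically remain in the region where the Stage-1 estimates apply.
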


The scalar factor $\tau_n$ appears due to that the magnitude of $\|Q_c(\gamma^*)\|_\infty$ is $O(\tau_n n\log n)$.
Note that the error bound of the parameter estimator in Theorem 3 of \cite{Karwa:Slakovic:2016} does not depend on the privacy parameter.
Our result here characterizes how the error bound varies on $\epsilon_n$.
In view of the above theorem, we present the consistency conditions and the error bounds under two special cases. The first case is that the parameters and covariates are bounded.
The second is that $k_n/\epsilon_n$ goes to zero.

\begin{corollary}
Assume that $\|\beta^*\|_\infty$ and $\| \gamma^* \|_\infty$ and $z_*$ are bounded above by a constant.
If $k_n/\epsilon_n = o(n/\log n)$, then
\[
\| \widehat{\gamma} - \gamma^{*} \|_\infty =  O_p\left( \frac{k_n}{\epsilon_n} \cdot
 \frac{   \log n }{ n }  \right  ), ~~
\| \widehat{\beta} - \beta^* \|_\infty = O_p\left( \sqrt{\frac{\log n}{n}} \right).
\]
\end{corollary}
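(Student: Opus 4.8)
The plan is to read off the corollary from Theorem~\ref{Theorem:con}: under the boundedness hypotheses the quantities $b_n$, $z_*$ and $\rho_n$ are all of order $O(1)$, so the two consistency conditions of the theorem collapse to the single requirement $k_n/\epsilon_n = o(n/\log n)$, and the asserted rates drop out of the general error bounds once one identifies the dominant term in $\tau_n$ and $\tilde{\epsilon}_n$.

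First I would record the orders of the relevant quantities. Since $\|\beta^*\|_\infty$ and $\|\gamma^*\|_\infty$ are bounded, the defining display~\eqref{eq-definition-bn} gives $b_n = O(e^{2\|\beta^*\|_\infty + \|\gamma^*\|_\infty}) = O(1)$, and $z_* = O(1)$ holds by assumption. For $\rho_n$, observe that each entry of $n^{-2}H(\beta,\gamma^*)$ is an average of at most $n(n-1)/2$ terms, each bounded by a product of the derivatives $|\mu^\prime|, |\mu^{\prime\prime}| \le 1/4$ from \eqref{eq-mu-d-upper} and coordinates of $z_{ij}$, hence is $O(1)$ uniformly over $\beta \in B(\beta^*, \epsilon_{n1})$; together with the non-degeneracy already built into the setup (the limiting matrix of $n^{-2}H(\beta,\gamma^*)$ positive definite, which yields $\rho_n = \sqrt{p}\,\sup_\beta 1/\lambda_{\min}(\beta)$ as in the discussion following \eqref{definition-H}) this gives $\rho_n = O(1)$. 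Consequently $\tilde{\epsilon}_n = 1 + (4k_n/\epsilon_n)(\log n/n)^{1/2}$ and $\tau_n = b_n^3 + (k_n/\epsilon_n)b_n^3 + z_*/(\log n)^{1/2} = O(1 + k_n/\epsilon_n)$.

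Next I would check the two hypotheses of Theorem~\ref{Theorem:con}. The first, $b_n^2\tilde{\epsilon}_n = o(\sqrt{n/\log n})$, reduces to $1 + (k_n/\epsilon_n)(\log n/n)^{1/2} = o(\sqrt{n/\log n})$, i.e.\ $k_n/\epsilon_n = o(n/\log n)$, which is precisely the assumed condition; the second, $b_n^3\rho_n^2\tau_n z_*^2 = o(n/\log n)$, reduces to $\tau_n = O(1 + k_n/\epsilon_n) = o(n/\log n)$, again equivalent to the assumption. Hence Theorem~\ref{Theorem:con} applies, so $(\widehat{\beta},\widehat{\gamma})$ exists with probability tending to one. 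Substituting into the error bounds of Theorem~\ref{Theorem:con}, for the homophily parameter $\|\widehat{\gamma} - \gamma^*\|_\infty = O_p(\rho_n\tau_n\log n/n) = O_p\bigl((1 + k_n/\epsilon_n)\log n/n\bigr)$, which is the stated $O_p\bigl((k_n/\epsilon_n)\log n/n\bigr)$ once the leading factor $k_n/\epsilon_n$ is retained (in the regime of genuine privacy $\epsilon_n \to 0$, so $k_n/\epsilon_n$ is bounded below and the two expressions coincide); and for the degree parameter $\|\widehat{\beta} - \beta^*\|_\infty = O_p(b_n\tilde{\epsilon}_n(\log n/n)^{1/2}) = O_p\bigl((\log n/n)^{1/2} + (k_n/\epsilon_n)\log n/n\bigr)$, whose first summand is the advertised rate $(\log n/n)^{1/2}$ and dominates whenever $k_n/\epsilon_n = O(\sqrt{n/\log n})$. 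The only ingredient that is not pure bookkeeping is $\rho_n = O(1)$, which rests on the profiled Fisher information $n^{-2}H(\beta,\gamma^*)$ being well conditioned under boundedness; I would justify this exactly as in the discussion following \eqref{definition-H}, and so I anticipate no genuine obstacle.
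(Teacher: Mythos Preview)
Your approach is correct and is exactly the intended one: the paper provides no separate proof for this corollary, so it is meant to be read off from Theorem~\ref{Theorem:con} by specializing $b_n,z_*,\rho_n=O(1)$, precisely as you do. Your verification of the two hypotheses and the substitution into the error bounds are both sound.

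The caveats you raise at the end are not defects in your argument but genuine imprecisions in the corollary as stated. The bound $\|\widehat{\gamma}-\gamma^*\|_\infty = O_p((k_n/\epsilon_n)\log n/n)$ only follows from Theorem~\ref{Theorem:con} under the tacit assumption that $k_n/\epsilon_n$ is bounded away from zero (otherwise the honest bound is $O_p((1+k_n/\epsilon_n)\log n/n)$, consistent with Corollary~\ref{coro-2}); and the bound $\|\widehat{\beta}-\beta^*\|_\infty = O_p((\log n/n)^{1/2})$ requires $k_n/\epsilon_n = O((n/\log n)^{1/2})$, which is strictly stronger than the stated hypothesis $k_n/\epsilon_n = o(n/\log n)$. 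You have correctly identified both points; they reflect looseness in the paper's statement rather than any gap in your derivation.
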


\begin{corollary}\label{coro-2}
Assume that $k_n/\epsilon_n\to 0$.
If $b_n=o( (n/\log n)^{1/4})$, $\rho_n b_n^3 = o( n^{1/2}/\log n)$ and $z_*=o( (\log n)^{1/2})$, then
\[
\| \widehat{\gamma} - \gamma^{*} \|_\infty =  O_p\left(
 \frac{ \rho_n b_n^3  \log n }{ n }  \right  ),~~
\| \widehat{\beta} - \beta^* \|_\infty = O_p\left( b_n  \sqrt{\frac{\log n}{n}} \right).
\]
\end{corollary}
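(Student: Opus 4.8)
The plan is to obtain this corollary as a direct specialization of Theorem \ref{Theorem:con}: under the stated assumptions $k_n/\epsilon_n\to 0$, $b_n=o((n/\log n)^{1/4})$, $\rho_n b_n^3=o(n^{1/2}/\log n)$ and $z_*=o((\log n)^{1/2})$, I would verify the two growth conditions of the theorem and simultaneously show that the abstract scalar factors $\tilde{\epsilon}_n$ and $\tau_n$ collapse to $O(1)$ and $O(b_n^3)$, respectively, so that the theorem's rates reduce to the claimed ones.

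First I would simplify the two scalar factors. Since $k_n/\epsilon_n\to 0$ and $(\log n/n)^{1/2}\to 0$, the correction term in $\tilde{\epsilon}_n=1+(4k_n/\epsilon_n)(\log n/n)^{1/2}$ is $o(1)$, so $\tilde{\epsilon}_n\to 1$ and in particular $\tilde{\epsilon}_n=O(1)$. For $\tau_n=b_n^3+(k_n/\epsilon_n)b_n^3+z_*/(\log n)^{1/2}$, the middle term is $o(b_n^3)$ because $k_n/\epsilon_n\to 0$, while the last term is $o(1)$ by the assumption $z_*=o((\log n)^{1/2})$. The key observation that lets me absorb this vanishing term is the elementary bound $b_n\ge 1/4$ recorded after \eqref{eq-definition-bn}, which gives $b_n^3\ge 1/64$; hence any $O(1)$ quantity, and in particular the $o(1)$ covariate term, is $O(b_n^3)$. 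Combining, $\tau_n=b_n^3(1+o(1))+O(b_n^3)=O(b_n^3)$ (indeed $\tau_n\asymp b_n^3$).

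Next I would check the two hypotheses of Theorem \ref{Theorem:con}. For the first, $b_n^2\tilde{\epsilon}_n=O(b_n^2)$ by the bound on $\tilde{\epsilon}_n$, and the assumption $b_n=o((n/\log n)^{1/4})$ gives $b_n^2=o((n/\log n)^{1/2})$, which is exactly $b_n^2\tilde{\epsilon}_n=o(\sqrt{n/\log n})$. For the second, using $\tau_n=O(b_n^3)$ I would bound $b_n^3\rho_n^2\tau_n z_*^2=O(b_n^6\rho_n^2 z_*^2)=O((\rho_n b_n^3)^2 z_*^2)$; then $\rho_n b_n^3=o(n^{1/2}/\log n)$ yields $(\rho_n b_n^3)^2=o(n/(\log n)^2)$, and together with $z_*^2=o(\log n)$ this gives $(\rho_n b_n^3)^2 z_*^2=o(n/(\log n)^2)\cdot o(\log n)=o(n/\log n)$, which is precisely $b_n^3\rho_n^2\tau_n z_*^2=o(n/\log n)$.

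Finally, with both conditions of Theorem \ref{Theorem:con} verified, the theorem guarantees that $(\widehat{\beta},\widehat{\gamma})$ exists with probability approaching one and delivers $\|\widehat{\gamma}-\gamma^*\|_\infty=O_p(\rho_n\tau_n\log n/n)$ and $\|\widehat{\beta}-\beta^*\|_\infty=O_p(b_n\tilde{\epsilon}_n\sqrt{\log n/n})$. Substituting $\tau_n=O(b_n^3)$ and $\tilde{\epsilon}_n=O(1)$ and absorbing constants into the $O_p$ produces the asserted bounds $O_p(\rho_n b_n^3\log n/n)$ and $O_p(b_n\sqrt{\log n/n})$. Since this is a pure specialization, there is no genuine analytic obstacle; the only point requiring care is the reduction of $\tau_n$, where one must use $b_n\ge 1/4$ to fold the vanishing term $z_*/(\log n)^{1/2}$ into the dominant $b_n^3$ rather than tracking it as a separate additive piece.
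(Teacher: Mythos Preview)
Your proposal is correct and matches the paper's intent: the paper states Corollary~\ref{coro-2} without a separate proof, treating it as an immediate specialization of Theorem~\ref{Theorem:con}, and your verification of the two growth conditions together with the reductions $\tilde{\epsilon}_n=O(1)$ and $\tau_n=O(b_n^3)$ (using $b_n\ge 1/4$ to absorb the $z_*/(\log n)^{1/2}$ term) is exactly the argument that makes this reduction go through.
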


\begin{remark}
The condition $k_n/\epsilon_n \to 0$ means that the outputs are nearly the same as the original network statistics.
From Corollary \ref{coro-2}, we can see that the MLE of $\gamma$ has a convergence rate of order $\log n/n$.
\end{remark}

\subsection{Asymptotic normality of $\widehat{\beta}$}
The asymptotic distribution of $\widehat{\beta}$ depends crucially on the inverse of the Fisher information matrix $V$ of $\beta$.
Given $m, M>0$, we say an $n\times n$ matrix $V=(v_{ij})$ belongs to the matrix class $\mathcal{L}_{n}(m, M)$ if
$V$ is a diagonally balanced matrix with positive elements bounded by $m$ and $M$, i.e.,
\begin{equation*}\label{eq1}
\begin{array}{l}
v_{ii}=\sum_{j=1, j\neq i}^{n} v_{ij}, ~~i=1,\ldots, n, \\
0< m\le v_{ij} \le M, ~~ i,j=1,\ldots,n; i\neq j.
\end{array}
\end{equation*}
Clearly,  $F'_\gamma(\beta)$ belongs to the matrix class $\mathcal{L}(b_{n}^{-1}, 1/4)$
when $\beta\in B(\beta^*, \epsilon_{n1})$ and $\gamma\in B(\gamma^*, \epsilon_{n2})$.
We will obtain the asymptotic distribution of the estimator $\widehat{\beta}$ through
obtaining its asymptotic expression, which depends on the inverse of  $F'_\gamma( \beta )$.
However, its inverse does not have a closed form.
\cite{Yan:Zhao:Qin:2015}
proposed to approximate the inverse $V^{-1}$ of $V$ by a diagonal matrix
\begin{equation}\label{definition-S}
S=\mathrm{diag}(1/v_{11}, \ldots, 1/v_{nn}),
\end{equation}
and obtained the upper bound of the approximate error.

Note that $v_{ii}= \sum_{j\neq i} \mathrm{Var}( a_{ij})$.
By the central
limit theorem for the bounded case, as in \citeauthor{Loeve:1977} (\citeyear{Loeve:1977}, p.289), if $b_n=o(n^{1/2})$, then
$v_{ii}^{-1/2} \{d_i - \E(d_i)\}$ converges in distribution to the standard normal distribution.
When considering the asymptotic behaviors of the vector $(d_1, \ldots, d_r)$ with a fixed $r$, one could replace the degrees $d_1, \ldots, d_r$ by the independent random variables
$\tilde{d}_i=d_{i, r+1} + \ldots + d_{in}$, $i=1,\ldots,r$.
Therefore, we have the following proposition.

\begin{proposition}\label{pro:central:poisson}
If $b_n=o(n^{1/2})$,  then as $n\to\infty$: \\
(1)For any fixed $r\ge 1$,  the components of $(d_1 - \E (d_1), \ldots, d_r - \E (d_r))$ are
asymptotically independent and normally distributed with variances $v_{11}, \ldots, v_{rr}$,
respectively. \\
(2)More generally, $\sum_{i=1}^n c_i(d_i-\E(d_i))/\sqrt{v_{ii}}$ is asymptotically normally distributed with mean zero
and variance $\sum_{i=1}^\infty c_i^2$ whenever $c_1, c_2, \ldots$ are fixed constants and the latter sum is finite.
\end{proposition}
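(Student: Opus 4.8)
The plan is to obtain part (1) from the classical central limit theorem for sums of independent uniformly bounded summands, after discarding the few edges shared among $d_1,\dots,d_r$, and then to bootstrap part (2) from part (1) by a truncation argument whose only substantive ingredient is a bound on the tail variance that is uniform in $n$. Throughout I use that $a_{ij}\sim\mathrm{Bernoulli}(\mu(\pi_{ij}))$ are independent, so $\mathrm{Var}(a_{ij})=\mu(\pi_{ij})(1-\mu(\pi_{ij}))=\mu'(\pi_{ij})$, whence $v_{ii}=\sum_{j\neq i}\mu'(\pi_{ij})$ and, by \eqref{eq-definition-bn}, $v_{ii}\ge (n-1)/b_n$.

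\emph{Step 1 (part (1)).} Fix $r\ge 1$ and take $i\le r$. I would split $d_i=\sum_{j\le r,\,j\ne i}a_{ij}+\sum_{j>r}a_{ij}$; the first sum is a sum of at most $r-1$ variables in $[0,1]$, so after centering it is bounded by $r$. Since $b_n=o(n^{1/2})$ forces $v_{ii}\ge (n-1)/b_n\to\infty$, the contribution $(\sum_{j\le r,\,j\ne i}a_{ij}-\E\sum_{j\le r,\,j\ne i}a_{ij})/\sqrt{v_{ii}}$ is $o_p(1)$, and $\widetilde v_{ii}:=\mathrm{Var}(\sum_{j>r}a_{ij})$ satisfies $\widetilde v_{ii}/v_{ii}\to 1$ (the two differ by at most $(r-1)/4$). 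The truncated degrees $\sum_{j>r}a_{ij}$, $i=1,\dots,r$, are functions of pairwise disjoint edge sets, hence mutually independent; applying the central limit theorem for independent uniformly bounded summands (\cite{Loeve:1977}, p.\ 289) to each, and using independence for the joint statement, the vector $\big((\sum_{j>r}a_{ij}-\E\sum_{j>r}a_{ij})/\sqrt{\widetilde v_{ii}}\big)_{i\le r}$ converges in law to $r$ independent standard normals. Slutsky's theorem, together with $\widetilde v_{ii}/v_{ii}\to 1$ and the $o_p(1)$ remainder, then gives (1).

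\emph{Step 2 (part (2)).} Let $T_n=\sum_{i=1}^n c_i(d_i-\E d_i)/\sqrt{v_{ii}}$ and $\sigma^2=\sum_{i=1}^\infty c_i^2<\infty$. For a cutoff $m\ge 1$ write $T_n=T_n^{(m)}+R_n^{(m)}$ with $T_n^{(m)}=\sum_{i=1}^m c_i(d_i-\E d_i)/\sqrt{v_{ii}}$. Part (1) and the continuous mapping theorem give $T_n^{(m)}\Rightarrow N(0,\sigma_m^2)$ as $n\to\infty$, where $\sigma_m^2=\sum_{i=1}^m c_i^2\uparrow\sigma^2$. The key step is to control $\mathrm{Var}(R_n^{(m)})$ uniformly in $n$. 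Using $\mathrm{Var}(d_i)=v_{ii}$ and, for $i\ne j$, $\mathrm{Cov}(d_i,d_j)=\mathrm{Var}(a_{ij})=\mu'(\pi_{ij})$ (the unique shared edge),
\[
\mathrm{Var}(R_n^{(m)})=\sum_{i>m}c_i^2+\sum_{\substack{i,j>m\\ i\ne j}}c_ic_j\,\frac{\mu'(\pi_{ij})}{\sqrt{v_{ii}v_{jj}}}.
\]
Writing $\mu'(\pi_{ij})/\sqrt{v_{ii}v_{jj}}=\sqrt{\mu'(\pi_{ij})/v_{ii}}\cdot\sqrt{\mu'(\pi_{ij})/v_{jj}}$, Cauchy--Schwarz in $(i,j)$ together with the identity $\sum_{j\ne i}\mu'(\pi_{ij})/v_{ii}=1$ bounds the cross term in absolute value by $\sum_{i>m}c_i^2$, so $\mathrm{Var}(R_n^{(m)})\le 2\sum_{i>m}c_i^2$ for every $n$.

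\emph{Step 3 (conclusion) and the main obstacle.} With these pieces, I would conclude via characteristic functions: for $t\in\R$,
\[
\big|\E e^{\mathrm{i}tT_n}-e^{-\sigma^2t^2/2}\big|\le \E\big|e^{\mathrm{i}tR_n^{(m)}}-1\big|+\big|\E e^{\mathrm{i}tT_n^{(m)}}-e^{-\sigma_m^2t^2/2}\big|+\big|e^{-\sigma_m^2t^2/2}-e^{-\sigma^2t^2/2}\big|,
\]
where the first term is $\le|t|(\E(R_n^{(m)})^2)^{1/2}\le |t|(2\sum_{i>m}c_i^2)^{1/2}$ (since $R_n^{(m)}$ is centered, $\E(R_n^{(m)})^2=\mathrm{Var}(R_n^{(m)})$), the second tends to $0$ as $n\to\infty$ for fixed $m$ by Step 2, and the third is free of $n$; letting $n\to\infty$ then $m\to\infty$ yields $T_n\Rightarrow N(0,\sigma^2)$. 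The only place that needs an actual idea is the tail-variance bound in Step 2 — ensuring that the accumulated pairwise covariances among $d_{m+1},\dots,d_n$ do not grow with $n$ even though $b_n$ may diverge — and the identity $\sum_{j\ne i}\mu'(\pi_{ij})/v_{ii}=1$ combined with Cauchy--Schwarz handles it without any growth restriction beyond $v_{ii}>0$; everything else is a textbook triangular-array CLT plus truncation.
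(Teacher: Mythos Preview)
Your proof is correct and follows the same architecture as the paper: for (1) you discard the $\binom{r}{2}$ shared edges to obtain independent truncated degrees and apply the bounded-summand CLT, and for (2) you use the truncation $T_n=T_n^{(m)}+R_n^{(m)}$ together with a uniform bound on $\mathrm{Var}(R_n^{(m)})$ --- exactly the scheme the paper sketches, invoking Theorem~4.2 of \cite{Billingsley:1995} where you write out the characteristic-function triangle inequality by hand.

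The one point where your execution differs is the tail-variance bound. The paper observes that the correlation matrix of the standardized degrees has eigenvalues bounded by $2$: writing it as $I+D^{-1/2}AD^{-1/2}$ with $D=\mathrm{diag}(v_{ii})$ and $A$ the off-diagonal part of $V$, the matrix $D^{-1/2}AD^{-1/2}$ is similar to the stochastic matrix $D^{-1}A$ (row sums $\sum_{j\neq i}\mu'(\pi_{ij})/v_{ii}=1$), so its spectral radius is $1$, whence $c^\top\Sigma c\le 2\|c\|_2^2$. You reach the identical bound $\mathrm{Var}(R_n^{(m)})\le 2\sum_{i>m}c_i^2$ by a direct Cauchy--Schwarz on the off-diagonal sum, factoring $\mu'(\pi_{ij})/\sqrt{v_{ii}v_{jj}}$ and using the same row-sum identity. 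The two arguments are equivalent in strength; yours is slightly more self-contained (no appeal to Perron--Frobenius), while the paper's spectral viewpoint makes transparent that the constant $2$ is exactly $1+\rho(D^{-1}A)$.
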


Part (2) follows from part (1) and the fact that
\[
\lim_{r\to\infty} \limsup_{t\to\infty}
\mathrm{Var}\left( \sum_{k=r+1}^n c_i \frac{ d_i - \E (d_i) }{\sqrt{v_{ii}}}\right)=0
\]
by Theorem 4.2 of \cite{Billingsley:1995}. To prove the above equation, it suffices to show that the eigenvalues of
the covariance matrix of $(d_i - \E (d_i))/v_{ii}^{1/2}$, $i=r+1, \ldots, n$ are bounded by 2 (for all $r<n$).
This is true by the well-known Perron-Frobenius theory:
if $A$ is a symmetric matrix with nonnegative elements,
then its largest eigenvalue is less than the largest value of row sums.

We apply a second order Taylor expansion to $\sum_{j\neq i} \mu_{ij}(\widehat{\beta}, \widehat{\gamma})$ to derive the asymptotic expression of $\widehat{\beta}$.
In the expansion, the first order term is the sum of $V(\widehat{\beta} - \beta)$
and $V_{\gamma\beta}(\widehat{\gamma}-\gamma)$, where $V_{\gamma\beta}=\partial F(\beta,\gamma)/\partial \gamma^\top$.
Since $V^{-1}$ does not have a closed form, we work with $S$ defined at \eqref{definition-S} to approximate it.
By Theorem \ref{Theorem:con}, $\widehat{\gamma}$ has a $n^{-1}$ convergence rate up to a logarithm factor.
This makes that the term $V_{\gamma\beta}(\widehat{\gamma}-\gamma)$ is an remainder term.
The second order term in the expansion is also asymptotically neglect.
Then we represent $\widehat{\theta}-\theta$ as the sum of
$S(d - \E d)$ and a remainder term. The central limit theorem is proved by establishing the asymptotic normality of $S( d - \E d)$ and
showing the remainder is negligible.
We formally state the central limit theorem as follows.

\begin{theorem}\label{Theorem-central-a}
Assume that the conditions in Theorem \ref{Theorem:con} hold.
If $b_n^3 \tilde{\varepsilon}_n^2 + z_* \rho_n \tau_n b_n = o( n^{1/2}/\log n)$,
then for fixed $k$ the vector $( v_{11}^{1/2}(\widehat{\beta}_1 - \beta^*), \ldots, v_{kk}^{1/2}(\widehat{\beta}_k - \beta^*_k))$
converges in distribution to the $k$-dimensional multivariate standard normal distribution.
\end{theorem}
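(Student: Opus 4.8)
The plan is to linearize the defining equations $F(\widehat{\beta},\widehat{\gamma})=0$ about the truth, invert the Jacobian in $\beta$, and reduce the statement to the central limit theorem for the degree sequence in Proposition~\ref{pro:central:poisson}. By Theorem~\ref{Theorem:con}, with probability tending to one the solution $(\widehat{\beta},\widehat{\gamma})$ lies in $B(\beta^*,\epsilon_{n1})\times B(\gamma^*,\epsilon_{n2})$, so a second-order Taylor expansion of $F_i$ is valid there. Using $F_i(\beta^*,\gamma^*)=\E(d_i)-d_i-\xi_i$, $F_i(\widehat{\beta},\widehat{\gamma})=0$, and writing $V=\partial F(\beta^*,\gamma^*)/\partial\beta^\top\in\mathcal{L}_n(b_n^{-1},1/4)$ (so its diagonal is $v_{ii}=\sum_{j\ne i}\mu'(\pi_{ij}^*)$) and $V_{\gamma\beta}=\partial F(\beta^*,\gamma^*)/\partial\gamma^\top$, the expansion reads
\[
d_i-\E(d_i)+\xi_i=[V(\widehat{\beta}-\beta^*)]_i+[V_{\gamma\beta}(\widehat{\gamma}-\gamma^*)]_i+R_i,
\]
where $R_i=\tfrac12\sum_{j\ne i}\mu''(\bar{\pi}_{ij})\{(\widehat{\beta}_i-\beta_i^*)+(\widehat{\beta}_j-\beta_j^*)+z_{ij}^\top(\widehat{\gamma}-\gamma^*)\}^2$ for intermediate points $\bar{\pi}_{ij}$. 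Solving for $\widehat{\beta}-\beta^*$, replacing $V^{-1}$ by the diagonal matrix $S$ of \eqref{definition-S}, and multiplying the $i$-th coordinate by $v_{ii}^{1/2}$ gives
\[
v_{ii}^{1/2}(\widehat{\beta}_i-\beta_i^*)=\frac{d_i-\E(d_i)}{v_{ii}^{1/2}}+\frac{\xi_i}{v_{ii}^{1/2}}+v_{ii}^{1/2}[(V^{-1}-S)(d-\E d+\xi)]_i-v_{ii}^{1/2}[V^{-1}V_{\gamma\beta}(\widehat{\gamma}-\gamma^*)]_i-v_{ii}^{1/2}[V^{-1}R]_i.
\]

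By Proposition~\ref{pro:central:poisson}(1) (the hypotheses force $b_n=o(n^{1/2})$), the vector with entries $(d_i-\E(d_i))/v_{ii}^{1/2}$, $i=1,\dots,k$, converges to the $k$-dimensional standard normal law, so by Slutsky's theorem it remains to show that each of the last four terms above is $o_p(1)$ for $i\le k$. The ingredients for this are: the consistency rates $\|\widehat{\beta}-\beta^*\|_\infty=O_p(b_n\tilde{\epsilon}_n(\log n/n)^{1/2})$ and $\|\widehat{\gamma}-\gamma^*\|_\infty=O_p(\rho_n\tau_n\log n/n)$ from Theorem~\ref{Theorem:con}; the estimates $v_{ii}\asymp n/b_n$ and $\|V^{-1}\|_\infty=O(b_n/n)$ for $V\in\mathcal{L}_n(b_n^{-1},1/4)$; the approximation bound $\|V^{-1}-S\|_{\max}=O(b_n^3/n^2)$ of \cite{Yan:Zhao:Qin:2015}; the concentration bound $\|d-\E d\|_\infty=O_p((n\log n)^{1/2})$; and the independence of the $\xi_i$, with $\mathrm{Var}(\xi_1)$ of order $(k_n/\epsilon_n)^2$ and $\|\xi\|_\infty=O_p((k_n/\epsilon_n)\log n)$. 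Granting these, $\xi_i/v_{ii}^{1/2}$ has variance $O((k_n/\epsilon_n)^2b_n/n)=o(1)$; the $\gamma$-coupling term is at most $v_{ii}^{1/2}\|V^{-1}\|_\infty\max_j|[V_{\gamma\beta}(\widehat{\gamma}-\gamma^*)]_j|=O(\sqrt{b_n/n}\,z_*\rho_n\tau_n\log n)=o(1)$ by the hypothesis $z_*\rho_n\tau_nb_n=o(n^{1/2}/\log n)$; and the second-order term is of order $v_{ii}^{1/2}\|V^{-1}\|_\infty\|R\|_\infty=O_p(b_n^{5/2}\tilde{\epsilon}_n^2\log n/n^{1/2})=o(1)$ by $b_n^3\tilde{\epsilon}_n^2=o(n^{1/2}/\log n)$.

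The main obstacle is the approximation-error term $v_{ii}^{1/2}[(V^{-1}-S)(d-\E d+\xi)]_i$: replacing it by $\|V^{-1}-S\|_\infty\|d-\E d+\xi\|_\infty$ is too lossy, since $\|V^{-1}-S\|_\infty=O(b_n^3/n)$ and $\|d-\E d+\xi\|_\infty=O_p((n\log n)^{1/2}+(k_n/\epsilon_n)\log n)$, so this route leaves a factor of order $b_n^{5/2}(\log n)^{1/2}$ that does not vanish. Instead one exploits that $d-\E d$ has coordinates with variances $v_{jj}$ and pairwise covariances bounded by $1/4$ and that the $\xi_j$ are independent, and bounds the variance
\[
\sum_j (V^{-1}-S)_{ij}^2 v_{jj}+\tfrac14\Big(\sum_j|(V^{-1}-S)_{ij}|\Big)^{\!2}+\mathrm{Var}(\xi_1)\sum_j(V^{-1}-S)_{ij}^2
=O\!\left(\frac{b_n^6}{n^2}\right);
\]
the extra factor $v_{ii}\asymp n/b_n$ then makes the variance of the normalized term $O(b_n^5/n)=o(1)$ under $b_n^3=o(n^{1/2}/\log n)$. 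A secondary, routine, point is to check that $V$, $V_{\gamma\beta}$ and the matrix bounds are applied at the true parameter, which is where the uniform lower bound $\mu'(\pi_{ij})\ge 1/b_n$ on $B(\beta^*,\epsilon_{n1})\times B(\gamma^*,\epsilon_{n2})$ from \eqref{eq-definition-bn} enters, guaranteeing $V\in\mathcal{L}_n(b_n^{-1},1/4)$ and hence the applicability of the bound from \cite{Yan:Zhao:Qin:2015}.
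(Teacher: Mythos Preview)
Your proposal is correct and follows the same strategy as the paper: Taylor-expand $F(\widehat\beta,\widehat\gamma)=0$ about the truth, invert $V$, replace $V^{-1}$ by the diagonal $S$, show all remainders are $o_p(v_{ii}^{-1/2})$, and conclude via Proposition~\ref{pro:central:poisson}. Two small points are worth flagging.

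First, your treatment of the approximation-error term $v_{ii}^{1/2}[(V^{-1}-S)(d-\E d)]_i$ is slightly cruder than the paper's. You bound the variance element-wise using $\|V^{-1}-S\|_{\max}=O(b_n^3/n^2)$ to get $O(b_n^6/n^2)$; the paper instead writes the covariance exactly as $WVW^\top=V^{-1}-2S+SVS$ (with $W=V^{-1}-S$), observes each entry of this matrix is again controlled by Lemma~\ref{pro:inverse:appro}, and obtains the sharper variance $O(b_n^3/n^2)$. Both bounds suffice under the hypothesis $b_n^3\tilde\epsilon_n^2=o(n^{1/2}/\log n)$, since that forces $b_n^6=o(n)$ as well, but the paper's identity is the cleaner device and worth knowing.

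Second, your repeated use of ``$v_{ii}\asymp n/b_n$'' is imprecise: one only has $(n-1)/b_n\le v_{ii}\le (n-1)/4$, so when you multiply a remainder by $v_{ii}^{1/2}$ you should use $v_{ii}=O(n)$, not $O(n/b_n)$. This changes your stated bound on the normalized approximation-error variance from $O(b_n^5/n)$ to $O(b_n^6/n)$, which is still $o(1)$ here. The same remark applies to $\xi_i/v_{ii}^{1/2}$: its variance is $O((k_n/\epsilon_n)^2 b_n/n)$, and verifying this is $o(1)$ under the stated hypotheses requires using $\tau_n\ge (k_n/\epsilon_n)b_n^3$ inside the condition $z_*\rho_n\tau_n b_n=o(n^{1/2}/\log n)$, which you should make explicit.
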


\begin{remark}
The asymptotic variance of $\widehat{\beta}_i$ is $v_{ii}^{-1/2}$ lying between $O( b_n/n)$ and $O(1/n)$, which is the same as in the non-private estimator.
\end{remark}

\subsection{Asymptotic normality of $\widehat{\gamma}$}

Let $T_{ij}$ be an $n$-dimensional column vector with $i$th and $j$th elements ones and other elements zeros.
Define
\begin{equation*}
\begin{array}{c}
V = \frac{ \partial F(\beta^*, \gamma^*) }{ \partial \beta^\top }, ~~
V_{\gamma\beta} = \frac{ \partial Q(\beta^*, \gamma^*) }{ \partial \beta^\top}, \\
s_{ij}(\beta, \gamma) = (a_{ij}-\E a_{ij}) ( z_{ij} - V_{\gamma\beta} V^{-1} T_{ij}).
\end{array}
\end{equation*}
Note that $s_{ij}(\beta, \gamma)$, $i<j$, are independent vectors.
By direct calculations,
\[
(V_{\gamma\beta})_{kj} = \frac{ \partial Q_k(\beta^*,\gamma^*)}{\partial \beta_j } = \sum_{i \neq j} z_{ijk}\mu^\prime(\pi^*_{ij}),
\]
such that
\[
\| V_{\gamma\beta} \|_\infty \le \frac{z_*}{4}(n-1).
\]
By Lemma \ref{lemma-tight-V}, we have
\[
\| V_{\gamma\beta}V^{-1} \|_\infty \le \|V_{\gamma\beta}\|_\infty \| V^{-1} \|_\infty \le \frac{z_*(n-1)}{4} \cdot \frac{b_n(3n-4)}{2 (n-1)(n-2) } < b_n z_*.
\]
Therefore, all $s_{ij}(\beta^*, \gamma^*)$ are bounded.
Let $\bar{Q}=Q(\beta^*,\gamma^*)-\eta$ and $\bar{F}=F(\beta^*,\gamma^*)-\xi$. Note that
\[
\mathrm{Cov} (\sum\nolimits_{i<j } s_{ij}(\beta^*,\gamma^*) ) = \mathrm{Cov}( \bar{Q} - V_{\beta\gamma}^\top V^{-1} \bar{F} )=H(\beta^*,\gamma^*),
\]
where $H(\beta, \gamma)$ is defined at \eqref{definition-H}.
By the central
limit theorem for the bounded case, as in \citeauthor{Loeve:1977} (\citeyear{Loeve:1977}, p.289),
we have the following proposition.

\begin{proposition}\label{pro:th4-b}
For any nonzero fixed vector $c=(c_1, \ldots c_p)^\top$, if $ (c^\top H(\beta^*,\gamma^*) c)$ diverges, then \\
 $(c^\top H(\beta^*,\gamma^*) c )^{-1/2}c^\top\sum_{i< j}
\tilde{s}_{ij} (\beta^*, \gamma^*)$ converges in distribution to the standard normal distribution.
\end{proposition}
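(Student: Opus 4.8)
The statement is precisely the Lindeberg--Feller central limit theorem (the bounded-summand version of \citeauthor{Loeve:1977}, \citeyear{Loeve:1977}, p.~289) applied to a triangular array, so the plan is to recast it in that form and check the one hypothesis that is not immediate. Fix the nonzero vector $c$ and, for $1\le j<i\le n$, set
\[
W_{ij} := c^\top \tilde s_{ij}(\beta^*,\gamma^*) = (a_{ij}-\E a_{ij})\, c^\top\big(z_{ij}-V_{\gamma\beta}V^{-1}T_{ij}\big).
\]
Since the edges $\{a_{ij}\}_{j<i}$ are mutually independent under \eqref{model}, the $W_{ij}$ are independent, and $\E W_{ij}=0$. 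Put $\sigma_n^2:=\sum_{j<i}\mathrm{Var}(W_{ij})$; the target is $\sigma_n^{-1}\sum_{j<i}W_{ij}\Rightarrow N(0,1)$, which is exactly the asserted convergence because $\sum_{j<i}W_{ij}=c^\top\sum_{j<i}\tilde s_{ij}(\beta^*,\gamma^*)$.

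Two facts recorded just before the proposition do most of the work. First, $\mathrm{Cov}\big(\sum_{j<i}\tilde s_{ij}(\beta^*,\gamma^*)\big)=H(\beta^*,\gamma^*)$, so $\sigma_n^2=c^\top H(\beta^*,\gamma^*)c$, which tends to infinity by hypothesis. Second, $|a_{ij}-\E a_{ij}|\le 1$, $\|T_{ij}\|_\infty=1$, $\|z_{ij}\|_\infty\le z_*$, and $\|V_{\gamma\beta}V^{-1}\|_\infty< b_nz_*$, so Hölder's inequality gives the deterministic bound
\[
|W_{ij}|\ \le\ \|c\|_1\big(\|z_{ij}\|_\infty+\|V_{\gamma\beta}V^{-1}\|_\infty\big)\ <\ \|c\|_1\, z_*(1+b_n)\ =:\ c_n .
\]
For a triangular array of independent, mean-zero summands uniformly bounded by $c_n$ with total variance $\sigma_n^2$, the Lindeberg condition holds as soon as $c_n/\sigma_n\to 0$: for large $n$ the events $\{|W_{ij}|>\varepsilon\sigma_n\}$ are empty for every $\varepsilon>0$, so $\sigma_n^{-2}\sum_{j<i}\E[W_{ij}^2\,\mathbf 1\{|W_{ij}|>\varepsilon\sigma_n\}]=0$ eventually. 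The conclusion then follows directly from Lindeberg--Feller.

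The only nontrivial step is therefore the ratio bound $c_n=o(\sigma_n)$. When $\|\beta^*\|_\infty$, $\|\gamma^*\|_\infty$ and $z_*$ are bounded above, this is immediate, since $c_n=O(1)$ while $\sigma_n^2\to\infty$. In general one invokes the standing assumption $\|H^{-1}(\beta^*,\gamma^*)\|_\infty\le\rho_n/n^2$: as $H(\beta^*,\gamma^*)$ is symmetric positive definite, $\lambda_{\max}(H^{-1})=\|H^{-1}\|_2\le\|H^{-1}\|_\infty\le\rho_n/n^2$, hence $\sigma_n^2=c^\top H c\ge (c^\top c)\,n^2/\rho_n$; combining this with $c_n^2\lesssim \|c\|_1^2 b_n^2 z_*^2$ yields $c_n^2/\sigma_n^2\lesssim \rho_n b_n^2 z_*^2/n^2$, which tends to $0$ under the rate conditions inherited from Theorem~\ref{Theorem:con}. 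This comparison --- turning the qualitative hypothesis ``$c^\top Hc\to\infty$'' into a quantitative statement strong enough to dominate the per-edge bound $c_n\asymp b_nz_*$ --- is the main (and essentially the only) obstacle; everything else is a bookkeeping application of the bounded-case CLT.
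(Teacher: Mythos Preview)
Your proof takes exactly the same route as the paper --- the one-line invocation of the bounded-variable central limit theorem in \citeauthor{Loeve:1977} (\citeyear{Loeve:1977}, p.~289) --- but you are considerably more careful: the paper simply records that the $s_{ij}(\beta^*,\gamma^*)$ are ``bounded'' and cites Lo\`eve, whereas you correctly observe that the available bound $c_n\asymp b_n z_*$ may grow with $n$ and supply the missing verification that $c_n/\sigma_n\to 0$ under the standing rate assumptions. Your argument is correct and, if anything, fills a gap the paper's own justification leaves open.
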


Let $N=n(n-1)/2$ and
\[
\bar{H}=  \lim_{n\to\infty} \frac{1}{N} H( \beta^*, \gamma^*).
\]
We assume that the above limit exists.
We briefly describe the idea of proving  asymptotic normality of $\widehat{\gamma}$.
We use a mean-value expansion to derive the explicit expression of $\widehat{\gamma}-\gamma^*$, which mainly contains
a term $Q_c(\gamma^*)$ multiplied by $\bar{H}^{-1}$.
Then by applying a third-order Taylor expansion to $Q_c(\gamma^*)$, we find that the first order term is
asymptotically normal, the second is the asymptotic bias term and the third is a remainder term.
The asymptotic distribution of $\widehat{\gamma}$ is stated below.

\begin{theorem}
\label{theorem-central-b}
Assume that the conditions in Theorem \ref{Theorem:con} hold.
If $b_{n}^4\tilde{\epsilon}_n^3z_* = o( n^{1/2}/(\log n)^{3/2})$,
then as $n$ goes to infinity, $\sqrt{N}c^\top (\widehat{\gamma} - \gamma)$ converges in distribution to
the normal distribution with mean $-c^\top\bar{H}^{-1}B_*$ and variance $c^\top  \bar{H} c$, where
\begin{equation}
\label{defintion-B-22}
B_* =  \frac{1}{\sqrt{N}} \sum_{k=1}^n \frac{ \sum_{j\neq k} z_{kj} \mu^{\prime\prime} (\pi_{kj}^*) }{ \sum_{j\neq k}  \mu^{\prime} (\pi_{kj}^*) }.
\end{equation}
\end{theorem}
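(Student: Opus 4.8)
The plan is to treat $\widehat{\gamma}$ as the root of the profiled estimating equation $Q_c(\gamma)=0$ from \eqref{definition-Qc} and carry out the usual two-step delta-method argument for profiled M-estimators, with essentially all the work concentrated in controlling the nuisance estimator $\widehat{\beta}_{\gamma^*}$ and the injected Laplace noise. First I would apply a mean-value expansion of $Q_c$ on the segment between $\widehat{\gamma}$ and $\gamma^*$: since $Q_c(\widehat{\gamma})=0$ and $Q_c'(\gamma)=H(\widehat{\beta}_\gamma,\gamma)$ by \eqref{equation:Qc-derivative} and \eqref{definition-H}, one gets $0=Q_c(\gamma^*)+\bar{J}(\widehat{\gamma}-\gamma^*)$ with $\bar{J}=\int_0^1 H(\widehat{\beta}_{\gamma_t},\gamma_t)\,dt$ and $\gamma_t=\gamma^*+t(\widehat{\gamma}-\gamma^*)$. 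Using the consistency of $\widehat{\gamma}$ and of $\widehat{\beta}_\gamma$ from Theorem \ref{Theorem:con}, the approximation $n^{-2}H(\beta,\gamma^*)=n^{-2}H(\beta^*,\gamma^*)+o(1)$ on $B(\beta^*,\epsilon_{n1})$ established in the supplement, and a Lipschitz estimate of $H$ in its arguments, I would show $\bar{J}=H(\beta^*,\gamma^*)(1+o_p(1))=N\bar{H}(1+o_p(1))$, so that $\sqrt{N}\,c^\top(\widehat{\gamma}-\gamma^*)=-N^{-1/2}c^\top\bar{H}^{-1}Q_c(\gamma^*)(1+o_p(1))$; it then suffices to expand $N^{-1/2}Q_c(\gamma^*)$.

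Write $\delta:=\widehat{\beta}_{\gamma^*}-\beta^*$, which by the first stage of Theorem \ref{Theorem:con} satisfies $\|\delta\|_\infty=O_p(b_n\tilde{\epsilon}_n(\log n/n)^{1/2})$. A third-order Taylor expansion of $\mu(\pi_{ij}^*+\delta_i+\delta_j)$ inside $Q_c(\gamma^*)=\sum_{j<i}z_{ij}(\mu(\pi_{ij}^*+\delta_i+\delta_j)-a_{ij})-\eta$ produces: a zeroth-order term $\sum_{j<i}z_{ij}(\mu(\pi_{ij}^*)-a_{ij})-\eta$; a linear term $V_{\gamma\beta}\delta$, where $V_{\gamma\beta}=\partial Q(\beta^*,\gamma^*)/\partial\beta^\top$; a quadratic term $\tfrac12\sum_{j<i}z_{ij}\mu''(\pi_{ij}^*)(\delta_i+\delta_j)^2$; and a cubic remainder. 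In parallel, expanding the degree equation $F_{\gamma^*}(\widehat{\beta}_{\gamma^*})=0$ to second order gives $V\delta=(d-\E d)+\xi-\tfrac12 R+\cdots$, with $V=\partial F(\beta^*,\gamma^*)/\partial\beta^\top$ (the Fisher information of $\beta$, which equals $\mathrm{Cov}(d)$) and $R_i=\sum_{j\neq i}\mu''(\pi_{ij}^*)(\delta_i+\delta_j)^2$, so $\delta=V^{-1}((d-\E d)+\xi)-\tfrac12 V^{-1}R+\cdots$. Substituting the leading part of $\delta$ into $V_{\gamma\beta}\delta$ and using $d_i-\E d_i=\sum_{j\neq i}(a_{ij}-\E a_{ij})$, the zeroth- and linear-order terms collapse to $-\sum_{j<i}s_{ij}(\beta^*,\gamma^*)-\eta+V_{\gamma\beta}V^{-1}\xi$, where $s_{ij}$ is as defined before the statement. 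By Proposition \ref{pro:th4-b}, $N^{-1/2}c^\top\sum_{j<i}s_{ij}(\beta^*,\gamma^*)$ is asymptotically normal with $\lim N^{-1}\mathrm{Cov}(\sum_{j<i}s_{ij})=\bar{H}$; the noise residuals are controlled via $\|V_{\gamma\beta}V^{-1}\|_\infty<b_nz_*$ (Lemma \ref{lemma-tight-V}) together with $\|\xi\|_\infty=O_p(k_n\log n/\epsilon_n)$ and $\|\eta\|_\infty=O_p(pk_nz_*/\epsilon_n)$, which make $N^{-1/2}(\eta-V_{\gamma\beta}V^{-1}\xi)=o_p(1)$.

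The surviving non-random part of $Q_c(\gamma^*)$ comes from the quadratic term $\tfrac12\sum_{j<i}z_{ij}\mu''(\pi_{ij}^*)\E[(\delta_i+\delta_j)^2]$ together with the second-order correction $-\tfrac12 V_{\gamma\beta}V^{-1}\E[R]$ of $\delta$ propagated into the linear term. Replacing $V^{-1}$ by the diagonal matrix $S$ of \eqref{definition-S} and using $\mathrm{Cov}((d-\E d)+\xi)=V+\mathrm{diag}(\mathrm{Var}\,\xi_i)$, one gets $\E[(\delta_i+\delta_j)^2]=v_{ii}^{-1}+v_{jj}^{-1}+o(\cdot)$, the $\xi$-contribution and the off-diagonal of $V^{-1}$ being negligible in the stated regime, while $\|V_{\gamma\beta}V^{-1}\E[R]\|_\infty=O(b_n^2z_*)=o(\sqrt{N})$; collecting the surviving pieces yields $\E[Q_c(\gamma^*)]=\sqrt{N}B_*(1+o(1))$ with $B_*$ as in \eqref{defintion-B-22}, hence the bias $-c^\top\bar{H}^{-1}B_*$. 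The centered quadratic fluctuation $\sum_{j<i}z_{ij}\mu''(\pi_{ij}^*)\{(\delta_i+\delta_j)^2-\E(\delta_i+\delta_j)^2\}$ and the cubic remainder, bounded using $|\mu'''|\le 1/4$, the rate of $\|\delta\|_\infty$, $\|V_{\gamma\beta}V^{-1}\|_\infty<b_nz_*$ and $\|V^{-1}\|_\infty=O(b_n/n)$, are $o_p(\sqrt{N})$ exactly under $b_n^4\tilde{\epsilon}_n^3z_*=o(n^{1/2}/(\log n)^{3/2})$. Combining everything, $\sqrt{N}c^\top(\widehat{\gamma}-\gamma^*)=-c^\top\bar{H}^{-1}(N^{-1/2}\sum_{j<i}s_{ij})-c^\top\bar{H}^{-1}B_*+o_p(1)$, and Slutsky's theorem with Proposition \ref{pro:th4-b} gives the claimed normal limit.

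The main obstacle is the bias step: one must expand the nuisance estimator to exactly the right order, identify that the quadratic-in-$\delta$ part of $Q_c(\gamma^*)$ is the sole surviving non-random term of order $\sqrt{N}$ and equals $\sqrt{N}B_*$, control the centered quadratic fluctuations, and simultaneously absorb the Laplace noise---$\xi$ sitting inside $\delta$ and $\eta$, $V_{\gamma\beta}V^{-1}\xi$ entering directly---without letting it either contaminate $B_*$ or overwhelm the $\sqrt{N}$-scale signal. It is precisely this coupling between the noise magnitude (governed by $k_n/\epsilon_n$, reflected in $\tilde{\epsilon}_n$), the parameter-range factor $b_n$, and $z_*$ that forces the hypothesis $b_n^4\tilde{\epsilon}_n^3z_*=o(n^{1/2}/(\log n)^{3/2})$. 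A secondary technical point is establishing $\bar{J}=N\bar{H}(1+o_p(1))$ uniformly, which leans on the $H$-approximation lemma and on the fast $O_p(\rho_n\tau_n\log n/n)$ rate for $\widehat{\gamma}$.
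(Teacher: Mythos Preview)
Your proposal is correct and follows essentially the same route as the paper: a mean-value expansion of $Q_c$ to reduce to $N^{-1/2}Q_c(\gamma^*)$, identification of the Jacobian with $\bar H$ via the $H$-approximation lemma, then a third-order Taylor expansion of $Q(\widehat{\beta}_{\gamma^*},\gamma^*)$ in $\beta$ whose linear part produces $\sum_{j<i}s_{ij}$ (handled by Proposition~\ref{pro:th4-b}), whose quadratic part yields the bias $B_*$ through $\delta_k^2\approx v_{kk}^{-1}$, and whose cubic part is the remainder absorbing the rate hypothesis. The only differences are cosmetic---the paper strips off $\eta$ at the outset by passing to $\bar Q=Q-\eta$, and it states the $S_1,S_2,S_3$ bounds as facts with proofs deferred to the supplement, whereas you sketch the mechanism (notably the $\E[(\delta_i+\delta_j)^2]$ computation) explicitly.
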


\begin{remark}
We now discuss the bias term $B_*$ in \eqref{defintion-B-22}.
We assume that $z_{ij}$ is centered and independently drawn from a $p$-dimensional multivariate distribution
with bounded supports. Then,
by \citeauthor{Hoeffding:1963}'s (\citeyear{Hoeffding:1963}) inequality,
$\sum_{j\neq k} z_{kj} \mu_{kj}^{\prime\prime} (\pi_{ij}^*)=o_p( (n\log n)^{1/2})$
such that $B_*=O_p( (\log n/n)^{1/2})$. In this case, there is no bias in the limiting distribution of $\widehat{\gamma}$.
When $B_*=O(1)$, 
the confidence intervals
and the p-values of hypothesis testing constructed from $\widehat{\gamma}$ cannot achieve the nominal level without bias-correction. 
This is referred to as the well-known incidental parameter problem in econometrics literature [\cite{Neyman:Scott:1948, Graham2017}]. 
As in \cite{Dzemski2019}, we could use a simple analytical bias correction formula:
$\widehat{\gamma}_{bc} = \widehat{\gamma}- N^{-1/2}H^{-1}(\widehat{\beta},\widehat{\gamma}) \hat{B}$,
where $\widehat{B}$ is the plug-in estimate of $B_*$ by replacing
$\beta^*$ and $\gamma^*$ with their estimators $\widehat{\beta}$ and
$\widehat{\gamma}$, respectively.
\end{remark}

\section{Numerical studies}
\label{section:simulation}

\subsection{Simulations}
In this section, we evaluate the performance of the asymptotic theories through finite sizes of networks.
The parameters in the simulations are as follows.
The setting of the parameter ${\beta}^*$ took a linear form.
Specifically, we set $\beta_{i}^* = (i-1)c \log n/(n-1)$ for $i=1,\ldots,n$, where
we chose four different values for $c$, i.e., $c=0.05, 0.15, 0.3, 0.5$.
Since the conditions to guarantee asymptotic properties in theorems depend on the whole quantity $k_n/\epsilon_n$,
we set $k_n$ to be fixed (i.e., $k_n=1$) and let $\epsilon_n$ go to zero as $n$ increases.
We considered two different values for $\epsilon_n$: $\log n/n^{1/6}$ and $\log n/n^{1/4}$.
The edge covariates formed as follows.
For each node $i$, we generated two dichotomous random variables $x_{i1}$ and $x_{i2}$ from $\{1, -1\}$ with unequal probabilities $0.4$ and $0.6$
and equal probabilities, respectively. Then we set $z_{ij}=(x_{i1}*x_{j1}, x_{i2}*x_{j2})^\top$.
For the parameter $\gamma^*$, we let it be $(0.5, -0.5)^\top$.
Thus, the first positive value measures the homophily effect and the second negative value measures heterophilic effect.
We carried out simulations under two different sizes of networks: $n=100$ and $n=200$.
Each simulation was repeated $10,000$ times.

By Theorem \ref{Theorem-central-a}, $\hat{\xi}_{ij} = [\hat{\beta}_i-\hat{\beta}_j-(\beta_i^*-\beta_j^*)]/(1/\hat{v}_{ii}+1/\hat{v}_{jj})^{1/2}$
converges in distribution to the standard normal distributions, where $\hat{v}_{ii}$ is the estimate of $v_{ii}$ by replacing $\beta^*$ with $\hat{\beta}$.
We choose five special pairs $(1,2),(n/2,n/2+1)$, $(n-1,n)$, $(1,n/2)$ and $(1,n)$ for $(i,j)$.
We record the coverage probability of the $95\%$ confidence interval, the length of the confidence interval, and the frequency that the estimates do not exist.
These values for $\widehat{\gamma}$ are also reported.

{\renewcommand{\arraystretch}{1}
\begin{table}[!htp]\centering
\caption{The reported values are the coverage frequency ($\times 100\%$) for $\alpha_i-\alpha_j$ for a pair $(i,j)$ / the length of the confidence interval / the frequency ($\times 100\%$) that the estimate did not exist.}
\label{Table:a}
\vskip5pt
\begin{tabular}{ccllll}
\hline
$n$  & $(i,j)$      & $c=0.05$ & $c=0.15$ & $c=0.3$ & $c=0.5$ \\
\hline
\multicolumn{6}{c}{$\epsilon_n=\log n/n^{1/6}$}\\
\hline
100  & (1, 2)         &$ 93.98 / 1.19 / 0 $&$ 93.80 / 1.21 / 0 $&$ 93.51 / 1.27 / 0.08 $&$ 93.36 / 1.38 / 28.16 $  \\
     & (49,50)        &$ 93.91 / 1.20 / 0 $&$ 93.80 / 1.25 / 0 $&$ 93.57 / 1.43 / 0.08 $&$ 93.26 / 1.84 / 28.16 $   \\
     & (99, 100)      &$ 93.62 / 1.21 / 0 $&$ 93.69 / 1.32 / 0 $&$ 93.51 / 1.73 / 0.08 $&$ 96.60 / 2.89 / 28.16 $   \\
     & (1,50)         &$ 94.15 / 1.20 / 0 $&$ 93.88 / 1.23 / 0 $&$ 93.64 / 1.34 / 0.08 $&$ 92.93 / 1.61 / 28.16 $  \\
     & (1, 100)       &$ 93.91 / 1.20 / 0 $&$ 94.29 / 1.26 / 0 $&$ 93.53 / 1.50 / 0.08 $&$ 95.50 / 2.24 / 28.16 $ \\

200  & (1, 2)     &$ 94.37 / 0.84 / 0 $&$ 94.57 / 0.85 / 0 $&$ 94.44 / 0.90 / 0.01 $&$ 94.23 / 1.00 / 16.65 $    \\
     & (99,100)   &$ 94.19 / 0.84 / 0 $&$ 94.56 / 0.89 / 0 $&$ 94.08 / 1.05 / 0.01 $&$ 93.88 / 1.43 / 16.65 $    \\
     & (199,200)  &$ 94.80 / 0.85 / 0 $&$ 94.11 / 0.95 / 0 $&$ 94.76 / 1.34 / 0.01 $&$ 96.05 / 2.46 / 16.65 $    \\
     & (1,100)    &$ 94.27 / 0.84 / 0 $&$ 94.38 / 0.87 / 0 $&$ 94.51 / 0.98 / 0.01 $&$ 93.57 / 1.23 / 16.65 $    \\
     & (1,200)    &$ 94.34 / 0.84 / 0 $&$ 94.60 / 0.90 / 0 $&$ 94.69 / 1.14 / 0.01 $&$ 95.76 / 1.81 / 16.65 $   \\

\hline
\multicolumn{6}{c}{$\epsilon_n=\log n/n^{1/4}$}\\
\hline
100  & (1, 2)         &$ 92.61 / 1.20 / 0 $&$ 92.74 / 1.21 / 0 $&$ 91.88 / 1.27 / 0.40 $&$ 91.44 / 1.39 / 42.79 $   \\
     & (49,50)        &$ 92.48 / 1.20 / 0 $&$ 92.45 / 1.25 / 0 $&$ 91.53 / 1.43 / 0.40 $&$ 90.04 / 1.85 / 42.79 $    \\
     & (99, 100)      &$ 92.69 / 1.21 / 0 $&$ 92.40 / 1.32 / 0 $&$ 90.98 / 1.73 / 0.40 $&$ 95.30 / 2.89 / 42.79 $   \\
     & (1,50)         &$ 92.66 / 1.20 / 0 $&$ 92.67 / 1.23 / 0 $&$ 92.03 / 1.35 / 0.40 $&$ 90.81 / 1.61 / 42.79 $    \\
     & (1, 100)       &$ 92.52 / 1.20 / 0 $&$ 92.77 / 1.26 / 0 $&$ 91.19 / 1.51 / 0.40 $&$ 93.60 / 2.17 / 42.79 $   \\
200  & (1, 2)         &$ 93.74 / 0.84 / 0 $&$ 93.87 / 0.85 / 0 $&$ 93.59 / 0.90 / 0.04 $&$ 93.15 / 1.00 / 33.72 $    \\
     & (99,100)       &$ 93.47 / 0.84 / 0 $&$ 93.83 / 0.89 / 0 $&$ 93.18 / 1.05 / 0.04 $&$ 92.14 / 1.43 / 33.72 $   \\
     & (199,200)      &$ 93.95 / 0.85 / 0 $&$ 93.26 / 0.95 / 0 $&$ 92.98 / 1.34 / 0.04 $&$ 93.84 / 2.47 / 33.72 $    \\
     & (1,100)        &$ 93.76 / 0.84 / 0 $&$ 93.68 / 0.87 / 0 $&$ 93.57 / 0.98 / 0.04 $&$ 92.11 / 1.23 / 33.72 $     \\
     & (1,200)        &$ 93.55 / 0.84 / 0 $&$ 93.92 / 0.90 / 0 $&$ 93.09 / 1.13 / 0.04 $&$ 94.16 / 1.81 / 33.72 $    \\
\hline
\end{tabular}
\end{table}
}

Table \ref{Table:a} reports the simulation results for $\beta_i^* - \beta_j^*$.
The reported frequencies and lengths were conditional on the event that the estimator exists. We found that empirical coverage frequencies are very close to the nominal $95\%$ level when $c\le 0.3$ and $\epsilon_n=\log n/n^{1/6}$, and they are a little less than the nominal level when
$\epsilon_n=\log n/n^{1/4}$ and $n=100$.
As expected, the length of the confidence interval increases with $c$.
Conversely, it decreases as $n$ increases.
When $c=0.5$, the estimator failed to exist with a positive frequencies over $20\%$.
In other cases, estimates existed almost in every simulation.

Table \ref{Table:gamma} reports the median of $\widehat{\gamma}$ as well as those of the bias corrected estimator
$\widehat{\gamma}_{bc} = \widehat{\gamma}- \bar{H}^{-1} \hat{B}$.
As we can see, the bias is small and the empirical coverage frequencies for
the estimators $\widehat{\gamma}_{bc}$ are more closer to the target level $95\%$ than those values for bias-uncorrected $\widehat{\gamma}$ when $c\le 0.15$.
When $c=0.3$ and $n=100$, they are a little lower than the target level.
On the other hand, when $n$ is fixed, the length of confidence interval of $\widehat{\gamma}$ increases as $c$ becomes larger.

{\renewcommand{\arraystretch}{1}
\begin{table}[!htbp]\centering
\caption{
The reported values are the coverage frequency ($\times 100\%$) for $\gamma_i$ for $i$ with bias-correction (uncorrected)/ bias ($\times 100$) / length of confidence interval ($\times 10$)
 /the frequency ($\times 100\%$) that the MLE did not exist ($\bs{\gamma}^*=(0.5, -0.5)^\top$).
}
\label{Table:gamma}
\begin{tabular}{lllll}
\hline
$n$     &   $c$    & $\gamma$        & $\epsilon_n=\log n/n^{1/6}$    & $\epsilon_n=\log n/n^{1/4}$   \\
\hline
$100$   & $0.05$   & $\gamma_1$      &$ 95.14 ( 93.69 ) / 1.03 / 0.13 / 0 $    &$ 94.84 ( 92.37 ) / 1.02 / 0.13 / 0 $ \\
        &          & $\gamma_2$      &$ 95.31 ( 93.62 ) / 1.01 / 0.12 / 0 $    &$ 95.01 ( 92.62 ) / 1.00 / 0.12 / 0 $  \\

        & $0.15$   & $\gamma_1$      &$ 94.65 ( 93.40 ) / 0.87 / 0.13 / 0 $ &$ 94.70 ( 93.18 ) / 0.87 / 0.13 / 0 $ \\
        &          & $\gamma_2$      &$ 94.71 ( 92.96 ) / 0.80 / 0.13 / 0 $ &$ 94.61 ( 92.57 ) / 0.80 / 0.13 / 0 $ \\

        & $0.3$    & $\gamma_1$      &$ 94.10 ( 93.74 ) / 0.37 / 0.15 / 0.08 $&$ 92.96 ( 92.17 ) / 0.37 / 0.15 / 0.40 $ \\
        &          & $\gamma_2$      &$ 93.87 ( 93.47 ) / 0.20 / 0.15 / 0.08 $&$ 93.01 ( 92.52 ) / 0.20 / 0.15 / 0.40 $ \\

        & $0.5 $   & $\gamma_1$      &$ 91.08 ( 92.72 ) / 0.75 / 0.20 / 28.16 $ &$ 89.70 ( 91.94 ) / 0.76 / 0.20 / 42.79 $  \\
        &          & $\gamma_2$      &$ 90.41 ( 93.19 ) / 1.08 / 0.19 / 28.16 $ &$ 89.06 ( 92.50 ) / 1.08 / 0.19 / 42.79 $  \\
\hline
$200$   & $0.05$   & $\gamma_1$      &$ 95.23 ( 93.72 ) / 0.51 / 0.06 / 0 $&$ 95.11 ( 93.50 ) / 0.51 / 0.06 / 0 $ \\
        &          & $\gamma_2$      &$ 95.31 ( 93.64 ) / 0.50 / 0.06 / 0 $&$ 95.29 ( 93.46 ) / 0.50 / 0.06 / 0 $  \\

        & $0.15$   & $\gamma_1$      &$ 95.38 ( 93.92 ) / 0.41 / 0.07 / 0 $&$ 95.14 ( 93.70 ) / 0.41 / 0.07 / 0 $  \\
        &          & $\gamma_2$      &$ 94.90 ( 93.11 ) / 0.37 / 0.06 / 0 $&$ 94.72 ( 92.85 ) / 0.37 / 0.06 / 0 $  \\

        & $0.3$    & $\gamma_1$      &$ 95.01 ( 94.80 ) / 0.08 / 0.08 / 0.01 $&$ 94.60 ( 94.30 ) / 0.08 / 0.08 / 0.04 $  \\
        &          & $\gamma_2$      &$ 94.45 ( 94.52 ) / 0.02 / 0.08 / 0.01 $&$ 94.10 ( 94.22 ) / 0.02 / 0.08 / 0.04 $  \\

        & $0.5 $   & $\gamma_1$      &$ 91.40 ( 93.64 ) / 0.64 / 0.11 / 16.65 $&$ 90.43 ( 93.00 ) / 0.64 / 0.11 / 33.72 $  \\
        &          & $\gamma_2$      &$ 90.10 ( 93.70 ) / 0.86 / 0.10 / 16.65 $&$ 89.56 ( 93.29 ) / 0.86 / 0.10 / 33.72 $  \\
\hline
\end{tabular}
\end{table}
}

\subsection{A real data example}

We use the Enron email dataset as an example analysis [\cite{Cohen2004}], available from \url{https://www.cs.cmu.edu/~enron/}.  
This dataset was released by William Cohen at Carnegie Mellon University and has been widely studied.
The Enron email data was originally acquired and made public by the Federal Energy Regulatory Commission during its investigation into fraudulent accounting practices.
Some of the emails have been deleted upon requests from affected employees.
However, the raw data is messy and needs to be cleaned before any analysis is conducted.
\cite{Zhou2007} applied data cleaning strategies to compile the Enron email dataset.
We use their cleaned data for the subsequent analysis.
The resulting data comprises $21,635$ messages sent between $156$ employees
with their covariate information. Thus, the corresponding
graph has multiple edges. We treat it as a simple undirected graph for our analysis, where
each edge denotes that there are at least one message between the corresponding two nodes.
We remove the isolated nodes ``32" and ``37" with zero degrees, where the estimators of the corresponding degree parameters do not exist.
This leaves a connected network with $154$ nodes and $1843$ edges. The minimum, $1/4$ quantile, median, $3/4$ quantile and maximum values of $d$ are $1$, $15$, $21$, $30$ and $88$, respectively.

\begin{figure}[!htp]
  \centering
  \subfigure{
  \centering
    \includegraphics[width=0.45\linewidth]{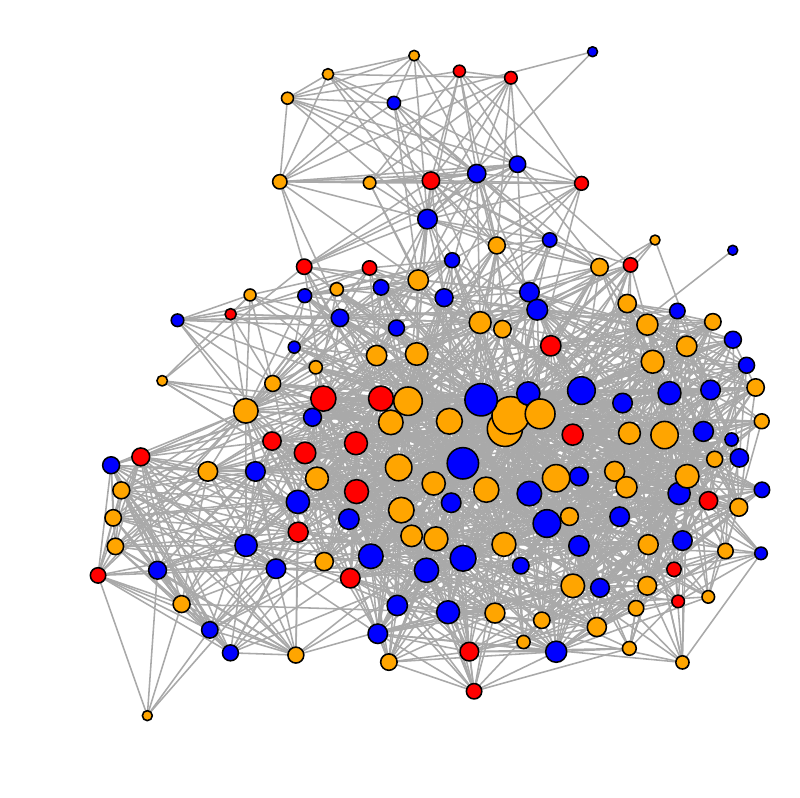}}
  \subfigure{
    \centering
    \includegraphics[width=0.45\linewidth]{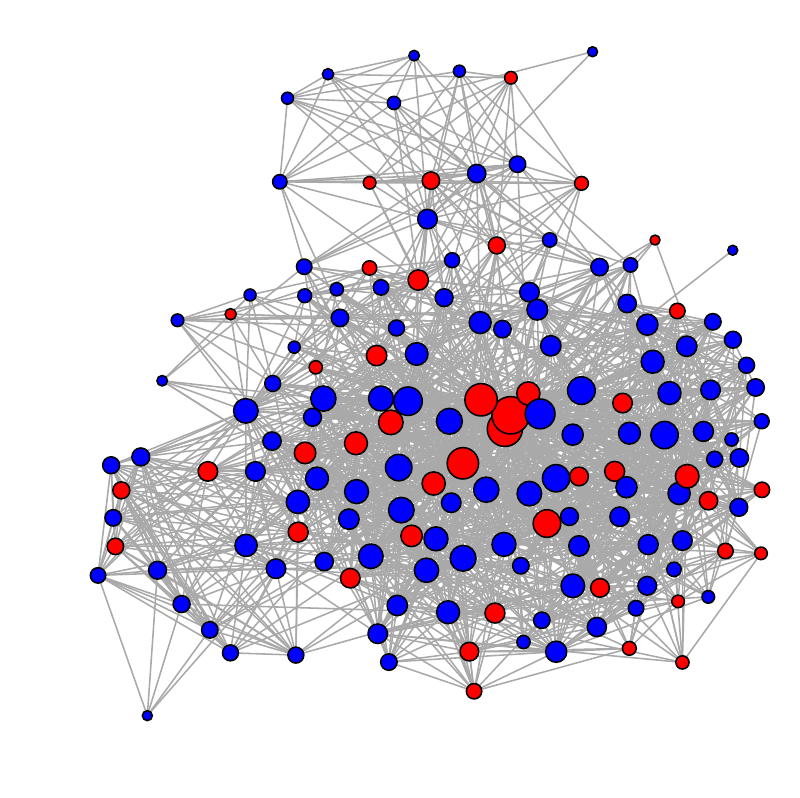}}
   \caption{Visualization of Enran email network among $154$ employees.
The vertex sizes are proportional to nodal degrees. The positions of the vertices are the same in two graphs.
For nodes with degrees less than $3$, we set their sizes the same (as a node with degrees 3).
In the left graph, the colors indicate different departments (red for legal and blue for trading and orange for other), while in the right graph, the colors represent different genders (blue for male and red for female).
}
\label{figure-data} 
\end{figure}

Each employee has three categorical variables: departments of these employees (Trading, Legal, Other),
the genders (Male, Female) and seniorities (Senior, Junior). 
We plot the network with individual departments and genders in Figure \ref{figure-data}.
We can see that the degrees exhibit a great variation across nodes and
it is not easy to judge homophic or heteriphic effects that require quantitative analysis.
The $3$-dimensional covariate vector $z_{ij}$ of edge $(i,j)$ is formed by using a homophilic matching function between these three covariates of two employees $i$ and $j$, i.e.,
if $x_{ik}$ and $x_{jk}$ are equal, then $z_{ijk}=1$; otherwise $z_{ijk}=-1$.

We evaluate how close the estimator $(\hat{\alpha}, \hat{\beta})$ is to the original
MLE $(\tilde{\alpha}, \tilde{\beta})$ fitted in the generalized $\beta$--model.
We chose two $\epsilon_n$ ($\log n/n^{1/6}$, $\log n/n^{1/4}$) and let $k_n=1$ as in simulations,
and repeated to release $d$ and $y$ using according to \eqref{eq-release}
$1,000$ times for each $\epsilon_n$. Then we computed the median private estimate and the upper ($97.5^{th}$) and the lower ($2.5^{th}$) quantiles.  The private estimate existed for each output.
The frequencies that the private estimate fails to exist are $33.7\%$ and $45.2\%$ for $\epsilon=\log n/n^{1/6}$ and $\epsilon=\log n/n^{1/4}$, respectively.
The results for the estimates $\widehat{\gamma}$ and $\widehat{\beta}$ are shown in Table \ref{Table-eran} and  Figure \ref{figure-enran}.
From Table \ref{Table-eran}, we can see that the median value of $\widehat{\gamma}$ is the same as the MLE and the MLE lies in the $95\%$ confidence interval.
The similar phenomenon can also be observed in Figure \ref{figure-enran}.
From this figure, we can see that the length of confidence interval of private estimates under $\epsilon_n= \log n/n^{1/6}$ are shorter than those under $\epsilon_n=\log n/n^{1/4}$.

{\renewcommand{\arraystretch}{1}
\begin{table}[h]\centering
\caption{The nonprivate MLE $\widehat{\gamma}_i^0$ of $\gamma_i$,
the median of private estimates of $\gamma_i$, the length of confidence interval for Enron email data.}
\label{Table-eran}
\vskip5pt
\begin{tabular}{lccl}
\hline
Covariate  &  $\hat{\gamma}_i^0$ & $\hat{\gamma}_{ i}$ & $95\%$ confidence interval  \\
\hline
&\multicolumn{3}{c}{$\epsilon_n= \log n/n^{1/6}$} \\
 \cline{2-4}
Department      &$ -0.016 $&$ -0.016 $&$ [ -0.083 , 0.033 ]$\\
Gender          &$ 0.063 $&$ 0.063 $&$ [ 0.005, 0.136 ]$\\
Seniority       &$ 0.032 $&$ 0.032 $&$ [ -0.022 , 0.085 ]$\\
\hline
&\multicolumn{3}{c}{$\epsilon_n= \log n/n^{1/4}$} \\
 \cline{2-4}
Department      &$ -0.016 $&$ -0.016 $&$ [ -0.083 , 0.033 ]$\\
Gender          &$ 0.063 $&$ 0.062 $&$ [ 0.004, 0.135 ]$\\
Seniority       &$ 0.032 $&$ 0.032 $&$ [ -0.021, 0.085]$\\
\hline
\end{tabular}
\end{table}
{\renewcommand{\arraystretch}{1}

\begin{figure}[!htb]
\centering
\includegraphics[ height=2.5in, width=5in, angle=0]{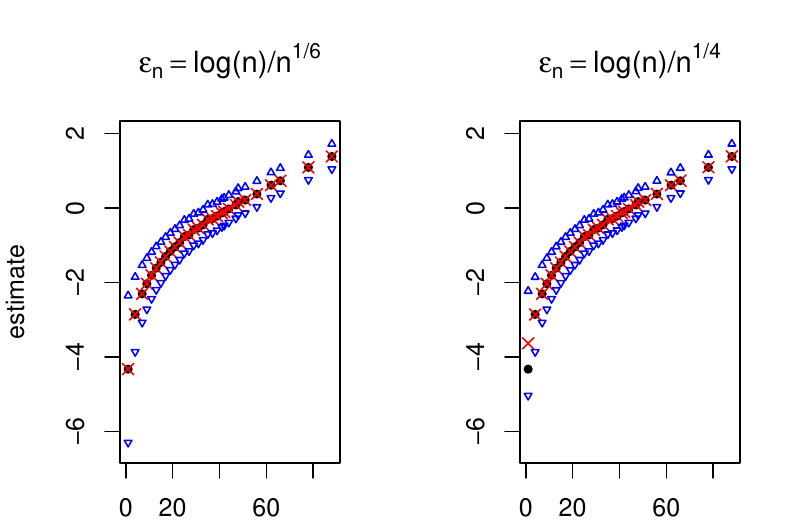}
\caption{The private estimate $\hat{\beta}$ in red color with the MLE in black color for the Enran email network.
The blue color indicates the upper and lower values of confidence interval.
}
\label{figure-enran}
\end{figure}

\section{Summary and discussion}
\label{section:sd}

We have present the $(k_n, \epsilon_n)$-edge-differentially private estimation for inferring the degree parameter
and homophily parameter in the generalized $\beta$--model.
We establish consistency of the estimator under several conditions and also derive its asymptotic normal distribution.
It is worth noting that the conditions imposed on $b_n$
imply the network density going to zero with a very slow rate.
When networks are very sparse, adding noises easily produces the outputs of negative degrees, which will lead to the
non-existence of the maximum likelihood estimator in the
covariate-adjusted $\beta$-model.
In addition, the conditions in Theorems  \ref{Theorem-central-a} and \ref{theorem-central-b} seem stronger than those needed for consistency.
Note that the asymptotic behavior of the estimator depends not only on $b_{n}$, but also on the configuration of the parameters.
It is of interest to see whether conditions for guaranteeing theoretical properties could be relaxed.

We use a generalized $\beta$--model to give a rigorous differential privacy analysis of networks with covariates.
It is notable that the assumption of the logistic distribution of an edge is not essential in our strategies for proofs.
Our these principled methods should be applicable to a class of network models beyond the covariate-adjusted $\beta$-model [\cite{wang-zhang-yan-2023}].
For instance, the developed two-stage Newton method to prove the consistency of the differentially private estimator
still works if the logistic distribution is replaced with the probit distribution.
Further, the edge independence assumption is not directly used in this method, which only plays a role to
derive the upper bound of $d$ and $y$. There are many tail probability inequalities in dependent random variables that could be applied
to network statistics with edge dependence situations.
We hope that the methods developed here can be further to be applied to edge-dependence network models.

\section{Appendix}
\label{section:appendix}
\subsection{Preliminaries}

In this section, we present three results that will be used in the proofs.
The first is on the approximation error of using $S$ to approximate the inverse of $V$ belonging to the matrix class
$\mathcal{L}_n(m, M)$,
where $V=(v_{ij})_{n\times n}$ and $S=\mathrm{diag}(1/v_{11}, \ldots, 1/v_{nn})$.
\cite{Yan:Zhao:Qin:2015} obtained the upper bound of the approximation error,
which has an order $n^{-2}$. The second is  a tight bound of $\| V^{-1} \|_\infty$ in \cite{hillar2012inverses}.
These two results are stated below as lemmas.

\begin{lemma}[\cite{Yan:Zhao:Qin:2015}] \label{pro:inverse:appro}  
If $V \in \mathcal{L}_n(m, M)$, then the following holds: 
\begin{equation*}\label{O-upperbound}
\|V^{-1} - S \|_{\max} 
 \le  \frac{1}{(n-1)^2} \left(\frac{ M }{2m^2}+\frac{nM^2}{2(n-2)m^3}+\frac{3n-2}{2nm}\right)=O(\frac{M^2}{n^2m^3}),
\end{equation*}
where $\|A \|_{\max}=\max_{i,j} |A_{ij}| $ for a general matrix $A$.
\end{lemma}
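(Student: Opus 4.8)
The plan is to reduce the whole estimate to one algebraic identity relating $V^{-1}$ and $S$, and then bound the resulting expression entrywise using only the two-sided bounds $m\le v_{ij}\le M$ (so that $(n-1)m\le v_{ii}\le (n-1)M$). Write $V_0:=V-S^{-1}$ for the hollow off-diagonal part of $V$, where $S^{-1}=\mathrm{diag}(v_{11},\ldots,v_{nn})$, so $0\le (V_0)_{ij}\le M$ for $i\ne j$ and $\sum_{k\ne i}(V_0)_{ik}=v_{ii}$. Since $I-SV=-SV_0$, right-multiplying by $V^{-1}$ gives the exact identity
\[
V^{-1}-S=-SV_0V^{-1},\qquad\text{equivalently}\qquad (V^{-1}-S)_{ij}=-\frac{1}{v_{ii}}\sum_{k\ne i}v_{ik}(V^{-1})_{kj}.
\]
Hence $|(V^{-1}-S)_{ij}|\le (M/v_{ii})\sum_{k}|(V^{-1})_{kj}|\le (M/v_{ii})\,\|V^{-1}\|_\infty$, using symmetry of $V^{-1}$ for the last step, and the whole problem collapses to a good enough bound on $\|V^{-1}\|_\infty$, that is, on the absolute row sums of $V^{-1}$.

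For that bound one cannot run a Neumann series in $SV_0$, because $V$ sits exactly on the boundary of strict diagonal dominance ($\sum_{k\ne i}v_{ik}=v_{ii}$, so $\|SV_0\|_\infty=1$); this is the genuinely delicate point, and the reason results of \cite{hillar2012inverses} and \cite{Yan:Zhao:Qin:2015} are invoked. I would get at $V^{-1}$ through the positive-definiteness of $V$: the identity $x^\top Vx=\sum_{i<j}v_{ij}(x_i+x_j)^2\ge m\big[(n-2)\|x\|_2^2+\big(\sum_i x_i\big)^2\big]$ gives $\lambda_{\min}(V)\ge(n-2)m$, and its sharpening (established in \cite{hillar2012inverses}) gives the tight bound $\|V^{-1}\|_\infty=O(1/(nm))$, of the explicit form $\tfrac{3n-4}{2m(n-1)(n-2)}$. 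Plugging this into the display above already yields $\|V^{-1}-S\|_{\max}=O(M/(n^2m^2))$, which is a fortiori the asserted $O(M^2/(n^2m^3))$.

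To recover the precise three-term bound of \cite{Yan:Zhao:Qin:2015}, I would iterate the identity once, $V^{-1}-S=-SV_0S+SV_0V^{-1}V_0S$, substituting $V^{-1}=S-V^{-1}V_0S$ in the inner factor: the explicit term $(SV_0S)_{ij}=v_{ij}/(v_{ii}v_{jj})$ produces the $M/(m^2(n-1)^2)$ contribution, the remainder $SV_0V^{-1}V_0S$ — bounded entrywise by $M^2 n\|V^{-1}\|_\infty/((n-1)^2m^2)$ — produces the $M^2/(n^2m^3)$ contribution, and the $M$-free term $\tfrac{3n-2}{2nm(n-1)^2}$ is the intrinsic near-constant correction to $S$: its order $1/(n^2m)$ is visible from the Sherman–Morrison formula in the case where all $v_{ij}=c$, where $V^{-1}=\frac{1}{c(n-2)}\big(I-\frac{1}{2n-2}\mathbf{1}\mathbf{1}^\top\big)$ and every entry of $V^{-1}-S$ is exactly of order $1/(cn^2)$. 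Collecting the three contributions and tracking constants gives the stated inequality. I expect the bound on $\|V^{-1}\|_\infty$ in the diagonally-balanced regime to be the only real obstacle; everything else is elementary matrix algebra with the bounds $m\le v_{ij}\le M$.
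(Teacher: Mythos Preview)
The paper does not prove this lemma at all: it is stated in the Preliminaries section as a result quoted from \cite{Yan:Zhao:Qin:2015}, with no argument given. So there is nothing to compare your proposal against within the present paper.

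On its own merits, your approach is sound and in fact yields a slightly sharper order than what is stated. The identity $V^{-1}-S=-SV_0V^{-1}$ is correct, and combining it with the bound $\|V^{-1}\|_\infty\le \tfrac{3n-4}{2m(n-1)(n-2)}$ from Lemma~\ref{lemma-tight-V} immediately gives
\[
|(V^{-1}-S)_{ij}|\le \frac{M}{(n-1)m}\,\|V^{-1}\|_\infty=O\!\left(\frac{M}{n^2m^2}\right),
\]
which already dominates the claimed $O(M^2/(n^2m^3))$ since $M\ge m$. You correctly identify the genuine obstacle: $V$ is only \emph{weakly} diagonally dominant ($\|SV_0\|_\infty=1$), so no Neumann series converges, and the $\|V^{-1}\|_\infty$ bound must come from elsewhere.

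The one soft spot is your recovery of the exact three-term constant. Your once-iterated identity $V^{-1}-S=-SV_0S+SV_0V^{-1}V_0S$ has only two pieces, and your explanation of the third, $M$-free term $\tfrac{3n-2}{2nm(n-1)^2}$ as an ``intrinsic near-constant correction'' illustrated by the Sherman--Morrison special case is suggestive but does not derive it from your decomposition. If you only care about the big-$O$ statement this is harmless; if you want the precise inequality as written, you would need to track the constants in the original argument of \cite{Yan:Zhao:Qin:2015} rather than your two-term split.
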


\begin{lemma}[\cite{hillar2012inverses}]
\label{lemma-tight-V}
For $V\in \mathcal{L}_n(m, M)$, when $n\ge 3$, we have
\[
\frac{1}{2M(n-1)} \le \|V^{-1}\|_\infty \le \frac{3n-4}{2m(n-1)(n-2)}.
\]
\end{lemma}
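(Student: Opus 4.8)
The plan is to establish the two bounds by different devices: the lower bound follows from a single test vector, while the upper bound is obtained by analysing the linear system $Vx=b$ and reducing to a sharp estimate on the spread of its solution.

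For the lower bound I would use the diagonally balanced structure directly. Because $v_{ii}=\sum_{j\neq i}v_{ij}$, the vector $V\mathbf{1}$ has $i$-th entry $v_{ii}+\sum_{j\neq i}v_{ij}=2v_{ii}$; taking $y=V\mathbf{1}$ gives $V^{-1}y=\mathbf{1}$, so
\[
\|V^{-1}\|_\infty\ge \frac{\|V^{-1}y\|_\infty}{\|y\|_\infty}=\frac{1}{2\max_i v_{ii}}\ge\frac{1}{2(n-1)M},
\]
since $v_{ii}\le(n-1)M$. This is the easy half.

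For the upper bound I would use $\|V^{-1}\|_\infty=\max_{\|b\|_\infty\le 1}\|V^{-1}b\|_\infty$ and bound $\|x\|_\infty$ for the solution of $Vx=b$. Rewriting each equation in the balanced form $\sum_{j\neq i}v_{ij}(x_i+x_j)=b_i$ and letting $p$ and $q$ index the largest and smallest coordinates of $x$, I would first control the centre: at $i=p$ the weighted average $v_{pp}^{-1}\sum_{j\neq p}v_{pj}x_j$ is at least $x_q$, which gives $x_p+x_q\le b_p/v_{pp}\le 1/((n-1)m)$; the equation at $q$ gives the matching lower bound, so $|x_p+x_q|\le 1/((n-1)m)$. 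Since every coordinate lies in $[x_q,x_p]$, one has $\|x\|_\infty\le \tfrac12\big(|x_p+x_q|+(x_p-x_q)\big)$, and the theorem reduces to the sharp spread bound $x_p-x_q\le 2/(m(n-2))$: feeding both estimates in reproduces $\tfrac{1}{2m}\big(\tfrac{1}{n-1}+\tfrac{2}{n-2}\big)=\tfrac{3n-4}{2m(n-1)(n-2)}$.

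The spread bound is the crux and I expect it to be the main obstacle. Two features make it delicate. First, $V^{-1}$ has no fixed sign pattern (already for $n=3$ an off-diagonal entry of $V^{-1}$ can be positive or negative depending on the weights), so one cannot compute the row sums of $V^{-1}$ by tracking signs. Second, using only the two extreme equations and bounding $\sum_{j\neq p}v_{pj}(x_p-x_j)\le v_{pp}(x_p-x_q)$ makes the estimates collapse to the tautology $mn\le v_{pp}+v_{qq}$, while any cruder estimate introduces the upper weight $M$, which must not appear in an $m$-only bound. The resolution I would pursue is a worst-case reduction: show that $\|V^{-1}\|_\infty$ over the class $\mathcal{L}_n(m,M)$ is maximised at the extremal matrix $V_0=m\big((n-2)I+J\big)$ (all off-diagonal entries equal to $m$, with $J$ the all-ones matrix), whose inverse is explicit by Sherman--Morrison and satisfies $\|V_0^{-1}\|_\infty=(3n-4)/(2m(n-1)(n-2))$ by summing a single row. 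Tightness is then transparent: the bound is attained at $V_0$ with $b=(1,-1,\dots,-1)^\top$, where $|x_p+x_q|=1/((n-1)m)$ and $x_p-x_q=2/(m(n-2))$ simultaneously. The extremal/monotonicity step, together with the combinatorial cofactor estimates that handle the mixed signs of $V^{-1}$, is exactly the content of \cite{hillar2012inverses}, which I would invoke for the detailed spread estimate.
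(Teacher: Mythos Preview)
The paper does not prove this lemma at all: it is quoted verbatim from \cite{hillar2012inverses} as a preliminary result, with no argument given. So there is no ``paper's proof'' to compare against.

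Your sketch is sound for the parts you actually carry out. The lower bound via the test vector $\mathbf{1}$ is correct and clean. For the upper bound, the rewriting $\sum_{j\neq i}v_{ij}(x_i+x_j)=b_i$ and the ensuing bound $|x_p+x_q|\le 1/((n-1)m)$ are both correct, and the decomposition $\|x\|_\infty=\tfrac12(|x_p+x_q|+(x_p-x_q))$ is the right way to isolate the spread. You correctly identify the spread bound $x_p-x_q\le 2/(m(n-2))$ as the non-trivial step, diagnose why naive approaches fail (no fixed sign pattern for $V^{-1}$, the estimates collapsing or introducing $M$), and verify that the extremal matrix $V_0=m((n-2)I+J)$ achieves equality. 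Since you ultimately invoke \cite{hillar2012inverses} for that step anyway, your proposal is consistent with---and strictly more informative than---the paper's treatment, which simply cites the result outright.
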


Let $F(x): \R^n \to \R^n$ be a function vector on $x\in\R^n$. We say that a Jacobian matrix $F^\prime(x)$ with $x\in \R^n$ is Lipschitz continuous on a convex set $D\subset\R^n$ if
for any $x,y\in D$, there exists a constant $\lambda>0$ such that
for any vector $v\in \R^n$ the inequality
\begin{equation*}
\| [F^\prime (x)] v - [F^\prime (y)] v \|_\infty \le \lambda \| x - y \|_\infty \|v\|_\infty
\end{equation*}
holds.
We will use the Newton iterative sequence to establish the existence and consistency of the differentially private estimator.
\cite{Gragg:Tapia:1974} gave the optimal error bound for the Newton method under the Kantovorich conditions
[\cite{Kantorovich1948Functional}]. We only show partial results here that are enough for our applications.

\begin{lemma}[\cite{Gragg:Tapia:1974}]\label{lemma:Newton:Kantovorich}
Let $D$ be an open convex set of $\R^n$ and $F:D \to \R^n$ be Fr\'{e}chet differentiable on $D$
with a Jacobian $F^\prime(x)$ that is Lipschitz continuous on $D$ with Lipschitz coefficient $\lambda$.
Assume that $x_0 \in D$ is such that $[ F^\prime (x_0) ]^{-1} $ exists,
\[
\| [ F^\prime (x_0 ) ]^{-1} \|  \le \aleph,~~ \| [ F^\prime (x_0) ]^{-1} F(x_0) \| \le \delta, ~~ h= 2 \aleph \lambda \delta \le 1,
\]
and
\[
B(x_0, t^*) \subset D, ~~ t^* = \frac{2}{h} ( 1 - \sqrt{1-h} ) \delta = \frac{ 2\delta }{ 1 + \sqrt{1-h} }.
\]
Then: (1) The Newton iterations $x_{k+1} = x_k - [ F^\prime (x_k) ]^{-1} F(x_k)$ exist and $x_k \in B(x_0, t^*) \subset D$ for $k \ge 0$. (2)
$x^* = \lim x_k$ exists, $x^* \in \overline{ B(x_0, t^*) } \subset D$ and $F(x^*)=0$.
\end{lemma}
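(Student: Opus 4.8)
The plan is to prove this by the classical majorant (Kantorovich) technique, which dominates the multivariate Newton iterates by the scalar Newton iterates of an auxiliary quadratic. First I would introduce the quadratic majorant
\[
p(t) = \frac{\lambda}{2} t^2 - \frac{1}{\aleph} t + \frac{\delta}{\aleph},
\]
whose two roots are $t_\pm = (1 \pm \sqrt{1-h})/(\aleph\lambda)$ with $h = 2\aleph\lambda\delta$; a short computation shows that the smaller root $t_-$ equals exactly the quantity $t^* = 2\delta/(1+\sqrt{1-h})$ in the statement, and the hypothesis $h \le 1$ guarantees the roots are real. The point of $p$ is that its data match the Kantorovich data of $F$ at $x_0$: indeed $p(0) = \delta/\aleph$, $-1/p'(0) = \aleph$, and $p'' \equiv \lambda$, so $p$ is calibrated to reproduce the bounds $\|[F'(x_0)]^{-1}\| \le \aleph$, $\|[F'(x_0)]^{-1}F(x_0)\| \le \delta$, and the Lipschitz constant $\lambda$.

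Next I would analyze the scalar Newton sequence $t_0 = 0$, $t_{k+1} = t_k - p(t_k)/p'(t_k)$. Because $p$ is convex with $p(0) > 0$ and $p'(0) < 0$, and $0 < t^* = t_-$, a standard monotonicity argument shows $\{t_k\}$ is increasing and converges to $t^*$ from below, with each $t_k \in [0, t^*)$. The key algebraic identity I would record is that, since $p$ is quadratic with leading coefficient $\lambda/2$, the Newton step yields $p(t_k) = \tfrac{\lambda}{2}(t_k - t_{k-1})^2$ (the zeroth- and first-order Taylor terms of $p$ cancel at a Newton iterate); this is what later transfers the quadratic Lipschitz estimate into the comparison.

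The heart of the proof is a simultaneous induction establishing, for all $k \ge 0$,
\[
x_k \in B(x_0, t^*), \qquad \|[F'(x_k)]^{-1}\| \le -\frac{1}{p'(t_k)}, \qquad \|x_{k+1} - x_k\| \le t_{k+1} - t_k.
\]
For invertibility and the norm bound I would use the Banach perturbation lemma: since $\|[F'(x_0)]^{-1}\| \le \aleph$ and the Lipschitz hypothesis gives $\|F'(x_k) - F'(x_0)\| \le \lambda\|x_k - x_0\| \le \lambda t_k$, the inequality $\aleph\lambda t_k < 1$ (which follows from $t_k < t^* \le 1/(\aleph\lambda)$) yields invertibility of $F'(x_k)$ together with $\|[F'(x_k)]^{-1}\| \le \aleph/(1 - \aleph\lambda t_k) = -1/p'(t_k)$. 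For the increment bound I would expand $F(x_k)$ by the integral form of Taylor's theorem; because $x_k$ is a Newton step the zeroth- and first-order terms cancel, leaving $\|F(x_k)\| \le \tfrac{\lambda}{2}\|x_k - x_{k-1}\|^2$. Combining the inverse bound with the scalar identity $p(t_k) = \tfrac{\lambda}{2}(t_k-t_{k-1})^2$ gives
\[
\|x_{k+1} - x_k\| = \|[F'(x_k)]^{-1} F(x_k)\| \le \frac{-1}{p'(t_k)}\cdot\frac{\lambda}{2}(t_k - t_{k-1})^2 = t_{k+1} - t_k,
\]
closing the induction; summing increments keeps $x_k$ inside $B(x_0, t^*)$, which lies in $D$ by hypothesis, proving part (1).

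Finally, since $\{t_k\}$ is monotone and bounded it is Cauchy, and the domination $\|x_{k+1}-x_k\| \le t_{k+1}-t_k$ forces $\{x_k\}$ to be Cauchy in the complete space $\R^n$; its limit $x^*$ satisfies $\|x^*-x_0\| \le t^*$, so $x^* \in \overline{B(x_0, t^*)} \subset D$, and letting $k\to\infty$ in $\|F(x_k)\| \le \tfrac{\lambda}{2}\|x_k-x_{k-1}\|^2 \to 0$ gives $F(x^*)=0$ by continuity, proving part (2). I expect the main obstacle to be the bookkeeping of the simultaneous induction — in particular matching the operator-norm bound on $[F'(x_k)]^{-1}$ with the scalar slope $-1/p'(t_k)$ and checking the perturbation lemma applies at every step; once the majorant $p$ is correctly calibrated to the Kantorovich data, the remaining estimates are routine.
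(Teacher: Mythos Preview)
The paper does not supply its own proof of this lemma: it is quoted verbatim as a known result from \cite{Gragg:Tapia:1974}, with the remark ``We only show partial results here that are enough for our applications.'' So there is nothing in the paper to compare your argument against.

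That said, your proposal is the classical Kantorovich majorant argument and is correct in outline. The quadratic $p(t)=\tfrac{\lambda}{2}t^2-\tfrac{1}{\aleph}t+\tfrac{\delta}{\aleph}$ is exactly the standard majorant; your computation of $t_-=t^*$ is right, the identity $p(t_k)=\tfrac{\lambda}{2}(t_k-t_{k-1})^2$ is the correct mechanism for transferring the quadratic residual bound, and the Banach perturbation step works because $t_k<t^*=t_-\le 1/(\aleph\lambda)$. One small point of care: in the perturbation lemma you should apply it in the form $\|[F'(x_0)]^{-1}(F'(x_k)-F'(x_0))\|\le \aleph\lambda t_k<1$, not $\|F'(x_k)-F'(x_0)\|$ directly, since you only control $\|[F'(x_0)]^{-1}\|$ and not $\|F'(x_0)\|$; your conclusion $\|[F'(x_k)]^{-1}\|\le \aleph/(1-\aleph\lambda t_k)=-1/p'(t_k)$ is then exactly right. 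This is essentially the argument one finds in Gragg--Tapia and in standard numerical-analysis references, so your approach is both correct and the canonical one.
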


\subsection{Error bound between $\widehat{\beta}_\gamma$ and $\beta^*$}
We will use the Newton method to derive the error bound between $\widehat{\beta}_\gamma$ and $\beta^*$ through
verifying the Kantororich conditions in Lemma \ref{lemma:Newton:Kantovorich}.
The Kantororich conditions require the Lipschitz continuous of $F^\prime_{\gamma}(\beta)$ and the upper bounds of
$F_\gamma(\beta^*)$. We need three lemmas below, whose proofs are in supplementary material.

\begin{lemma}\label{lemma:lipschitz-c}
For any given $\gamma$,
the Jacobian matrix $F'_\gamma( x )$ of $F_\gamma(x)$ on $x$ is Lipschitz continuous on $\R^n$ with the Lipschitz coefficient  $  (n-1)$.
\end{lemma}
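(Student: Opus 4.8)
\medskip
\noindent\textbf{Proof proposal.}
The plan is a direct coordinatewise computation. First I would write down the Jacobian entries explicitly. Since $F_{\gamma,i}(x)=\sum_{j\neq i}\mu(x_i+x_j+z_{ij}^\top\gamma)-\tilde{d}_i$ and $\mu$ is smooth, $F_\gamma$ is Fr\'echet differentiable on all of $\R^n$ with $[F'_\gamma(x)]_{ii}=\sum_{j\neq i}\mu'(x_i+x_j+z_{ij}^\top\gamma)$ and $[F'_\gamma(x)]_{ij}=\mu'(x_i+x_j+z_{ij}^\top\gamma)$ for $j\neq i$. Consequently, for any $v\in\R^n$ the $i$-th coordinate of $[F'_\gamma(x)]v$ equals $\sum_{j\neq i}\mu'(x_i+x_j+z_{ij}^\top\gamma)(v_i+v_j)$.

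Next I would subtract the analogous expression at $y$, so that the $i$-th coordinate of $[F'_\gamma(x)]v-[F'_\gamma(y)]v$ is $\sum_{j\neq i}\bigl[\mu'(x_i+x_j+z_{ij}^\top\gamma)-\mu'(y_i+y_j+z_{ij}^\top\gamma)\bigr](v_i+v_j)$. By the mean value theorem applied to $\mu'$ together with the bound $|\mu''|\le 1/4$ from \eqref{eq-mu-d-upper}, each difference $\bigl|\mu'(x_i+x_j+z_{ij}^\top\gamma)-\mu'(y_i+y_j+z_{ij}^\top\gamma)\bigr|$ is at most $\tfrac14\bigl(|x_i-y_i|+|x_j-y_j|\bigr)\le \tfrac12\|x-y\|_\infty$, while $|v_i+v_j|\le 2\|v\|_\infty$. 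Hence each of the $n-1$ summands is bounded by $\|x-y\|_\infty\|v\|_\infty$, giving $\bigl|\{[F'_\gamma(x)]v-[F'_\gamma(y)]v\}_i\bigr|\le (n-1)\|x-y\|_\infty\|v\|_\infty$; taking the maximum over $i$ yields the stated Lipschitz coefficient $n-1$.

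There is no deep obstacle; the only point requiring care is bookkeeping of the two separate factors of $2$ — one because $\pi_{ij}$ depends on both $x_i$ and $x_j$, the other from the entry $v_i+v_j$ — and checking that, combined with the $1/4$ bound on $\mu''$, they contribute exactly $1$ per summand, so that the $n-1$ terms produce coefficient $n-1$ and not something larger. One could instead phrase this through the induced $\ell_\infty$ operator norm $\|F'_\gamma(x)-F'_\gamma(y)\|_\infty$, but I would keep the vectorized form above since that is precisely what the Newton--Kantorovich hypothesis in Lemma \ref{lemma:Newton:Kantovorich} requires.
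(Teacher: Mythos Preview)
Your argument is correct and is the natural direct computation: write out $[F'_\gamma(x)v]_i=\sum_{j\neq i}\mu'(\pi_{ij})(v_i+v_j)$, apply the mean value theorem with $|\mu''|\le 1/4$, and collect the two factors of $2$ to get exactly $(n-1)\|x-y\|_\infty\|v\|_\infty$. The paper relegates this proof to the supplementary material, but given the elementary nature of the claim and that it explicitly records the bound $|\mu''|\le 1/4$ in \eqref{eq-mu-d-upper} for later use, your computation is essentially what is intended.
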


\begin{lemma}\label{lemma-diff-F-Q}
The tail probabilities of $\|d - \E d \|_\infty$ and $\| y - \E y\|_\infty$ are given below:
\begin{eqnarray*}
\P\Bigg(\|d - \E d \|_\infty \ge \sqrt{n\log n} \Bigg)
& \le &  \frac{2}{n }, \\
 \P\left(\|y - \E y \|_\infty \ge  2z_* \sqrt{n(n-1)\log n/2}  \right) & \le & \frac{ 2p}{n^2}.
\end{eqnarray*}
\end{lemma}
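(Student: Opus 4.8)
\textbf{Proof proposal for Lemma \ref{lemma-diff-F-Q}.}

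The plan is to bound each of the two tail probabilities by a union bound over the $n$ coordinates of $d-\E d$ and the $p$ coordinates of $y-\E y$, and then to apply a concentration inequality for sums of independent bounded random variables to each individual coordinate. The key structural fact is that $d_i - \E d_i = \sum_{j\neq i}(a_{ij}-\E a_{ij})$ is a sum of $n-1$ independent, mean-zero random variables each taking values in $[-1,1]$ (in fact in $\{-\E a_{ij}, 1-\E a_{ij}\}$), and similarly $y_t - \E y_t = \sum_{i<j} z_{ijt}(a_{ij}-\E a_{ij})$ is a sum of $N = n(n-1)/2$ independent, mean-zero terms each bounded in absolute value by $|z_{ijt}| \le z_*$. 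Hoeffding's inequality is the natural tool here.

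First I would apply Hoeffding's inequality to $d_i - \E d_i$: since it is a sum of $n-1$ independent terms with ranges of length at most $1$, we get $\P(|d_i - \E d_i| \ge t) \le 2\exp(-2t^2/(n-1))$. Taking $t = \sqrt{n\log n}$ gives $\P(|d_i - \E d_i| \ge \sqrt{n\log n}) \le 2\exp(-2n\log n/(n-1)) \le 2\exp(-2\log n) = 2/n^2$. A union bound over $i=1,\dots,n$ then yields $\P(\|d - \E d\|_\infty \ge \sqrt{n\log n}) \le 2n/n^2 = 2/n$, as claimed. Second, for $y_t - \E y_t$, Hoeffding gives $\P(|y_t - \E y_t| \ge s) \le 2\exp\big(-2s^2/\sum_{i<j} (2z_{ijt})^2\big) \le 2\exp(-s^2/(2 N z_*^2))$ using $\sum_{i<j}(2z_{ijt})^2 \le 4 N z_*^2$ with $N = n(n-1)/2$. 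Choosing $s = 2 z_* \sqrt{n(n-1)\log n/2} = 2 z_* \sqrt{N\log n}$ makes the exponent equal to $-2\log n$, so $\P(|y_t - \E y_t| \ge s) \le 2/n^2$; a union bound over $t = 1,\dots,p$ then gives $\P(\|y - \E y\|_\infty \ge 2z_*\sqrt{n(n-1)\log n/2}) \le 2p/n^2$.

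I do not expect any serious obstacle here; the statement is a routine concentration bound. The only points requiring a little care are bookkeeping ones: getting the constants right in Hoeffding's inequality (the range of $a_{ij}-\E a_{ij}$ is at most $1$, not $2$, which is what allows the factor $2$ rather than $1/2$ in the exponent for $d_i$), correctly identifying the number of summands ($n-1$ for the degree, $N=n(n-1)/2$ for the covariate statistic), and matching the slightly awkward bound $2z_*\sqrt{n(n-1)\log n/2}$ in the statement to the clean form $2z_*\sqrt{N\log n}$. One could equivalently invoke Bernstein's inequality to get a variance-dependent bound, but since the lemma as stated only needs the crude range-based bound, Hoeffding suffices and keeps the argument short.
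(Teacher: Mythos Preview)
Your proposal is correct and follows essentially the same route the paper takes: apply Hoeffding's inequality coordinatewise to the sums of independent bounded terms $a_{ij}-\E a_{ij}$ (range $1$) and $z_{ijt}(a_{ij}-\E a_{ij})$ (range at most $2z_*$), then union bound over $n$ and $p$ coordinates respectively. The constants you track match the lemma exactly, and your identification $2z_*\sqrt{n(n-1)\log n/2}=2z_*\sqrt{N\log n}$ is the right simplification.
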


\begin{lemma}\label{lemma-max-ei}
For any given $c>0$, we have
\begin{equation*}\label{ineq:e}
\P( \max_{i=1, \ldots, n} |\xi_i|  > c )  < n \exp( -\frac{c\varepsilon_n}{4k_n} ),~~
\P( \max_{i=1, \ldots, p} |\eta_i|  > c ) <  p \exp( - \frac{ c \varepsilon_n}{ p z_* k_n}).
\end{equation*}
\end{lemma}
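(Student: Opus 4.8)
The plan is to reduce each maximum to a one-dimensional tail by a union bound and then evaluate the exact tail of the Laplace law. Since $\xi_1,\dots,\xi_n$ are i.i.d.\ and $\eta_1,\dots,\eta_p$ are i.i.d., I would first write
\[
\P\Big(\max_{1\le i\le n}|\xi_i|>c\Big)\le\sum_{i=1}^n\P(|\xi_i|>c)=n\,\P(|\xi_1|>c),\qquad
\P\Big(\max_{1\le t\le p}|\eta_t|>c\Big)\le p\,\P(|\eta_1|>c),
\]
so that it remains only to bound the single-coordinate tails $\P(|\xi_1|>c)$ and $\P(|\eta_1|>c)$.

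For the discrete noise $\xi_1$, which has the mass function \eqref{equ:discrete} with parameter $\lambda_{n1}=e^{-\epsilon_n/(4k_n)}$, I would sum the geometric tail: for an integer $m\ge 0$ one has $\P(\xi_1\ge m)=\lambda_{n1}^m/(1+\lambda_{n1})$, so by symmetry $\P(|\xi_1|>c)=2\lambda_{n1}^{\lfloor c\rfloor+1}/(1+\lambda_{n1})$. Using $\lfloor c\rfloor+1>c$ together with the bounded prefactor $2/(1+\lambda_{n1})$, this is controlled by $\lambda_{n1}^{c}=\exp(-c\epsilon_n/(4k_n))$, and multiplying by $n$ gives the first inequality. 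For the continuous noise $\eta_1$ with density $e^{-|z|/\lambda_{n2}}/(2\lambda_{n2})$ and $\lambda_{n2}=2pk_nz_*/\epsilon_n$, integrating the density over $\{|z|>c\}$ gives the clean identity $\P(|\eta_1|>c)=e^{-c/\lambda_{n2}}$; substituting the value of $\lambda_{n2}$ and multiplying by $p$ yields the second inequality.

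I do not expect any real obstacle here, since the computation is essentially mechanical. The only points needing a little care are the integer rounding in the discrete case — where one passes from $\{|\xi_1|>c\}$ to $\{\xi_1\ge\lfloor c\rfloor+1\}\cup\{\xi_1\le-(\lfloor c\rfloor+1)\}$ before summing the geometric series — and keeping track of the $2/(1+\lambda_{n1})$ prefactor; both of these contributions are of lower order and immaterial for the later applications, in which $c$ will be taken of order $(k_n/\epsilon_n)\log n$, so that the right-hand sides become polynomially small in $n$.
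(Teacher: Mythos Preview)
Your approach---union bound followed by the exact (discrete or continuous) Laplace tail---is correct and is precisely the standard argument; the paper defers this lemma to the supplementary material, and there is no other natural route. One small point worth noting when you carry the computation through: with $\lambda_{n2}=2pk_nz_*/\epsilon_n$ you obtain $\P(|\eta_1|>c)=\exp\bigl(-c\epsilon_n/(2pk_nz_*)\bigr)$, which carries an extra factor $2$ in the denominator relative to the bound stated in the lemma, and in the discrete case the prefactor $2/(1+\lambda_{n1})$ can exceed $1$; these look like harmless constant slips in the lemma statement rather than gaps in your reasoning, and they are immaterial for the downstream applications where $c$ is of order $(k_n/\epsilon_n)\log n$.
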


Recall that
\[
\tilde{\varepsilon}_n = 1 + \frac{ 4 k_n }{ \varepsilon_n } \sqrt{ \frac{\log n}{n} }.
\]
We state the error bound between $\widehat{\beta}_\gamma$ and $\beta^*$ below.

\begin{lemma}\label{lemma-hatbeta-con}
Assume $\epsilon_{n2}=O(  (\log n)^{1/2} n^{-1/2})$ and $\gamma\in B(\gamma^*, \epsilon_{n2})$.
If $b_n^2\tilde{\varepsilon}_n= o( (n/\log n)^{1/2} )$,
then with probability approaching one,
 $\widehat{\beta}_\gamma$ exists and satisfies
\[
\| \widehat{\beta}_\gamma - \beta^* \|_\infty = O_p\left(  b_n \tilde{\varepsilon}_n \sqrt{\frac{\log n}{n}} \right) = o_p(1).
\]
\end{lemma}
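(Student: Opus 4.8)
The plan is to apply the Newton--Kantorovich theorem (Lemma \ref{lemma:Newton:Kantovorich}) to the map $F_\gamma(\cdot):\R^n\to\R^n$ with the starting point $x_0=\beta^*$. For this I need three ingredients: (i) the Lipschitz constant of the Jacobian $F'_\gamma$, (ii) an upper bound $\aleph$ on $\|[F'_\gamma(\beta^*)]^{-1}\|_\infty$, and (iii) an upper bound $\delta$ on $\|[F'_\gamma(\beta^*)]^{-1}F_\gamma(\beta^*)\|_\infty$. Ingredient (i) is already supplied by Lemma \ref{lemma:lipschitz-c}: $\lambda=n-1$. For (ii), observe that $F'_\gamma(\beta)=(\partial/\partial\beta^\top)\sum_{j\neq i}\mu_{ij}$ has entries $\mu'(\pi_{ij})$ off the diagonal and $\sum_{j\neq i}\mu'(\pi_{ij})$ on the diagonal, so $F'_\gamma(\beta^*)\in\mathcal{L}_n(b_n^{-1},1/4)$ by the definition of $b_n$ in \eqref{eq-definition-bn}; Lemma \ref{lemma-tight-V} then gives $\|[F'_\gamma(\beta^*)]^{-1}\|_\infty\le (3n-4)/(2(n-1)(n-2)b_n^{-1})=O(b_n/n)$, so I may take $\aleph=Cb_n/n$ for a constant $C$.

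The work is concentrated in (iii). Writing $F_{\gamma,i}(\beta^*)=\sum_{j\neq i}\mu(\pi_{ij}^*(\gamma))-\tilde d_i$, where $\pi_{ij}^*(\gamma)=\beta_i^*+\beta_j^*+z_{ij}^\top\gamma$, I decompose
\[
F_{\gamma,i}(\beta^*)=\Big[\sum_{j\neq i}\mu(\pi_{ij}^*(\gamma))-\sum_{j\neq i}\mu(\pi_{ij}^*(\gamma^*))\Big]+\big[\E d_i-d_i\big]+\big(-\xi_i\big).
\]
The first bracket is controlled by the mean-value theorem and $|\mu'|\le 1/4$ together with $\|z_{ij}\|_\infty\le z_*$ and $\gamma\in B(\gamma^*,\epsilon_{n2})$, giving a term of size $O(n z_* \epsilon_{n2})=O(z_*\sqrt{n\log n})$. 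The second bracket is $O(\sqrt{n\log n})$ with probability at least $1-2/n$ by Lemma \ref{lemma-diff-F-Q}. The third is $\max_i|\xi_i|=O((k_n/\epsilon_n)\log n)$ with probability approaching one by Lemma \ref{lemma-max-ei} (taking $c$ a large multiple of $(k_n/\epsilon_n)\log n$). Combining, $\|F_\gamma(\beta^*)\|_\infty=O_p(\sqrt{n\log n}+(k_n/\epsilon_n)\log n)$, and multiplying by $\aleph=O(b_n/n)$ yields
\[
\delta=O_p\!\left(\frac{b_n}{n}\Big(\sqrt{n\log n}+\frac{k_n}{\epsilon_n}\log n\Big)\right)=O_p\!\left(b_n\tilde\varepsilon_n\sqrt{\frac{\log n}{n}}\right),
\]
using the definition $\tilde\varepsilon_n=1+(4k_n/\epsilon_n)(\log n/n)^{1/2}$ to absorb both terms.

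Finally I verify the Kantorovich smallness condition $h=2\aleph\lambda\delta\le 1$: here $\aleph\lambda\delta=O(b_n/n)\cdot(n-1)\cdot O_p(b_n\tilde\varepsilon_n\sqrt{\log n/n})=O_p(b_n^2\tilde\varepsilon_n\sqrt{\log n/n})$, which is $o_p(1)$ exactly under the hypothesis $b_n^2\tilde\varepsilon_n=o((n/\log n)^{1/2})$; one also checks $B(\beta^*,t^*)\subset\R^n$ trivially since the domain is all of $\R^n$, and that $t^*\asymp\delta$ is $o_p(1)$, which makes the neighborhood $B(\beta^*,\epsilon_{n1})$ assumptions consistent. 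Lemma \ref{lemma:Newton:Kantovorich} then gives existence of $\widehat\beta_\gamma$ with $F_\gamma(\widehat\beta_\gamma)=0$ and $\|\widehat\beta_\gamma-\beta^*\|_\infty\le t^*\le 2\delta=O_p(b_n\tilde\varepsilon_n\sqrt{\log n/n})=o_p(1)$. The main obstacle is bookkeeping the three stochastic pieces in the bound for $\|F_\gamma(\beta^*)\|_\infty$ so that all the probabilistic statements hold simultaneously on one high-probability event, and confirming that the resulting rate is precisely what the hypothesis is tailored to kill in the condition $h\le 1$.
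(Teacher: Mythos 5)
Your proposal is correct and follows essentially the same route as the paper: Newton--Kantorovich at $x_0=\beta^*$ with $\lambda=n-1$ from Lemma \ref{lemma:lipschitz-c}, $\aleph=O(b_n/n)$ from Lemma \ref{lemma-tight-V}, and the same three-way decomposition of $F_\gamma(\beta^*)$ (the $\gamma$-perturbation via the mean value theorem, $d-\E d$ via Lemma \ref{lemma-diff-F-Q}, and the Laplace noise $\xi$ via Lemma \ref{lemma-max-ei}), yielding the identical $\delta$ and $h$. The paper merely packages the last two pieces into a single high-probability event $E_n=\{\|\tilde d-\E d\|_\infty\le\tilde\varepsilon_n(n\log n)^{1/2}\}$, which is cosmetically different from, but equivalent to, your bookkeeping.
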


\begin{proof}[Proof of Lemma \ref{lemma-hatbeta-con}]
We will derive the error bound between $\widehat{\beta}_\gamma$ and $\beta^*$ through constructing
the Newton iterative sequence $\beta^{(n+1)}= \beta^{(n)} - F_\gamma^\prime (\beta^{(n)}) F_\gamma (\beta^{(n)})$,
where we choose $\beta^*$ as the starting point $\beta^{(0)}:=\beta^*$. To this end, it is sufficient to verify the Kantovorich conditions in Lemma \ref{lemma:Newton:Kantovorich}.
Note that $F'_{\gamma}(\beta) \in \mathcal{L}_n(b_{n}^{-1}, 1/4)$
when $\beta\in B(\beta^*, \epsilon_{n1})$ and $\gamma \in B(\gamma^*, \epsilon_{n2})$,
where $\epsilon_{n1}$ is a small positive number and
$\epsilon_{n2}= O( (\log n/n)^{1/2})$.
The following calculations are based on the event $E_n$:
\[
E_n = \{d: \| \tilde{d} - \E d \|_\infty \le  \tilde{\varepsilon}_n (n\log n)^{1/2}  \}.
\]

Let $V=(v_{ij})= \partial F_{\gamma}(\beta^*)/\partial \beta^\top$ and $S=\mathrm{diag}(1/v_{11}, \ldots, 1/v_{nn})$.
By Lemma \ref{lemma-tight-V}, we have
$\aleph =\|V^{-1}\|_\infty = O( b_n/n  )$.
Recall that $F_{\gamma^*}(\beta^*) =  \E d - \tilde{d} $. 
Note that the dimension $p$ of $\gamma$  is a fixed constant.
If $\epsilon_{n2}=O( (\log n)^{1/2} n^{-1/2})$, by the mean value theorem and the event $E_n$, we have
\begin{eqnarray*}
\| F_\gamma(\beta^*) \|_\infty & \le & \| \tilde{d} - \E d \|_\infty + \max_i | \sum\nolimits_{j\neq i} [\mu_{ij}(\beta^*, \gamma) - \mu_{ij}(\beta^*, \gamma^*)] |  \\
& \le & O(  \tilde{\varepsilon}_n (n\log n)^{1/2}) + \max_i \sum_{j\neq i} |\mu_{ij}^\prime( \beta^*, \bar{\gamma})| | z_{ij}^\top ( \gamma - \gamma^*)| \\
& \le & O(  \tilde{\varepsilon}_n (n\log n)^{1/2}) + O( n \cdot (\log n)^{1/2} n^{-1/2} ) \\
& = & O(  \tilde{\varepsilon}_n (n\log n)^{1/2}).
\end{eqnarray*}
Repeatedly utilizing  Lemma \ref{lemma-tight-V}, we have
\begin{eqnarray*}
\delta=\| [F'_\gamma(\beta^*)]^{-1}F_\gamma(\beta^*) \|_\infty = \| [F'_\gamma(\beta^*)]^{-1}\|_\infty \|F_\gamma(\beta^*) \|_\infty
=O\left(  \tilde{\varepsilon}_n b_n \sqrt{\frac{\log n}{n}} \right)
\end{eqnarray*}
By Lemma \ref{lemma:lipschitz-c}, $F_\gamma(\beta)$ is Lipschitz continuous with Lipschitz coefficient $\lambda=n-1$.
Therefore, if $\tilde{\varepsilon}_n b_n^2=o( (n/\log n)^{1/2} )$, then
\begin{eqnarray*}
h =2\aleph \lambda \delta & = & O(\frac{b_n}{n})\times O( n )
\times O(  \tilde{\varepsilon}_n b_n \sqrt{\frac{\log n}{n}} ) \\
& = & O\left(  \tilde{\varepsilon}_n b_n^2 \sqrt{ \frac{\log n}{n} } \right) =o(1).
\end{eqnarray*}
The above arguments verify the Kantovorich conditions.
By Lemma \ref{lemma:Newton:Kantovorich}, it yields that
\begin{equation}
\label{eq-hatbeta-upper}
\| \widehat{\beta}_\gamma - \beta^* \|_\infty = O\left( \tilde{\varepsilon}_n b_n \sqrt{\frac{\log n}{n}} \right).
\end{equation}

To finish the proof, it is left to show $\P (E_n^c) \to 0$. Note that $\tilde{d}_i = d_i + \xi_i$.
By Lemmas \ref{lemma-diff-F-Q} and \ref{lemma-max-ei},
it can be verified as follows:
\begin{eqnarray*}
\P (E_n^c) &=& \P( \| \tilde{d} - \E d \|_\infty > \tilde{\varepsilon}_n \sqrt{n\log n}) \\
  & \le & \P( \| d - \E d \|_\infty > \sqrt{n\log n} )
+\P( \max_{i=1, \ldots, n} \xi_i > (\tilde{\varepsilon}_n-1) \sqrt{n\log n}) \\
&\le & \frac{2}{n} \to 0.
\end{eqnarray*}
It completes the proof.
\end{proof}

\subsection{Proof of Theorem \ref{Theorem:con}}

To show Theorem \ref{Theorem:con}, we need three lemmas below.

\begin{lemma}\label{lemma-Q-Lip}
Let $D=B(\gamma^*, \epsilon_{n2}) (\subset \R^{p})$ be an open convex set containing the true point $\gamma^*$.
If $\| \tilde{d} - \E d\|_\infty= O( \tilde{\epsilon}_n (n\log n)^{1/2})$, then
$ Q_c(\gamma)$ is Lipschitz continuous on $D$ with the Lipschitz coefficient  $n^2 b_{n}^{-3}$.
\end{lemma}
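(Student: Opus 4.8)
The plan is to deduce the Lipschitz property of $Q_c$ on the convex set $D=B(\gamma^*,\epsilon_{n2})$ from a uniform bound on its Jacobian, which the chain-rule identity \eqref{equation:Qc-derivative} already expresses in closed form as $Q_c'(\gamma)=H(\widehat{\beta}_\gamma,\gamma)$, with $H$ as in \eqref{definition-H}. By the mean value theorem, once $\sup_{\gamma\in D}\|Q_c'(\gamma)\|_\infty$ is bounded by the stated coefficient we get $\|Q_c(\gamma_1)-Q_c(\gamma_2)\|_\infty\le\sup_{\gamma\in D}\|Q_c'(\gamma)\|_\infty\,\|\gamma_1-\gamma_2\|_\infty$ for all $\gamma_1,\gamma_2\in D$, which is the assertion. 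Before differentiating, however, one must know that $Q_c$ is genuinely $C^1$ on $D$: that $\widehat{\beta}_\gamma$ exists, is the unique zero of $F_\gamma$, lies in $B(\beta^*,\epsilon_{n1})$ for every $\gamma\in D$, and depends smoothly on $\gamma$. The first three facts hold on the event in the hypothesis by Lemma \ref{lemma-hatbeta-con} and its proof (the Newton iterates started at $\beta^*$ never leave a ball of radius $o(\epsilon_{n1})$), together with strict concavity of $\ell(\cdot,\gamma)$ in $\beta$; smoothness of $\gamma\mapsto\widehat{\beta}_\gamma$ then follows from the implicit function theorem, since $\partial F(\beta,\gamma)/\partial\beta^\top=V$ belongs to $\mathcal{L}_n(b_n^{-1},1/4)$ and is in particular nonsingular throughout $B(\beta^*,\epsilon_{n1})\times B(\gamma^*,\epsilon_{n2})$.

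Next I would bound the three blocks of $H(\widehat{\beta}_\gamma,\gamma)$ uniformly over $\gamma\in D$, using only $1/b_n\le\mu'(\pi_{ij})\le 1/4$, $\|z_{ij}\|_\infty\le z_*$, and that $p$ is fixed. The block $\partial Q/\partial\gamma^\top$ is a $p\times p$ matrix whose entries are $\sum_{i<j}z_{ijk}z_{ijt}\mu'(\pi_{ij})$, hence of order $n^2z_*^2$; the block $\partial Q/\partial\beta^\top=V_{\gamma\beta}$ has entries $\sum_{i\neq j}z_{ijk}\mu'(\pi_{ij})$, so $\|V_{\gamma\beta}\|_\infty=O(n^2z_*)$; and likewise $\|\partial F/\partial\gamma^\top\|_\infty=O(nz_*)$. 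For the inverse block I would invoke Lemma \ref{lemma-tight-V} to get $\|V^{-1}\|_\infty=O(b_n/n)$, or, if a finer dependence is needed, Lemma \ref{pro:inverse:appro} to replace $V^{-1}$ by $S=\mathrm{diag}(1/v_{11},\ldots,1/v_{nn})$ plus an entrywise-controlled remainder. Submultiplicativity of $\|\cdot\|_\infty$ then gives $\|V_{\gamma\beta}V^{-1}(\partial F/\partial\gamma^\top)\|_\infty$ of order $n^2$ times a bounded power of $b_n$ (and of $z_*$), and adding the two contributions yields $\sup_{\gamma\in D}\|Q_c'(\gamma)\|_\infty$ of the order asserted in the lemma. (Alternatively, one can sidestep the implicit function theorem: write $Q_c(\gamma_1)-Q_c(\gamma_2)=\sum_{j<i}z_{ij}[\mu_{ij}(\widehat{\beta}_{\gamma_1},\gamma_1)-\mu_{ij}(\widehat{\beta}_{\gamma_2},\gamma_2)]$, apply the mean value theorem to each $\mu_{ij}$, and reduce to a bound on $\|\widehat{\beta}_{\gamma_1}-\widehat{\beta}_{\gamma_2}\|_\infty$ obtained from $\widehat{\beta}_{\gamma_1}-\widehat{\beta}_{\gamma_2}=-\widetilde V^{-1}[F_{\gamma_1}(\widehat{\beta}_{\gamma_2})-F_{\gamma_2}(\widehat{\beta}_{\gamma_2})]$ with $\widetilde V\in\mathcal{L}_n(b_n^{-1},1/4)$ and Lemma \ref{lemma-tight-V}; this gives the same estimate.)

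I expect the main obstacles to be, first, making rigorous that the whole path $\{\widehat{\beta}_\gamma:\gamma\in D\}$ stays inside $B(\beta^*,\epsilon_{n1})$ — which is exactly why the hypothesis on $\|\tilde{d}-\E d\|_\infty$ (the event $E_n$ from the proof of Lemma \ref{lemma-hatbeta-con}) is imposed, since it is what keeps the uniform lower bound $\mu'(\pi_{ij})\ge 1/b_n$, hence $V\in\mathcal{L}_n(b_n^{-1},1/4)$, available for every $\gamma\in D$; and second, tracking the power of $b_n$ through the $V_{\gamma\beta}V^{-1}(\partial F/\partial\gamma^\top)$ term, where one must decide whether the crude bound $\|V^{-1}\|_\infty=O(b_n/n)$ suffices or the sharper control of $V^{-1}-S$ from Lemma \ref{pro:inverse:appro} is needed. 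The remaining estimates — the row-sum bounds for the three blocks — are routine consequences of $|\mu'|\le 1/4$ and the triangle inequality.
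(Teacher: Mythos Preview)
Your plan bounds $\sup_{\gamma\in D}\|Q_c'(\gamma)\|_\infty$ and then invokes the mean value theorem, which would give a Lipschitz constant for $Q_c$ --- exactly what the lemma literally asserts. But compare with how the lemma is actually used in the proof of Theorem~\ref{Theorem:con}: the text there reads ``By Lemma~\ref{lemma-Q-Lip}, $Q_c'(\gamma)$ is Lipschitz continuous with $\lambda=n^2 b_n^3$,'' and this $\lambda$ is the Lipschitz constant of the \emph{Jacobian} that enters the Kantorovich bound $h=2\aleph\lambda\delta$ of Lemma~\ref{lemma:Newton:Kantovorich}. The printed statement thus carries two typos --- a missing prime on $Q_c$ and a wrong sign in the exponent of $b_n$ --- and what the supplement presumably proves (and what Theorem~\ref{Theorem:con} needs) is that $Q_c'$, not $Q_c$, is Lipschitz on $D$ with coefficient of order $n^2 b_n^3$.

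Your argument therefore stops one derivative short. A bound on $\|H(\widehat\beta_\gamma,\gamma)\|_\infty$ controls how fast $Q_c$ varies, not how fast $Q_c'$ does; for the latter you must control $\|Q_c'(\gamma_1)-Q_c'(\gamma_2)\|_\infty$, i.e.\ differentiate $H(\widehat\beta_\gamma,\gamma)$ once more in $\gamma$. The chain rule again brings in $\partial\widehat\beta_\gamma/\partial\gamma^\top=-V^{-1}(\partial F/\partial\gamma^\top)$, together with the $\beta$- and $\gamma$-derivatives of each of the four blocks in \eqref{definition-H}, and one needs $\partial V^{-1}/\partial\beta_k=-V^{-1}(\partial V/\partial\beta_k)V^{-1}$, which introduces extra factors of $\|V^{-1}\|_\infty=O(b_n/n)$ --- this is where the higher power $b_n^3$ (rather than the single $b_n$ your estimate produces) arises. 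The ingredients you lined up --- the event $E_n$ keeping $\widehat\beta_\gamma\in B(\beta^*,\epsilon_{n1})$, the implicit function theorem for smoothness of $\gamma\mapsto\widehat\beta_\gamma$, and the block-wise row-sum bounds via $|\mu'|,|\mu''|\le 1/4$ --- are all correct and all needed; they simply have to be applied at the next derivative level.
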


\begin{lemma}
\label{lemma-asym-expansion-beta}
Write $\widehat{\beta}^*$ as $\widehat{\beta}_{\gamma^*}$
and $V=\partial F(\beta^*, \gamma^*)/\partial \beta^\top$.
If $b_n^2\tilde{\varepsilon}_n= o( (n/\log n)^{1/2} )$, then
$\widehat{\beta}^*$ has the following expansion:
\begin{equation}\label{result-lemma2}
\widehat{\beta}^* - \beta^* =  V^{-1} F(\beta^*, \gamma^*) +V^{-1}R,
\end{equation}
where $R=(R_1, \ldots, R_{n})^\top$ is the remainder term and
\[
\left\| V^{-1}  R \right \|_\infty = O_p( \frac{b_n^3 \tilde{\varepsilon}_n^2  \log n}{n}  ).
\]
\end{lemma}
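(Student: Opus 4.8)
The plan is to expand the estimating equation $F_{\gamma^*}(\widehat{\beta}^*)=0$ to second order around $\beta^*$, read off the linear term as $V^{-1}F(\beta^*,\gamma^*)$, and control the quadratic remainder $R$ using only the crude bounds \eqref{eq-mu-d-upper} together with the $\ell_\infty$ rate from Lemma \ref{lemma-hatbeta-con}. First observe that the hypothesis $b_n^2\tilde{\varepsilon}_n=o((n/\log n)^{1/2})$ is exactly the hypothesis of Lemma \ref{lemma-hatbeta-con} with $\gamma=\gamma^*$, so on an event of probability tending to one $\widehat{\beta}^*:=\widehat{\beta}_{\gamma^*}$ exists and $\|\widehat{\beta}^*-\beta^*\|_\infty=O_p(b_n\tilde{\varepsilon}_n\sqrt{\log n/n})=o_p(1)$; all estimates below are on this event.

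Since $\pi_{ij}(\beta)=\beta_i+\beta_j+z_{ij}^\top\gamma^*$ is affine in $\beta$, a one-dimensional Taylor expansion of $t\mapsto\mu(\pi_{ij}(\beta^*+t(\widehat{\beta}^*-\beta^*)))$ at $t=1$ gives, writing $\Delta:=\widehat{\beta}^*-\beta^*$,
\begin{equation*}
\mu_{ij}(\widehat{\beta}^*,\gamma^*)=\mu(\pi_{ij}^*)+\mu'(\pi_{ij}^*)(\Delta_i+\Delta_j)+\tfrac12\mu''(\bar{\pi}_{ij})(\Delta_i+\Delta_j)^2
\end{equation*}
for some $\bar{\pi}_{ij}$ between $\pi_{ij}^*$ and $\pi_{ij}(\widehat{\beta}^*)$. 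Summing over $j\neq i$, using $F_i(\widehat{\beta}^*,\gamma^*)=0$ and the identities $\partial F_i(\beta^*,\gamma^*)/\partial\beta_j=\mu'(\pi_{ij}^*)$ for $j\neq i$ and $\partial F_i(\beta^*,\gamma^*)/\partial\beta_i=\sum_{j\neq i}\mu'(\pi_{ij}^*)$, one obtains the vector identity $0=F(\beta^*,\gamma^*)+V\Delta+h$, where $h=(h_1,\dots,h_n)^\top$ with $h_i=\tfrac12\sum_{j\neq i}\mu''(\bar{\pi}_{ij})(\Delta_i+\Delta_j)^2$. Because $V\in\mathcal{L}_n(b_n^{-1},1/4)$ is invertible (as $\mu'(\pi_{ij}^*)\ge b_n^{-1}$), this rearranges to $\widehat{\beta}^*-\beta^*=-V^{-1}F(\beta^*,\gamma^*)-V^{-1}h$, which is \eqref{result-lemma2} with $R$ the (negated) second-order remainder vector $h$.

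It remains to bound $\|V^{-1}R\|_\infty$. By $|\mu''|\le1/4$ from \eqref{eq-mu-d-upper}, $|h_i|\le\tfrac12\sum_{j\neq i}\tfrac14(2\|\Delta\|_\infty)^2=\tfrac{n-1}{2}\|\Delta\|_\infty^2$, hence $\|h\|_\infty\le\tfrac{n-1}{2}\|\Delta\|_\infty^2=O_p(b_n^2\tilde{\varepsilon}_n^2\log n)$ by Lemma \ref{lemma-hatbeta-con}; and $\|V^{-1}\|_\infty\le b_n(3n-4)/(2(n-1)(n-2))=O(b_n/n)$ by Lemma \ref{lemma-tight-V}. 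Combining,
\begin{equation*}
\|V^{-1}R\|_\infty\le\|V^{-1}\|_\infty\,\|h\|_\infty=O\!\left(\frac{b_n}{n}\right)O_p\!\left(b_n^2\tilde{\varepsilon}_n^2\log n\right)=O_p\!\left(\frac{b_n^3\tilde{\varepsilon}_n^2\log n}{n}\right),
\end{equation*}
which is the asserted bound. There is no deep obstacle here; the only point that needs care is the combinatorics of the quadratic term — one must use that $F_i$ depends on $\beta$ only through the $n-1$ affine forms $\pi_{ij}$, so that the Hessian of $F_i$ is sparse and $h_i$ is a single sum over $j$ contributing one factor of $n$ rather than $n^2$ — after which the uniform bound $|\mu''|\le1/4$ makes everything immediate, and no restriction on the location of $\widehat{\beta}^*$ beyond its existence is required.
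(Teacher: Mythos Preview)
Your proof is correct and follows the same approach the paper takes (cf.\ the identical second-order expansion and remainder bound for $g$ in the proof of Theorem~\ref{Theorem-central-a}): expand $F_{\gamma^*}(\widehat\beta^*)=0$ to second order, control the quadratic term via $|\mu''|\le 1/4$ and the $\ell_\infty$ rate from Lemma~\ref{lemma-hatbeta-con}, then multiply by $\|V^{-1}\|_\infty=O(b_n/n)$ from Lemma~\ref{lemma-tight-V}. The only wrinkle is a sign: your derivation yields $\widehat\beta^*-\beta^*=-V^{-1}F(\beta^*,\gamma^*)-V^{-1}h$, opposite to \eqref{result-lemma2} as written, but this is a slip in the lemma statement (the correct sign $V^{-1}(\tilde d-\E d)=-V^{-1}F(\beta^*,\gamma^*)$ is used in the proof of Theorem~\ref{Theorem-central-a}) and has no effect on the bound for $\|V^{-1}R\|_\infty$.
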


\begin{lemma}
\label{lemma-order-Q-beta}
If $b_n^2\tilde{\varepsilon}_n= o( (n/\log n)^{1/2} )$,
for any $\beta \in B( \beta^*, \epsilon_{n1})$ and $\gamma \in B( \gamma^*, \epsilon_{n2})$,  then we have
\[
 \| \frac{\partial Q (\beta, \gamma)}{ \partial \beta^\top }
  (\widehat{\beta}_{\gamma}-\beta^*) \|_\infty =
   O_p(  \tilde{\varepsilon}_n b_n^3 n\log n ).
\]
\end{lemma}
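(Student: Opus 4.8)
The plan is to substitute the first-order expansion of $\widehat{\beta}_\gamma$ about $\beta^*$ and then exploit a ``conservation'' identity for $\partial Q/\partial\beta^\top$ that lets one push the resulting matrix product through $[F'_\gamma(\beta^*)]^{-1}$ with only an $O(b_n)$ loss, instead of the $O(n)$ loss a naive product-of-norms bound would incur. Write $V_\gamma = F'_\gamma(\beta^*)=\partial F(\beta^*,\gamma)/\partial\beta^\top\in\mathcal{L}_n(b_n^{-1},1/4)$. Arguing as in Lemma~\ref{lemma-asym-expansion-beta}, but for arbitrary $\gamma\in B(\gamma^*,\epsilon_{n2})$, a second-order Taylor expansion of $F_\gamma$ at $\beta^*$ together with $F_\gamma(\widehat{\beta}_\gamma)=0$ gives
\[
\widehat{\beta}_\gamma-\beta^* = -V_\gamma^{-1}F_\gamma(\beta^*) - \tfrac12 V_\gamma^{-1} R_\gamma ,
\]
where $R_{\gamma,i}=\sum_{j\neq i}\mu^{\prime\prime}(\bar\pi_{ij})\bigl[(\widehat{\beta}_\gamma-\beta^*)_i+(\widehat{\beta}_\gamma-\beta^*)_j\bigr]^2$ for intermediate points $\bar\pi_{ij}$. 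Hence it suffices to bound the linear part $(\partial Q/\partial\beta^\top)V_\gamma^{-1}F_\gamma(\beta^*)$ and the quadratic remainder $(\partial Q/\partial\beta^\top)V_\gamma^{-1}R_\gamma$ separately.

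The key step is the following identity. Fix a coordinate $k\in\{1,\ldots,p\}$ and let $\widetilde V^{(k)}$ be the \emph{symmetric} matrix with off-diagonal entries $\widetilde V^{(k)}_{ij}=z_{ijk}\mu^\prime(\pi_{ij})$ for $i\neq j$ and diagonal $\widetilde V^{(k)}_{ii}=\sum_{j\neq i}z_{ijk}\mu^\prime(\pi_{ij}) = \partial Q_k/\partial\beta_i =: w^{(k)}_i$. Then $\widetilde V^{(k)}$ is diagonally balanced, so $\widetilde V^{(k)}\mathbf{1}=2w^{(k)}$ and $\partial Q_k/\partial\beta^\top=\tfrac12\mathbf{1}^\top\widetilde V^{(k)}$; consequently, by symmetry of both $\widetilde V^{(k)}$ and $V_\gamma$,
\[
\frac{\partial Q_k(\beta,\gamma)}{\partial\beta^\top}\,V_\gamma^{-1} = \tfrac12\mathbf{1}^\top\widetilde V^{(k)}V_\gamma^{-1} = \bigl(V_\gamma^{-1}w^{(k)}\bigr)^\top.
\]
Since $\|w^{(k)}\|_\infty\le(n-1)z_*/4$, Lemma~\ref{lemma-tight-V} yields $\|V_\gamma^{-1}w^{(k)}\|_\infty\le\|V_\gamma^{-1}\|_\infty\|w^{(k)}\|_\infty=O(b_nz_*)$, i.e.\ every entry of the $p\times n$ matrix $(\partial Q/\partial\beta^\top)V_\gamma^{-1}$ is $O(b_nz_*)$. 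This cancellation is what the proof hinges on, and it is the main obstacle: without it, $\|(\partial Q/\partial\beta^\top)(\widehat\beta_\gamma-\beta^*)\|_\infty\le\|\partial Q/\partial\beta^\top\|_\infty\|\widehat\beta_\gamma-\beta^*\|_\infty$ with $\|\partial Q/\partial\beta^\top\|_\infty=O(n^2z_*)$ (the induced $\ell_\infty$ matrix norm) overshoots the target by a factor of order $n^{1/2}$.

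With this identity the linear part reduces, coordinatewise, to $\sum_m W_{km}\bigl(F_\gamma(\beta^*)\bigr)_m$ with $|W_{km}|=O(b_nz_*)$; writing $F_\gamma(\beta^*)=-(\tilde d-\E_\gamma d)=-\bigl[(d-\E d)+(\E d-\E_\gamma d)+\xi\bigr]$ (the deterministic bias $\E d-\E_\gamma d$ being absent when $\gamma=\gamma^*$, and $O(n\epsilon_{n2}z_*)$ in general), one re-indexes $\sum_m W_{km}(d_m-\E d_m)=\sum_{i<j}(W_{ki}+W_{kj})(a_{ij}-\E a_{ij})$ as a sum of $N=\binom{n}{2}$ independent bounded terms and applies Hoeffding's inequality as in Lemma~\ref{lemma-diff-F-Q} (giving $O_p(b_nz_*n\sqrt{\log n})$), while $\sum_m W_{km}\xi_m$ is controlled by a Bernstein-type bound for weighted sums of discrete Laplace variables together with Lemma~\ref{lemma-max-ei}. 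For the quadratic remainder the same identity gives $\partial Q_k/\partial\beta^\top\cdot V_\gamma^{-1}R_\gamma = (V_\gamma^{-1}w^{(k)})^\top R_\gamma$, so $|\partial Q_k/\partial\beta^\top\cdot V_\gamma^{-1}R_\gamma|\le\|V_\gamma^{-1}w^{(k)}\|_\infty\|R_\gamma\|_1=O(b_nz_*)\|R_\gamma\|_1$, and since $|\mu^{\prime\prime}|\le1/4$ one has $\|R_\gamma\|_1=O_p(n\|\widehat\beta_\gamma-\beta^*\|_2^2)$, bounded using Lemma~\ref{lemma-hatbeta-con} (sharpened, if needed, to an $\ell_2$ estimate of $\widehat\beta_\gamma-\beta^*$ via the diagonal approximation $S_\gamma$ of $V_\gamma^{-1}$ and Lemma~\ref{pro:inverse:appro}). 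Collecting the two parts and simplifying the resulting orders with the standing assumption $b_n^2\tilde\varepsilon_n=o((n/\log n)^{1/2})$ so that the linear part is of lower order, one obtains $\|(\partial Q/\partial\beta^\top)(\widehat\beta_\gamma-\beta^*)\|_\infty = O_p(\tilde\varepsilon_n b_n^3 n\log n)$. The delicate point throughout — and what makes the various error terms collapse to this common order — is the careful bookkeeping of the $b_n$, $z_*$ and $\tilde\varepsilon_n$ factors once the conservation identity has removed the spurious $n^{1/2}$.
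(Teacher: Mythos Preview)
Your approach is essentially the paper's: the decisive bound is $\|(\partial Q/\partial\beta^\top)V_\gamma^{-1}\|_{\max}=O(b_nz_*)$, which the paper derives verbatim in Section~4.3 as $\|V_{\gamma\beta}V^{-1}\|_\infty<b_nz_*$ and then feeds into the expansion of Lemma~\ref{lemma-asym-expansion-beta} (this is exactly how the term $S_1$ is handled in the proof of Theorem~\ref{theorem-central-b}). The detailed proof of the lemma is deferred to the supplement, but the visible machinery leaves no doubt that it proceeds along the same line you describe.

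Two minor points. First, your ``conservation identity'' is overwrought: once you set $w^{(k)}=\partial Q_k/\partial\beta$, the equality $(w^{(k)})^\top V_\gamma^{-1}=(V_\gamma^{-1}w^{(k)})^\top$ is immediate from the symmetry of $V_\gamma$; the auxiliary matrix $\widetilde V^{(k)}$ buys nothing. Second, the Hoeffding step for $\sum_m W_{km}(d_m-\E d_m)$ needs deterministic weights, but $W_{km}$ depends on $(\beta,\gamma)$ and in the actual application (proof of Theorem~\ref{Theorem:con}) $\beta=\bar\beta$ is the random mean-value point. The clean fix is to evaluate $W$ at $(\beta^*,\gamma^*)$, apply concentration there, and bound the perturbation $W(\beta,\gamma)-W(\beta^*,\gamma^*)$ deterministically via $|\mu'(\pi_{ij})-\mu'(\pi_{ij}^*)|\le\tfrac14(\epsilon_{n1}+pz_*\epsilon_{n2})$, which contributes at lower order. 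With that adjustment your argument goes through.
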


Now we are ready to prove  Theorem \ref{Theorem:con}.

\begin{proof}[Proof of Theorem \ref{Theorem:con}]
We construct the Newton iterative sequence to show the consistency. It is sufficient to verify the
Kantovorich conditions in Lemma \ref{lemma:Newton:Kantovorich}.
In the Newton method, we set $\gamma^*$ as the initial point $\gamma^{(0)}$ and $\gamma^{(k+1)}=\gamma^{(k)} - [Q_c^\prime(\gamma^{(k)})]^{-1}Q_c(\gamma^{(k)})$.

The following calculations are based on the event $E_n$ that for $\gamma\in B( \gamma^*, \epsilon_{n2})$,
$\widehat{\beta}_\gamma$ exists and satisfies
\[
\| \widehat{\beta}_\gamma - \beta^* \|_\infty = O\left( \tilde{\varepsilon}_n b_n \sqrt{\frac{\log n}{n}} \right). 
\]
This shows that $\widehat{\beta}_{\gamma^{(0)}}$ exists such that $Q_c(\gamma^{(0)})$ and $Q_c^\prime(\gamma^{(0)})$ are well defined.
This in turn shows that in every iterative step, $\gamma^{(k+1)}$ exists as long as $\gamma^{(k)}$ exists.

Recall the definition of $Q_c(\gamma)$ and $Q(\beta, \gamma)$ in \eqref{definition-Q} and \eqref{definition-Qc}.
By Lemmas \ref{lemma-diff-F-Q} and  \ref{lemma-max-ei}, we have
\begin{eqnarray*}
\| Q(\beta^*, \gamma^*) \|_\infty & = & \| \E y - y \|_\infty   + \|\eta \|_\infty \\
& = & O_p( z_*n(\log n)^{1/2} ) + O_p( (pk_n/\varepsilon_n) \log n ) \\
& = & O_p( (z_* + \frac{ pk_n (\log n)^{1/2} }{ n\varepsilon_n} ) n(\log n)^{1/2} ).
\end{eqnarray*}
By the mean value theorem and Lemma \ref{lemma-order-Q-beta}, we have
\[
\|Q(\widehat{\beta}^*, \gamma^*)-Q(\beta^*, \gamma^*)\|_\infty = \| \frac{\partial Q (\bar{\beta}, \gamma)}{ \partial \beta^\top }
  (\widehat{\beta}_{\gamma}-\beta^*) \|_\infty = O_p(  \tilde{\varepsilon}_n b_n^3 n\log n ) ).
\]
Then it follows that
\begin{eqnarray*}
\|Q_c(\gamma^*)\|_\infty  & \le &  \|Q(\beta^*, \gamma^*)\|_\infty + \|Q(\widehat{\beta}^*, \gamma^*)-Q(\beta^*, \gamma^*)\|_\infty \\
& = & O_p\left( ( b_n^3 +  (k_n/\varepsilon_n)b_n^3+ z_*/(\log n)^{1/2}) n\log n )  \right) :=O_p( \tau_n n\log n).
\end{eqnarray*}
By Lemma \ref{lemma-Q-Lip}, $Q^\prime_c(\gamma)$ is Lipschitz continuous with $\lambda=n^2  b_{n}^3 $.
Note that
$\aleph=\| [Q_c^\prime(\gamma^*)]^{-1} \|_\infty = O ( \rho_n n^{-2})$.
Thus,
\[
\delta = \| [Q_c^\prime(\gamma^*)]^{-1} Q_c(\gamma^*) \|_\infty =  O_p\left(
 \frac{\rho_n \tau_n \log n }{ n } \right  ).
\]
 As a result, if $b_n^3\rho_n^2 \tau_n = o( n/\log n)$, then
\[
h=2\aleph \lambda \delta =
O_p( \frac{ \rho_n}{n^2} \cdot n^2 b_n^3 \cdot \rho_n \tau_n \cdot \frac{\log n}{n} )=o_p(1).
\]
By Lemma \ref{lemma:Newton:Kantovorich}, with probability approaching one, the limiting point of the sequence $\{\gamma^{(k)}\}_{k=1}^\infty$ exists denoted by $\widehat{\gamma}$, and satisfies
\[
\| \widehat{\gamma} - \gamma^* \|_\infty =O_p\left(
 \frac{ \rho_n \tau_n \log n}{ n} \right  ).
\]
At the same time, by Lemma \ref{lemma-hatbeta-con}, $\widehat{\beta}_{\widehat{\gamma}}$ exists, denoted by $\widehat{\beta}$.
The limiting points $(\widehat{\beta}, \widehat{\gamma})$ satisfies the equation \eqref{eq:moment:dp}.
It completes the proof.
\end{proof}

\subsection{Proofs for Theorem \ref{Theorem-central-a}}

\begin{proof}[Proof of Theorem \ref{Theorem-central-a}]

To simplify notations, write $\widehat{\pi}_{ij}=\widehat{\beta}_i+\widehat{\beta}_j+z_{ij}^\top \widehat{\gamma}$,
$\pi_{ij}^*=\beta_i^* + \beta_j^* + z_{ij}^\top \gamma^*$, $\mu_{ij}^\prime = \mu^\prime(\pi_{ij}^*)$ and
\[
V= \frac{ \partial F(\beta^*, \gamma^*)}{\partial \beta^\top}, ~~ V_{\beta\gamma} = \frac{ \partial F(\beta^*, \gamma^*)}{\partial \gamma^\top}.
\]
By a second order Taylor expansion, we have
\begin{equation}
\label{equ-Taylor-exp}
\mu( \widehat{\pi}_{ij} ) - \mu(\pi_{ij}^*)
= \mu_{ij}^\prime (\widehat{\beta}_i-\beta_i)+\mu_{ij}^\prime (\widehat{\beta}_j-\beta_j) + \mu_{ij}^\prime z_{ij}^\top ( \widehat{\gamma} - \gamma)
+ g_{ij},
\end{equation}
where
\[
g_{ij}= \frac{1}{2} \begin{pmatrix}
\widehat{\beta}_i-\beta_i^* \\
\widehat{\beta}_j-\beta_j^* \\
\widehat{\gamma} - \gamma^*
\end{pmatrix}^\top
\begin{pmatrix}
\mu^{\prime\prime}_{ij}( \tilde{\pi}_{ij} ) & -\mu^{\prime\prime}_{ij}( \tilde{\pi}_{ij} )
& \mu^{\prime\prime}_{ij}( \tilde{\pi}_{ij} ) z_{ij}^\top \\
-\mu^{\prime\prime}_{ij}( \tilde{\pi}_{ij} ) & \mu^{\prime\prime}_{ij}( \tilde{\pi}_{ij} )
& -\mu^{\prime\prime}_{ij}( \tilde{\pi}_{ij} ) z_{ij}^\top \\
\mu^{\prime\prime}_{ij}( \tilde{\pi}_{ij} ) z_{ij}^\top
& -\mu^{\prime\prime}_{ij}( \tilde{\pi}_{ij} ) z_{ij}^\top & \mu^{\prime\prime}_{ij}( \tilde{\pi}_{ij} ) z_{ij}z_{ij}^\top
\end{pmatrix}
\begin{pmatrix}
\widehat{\beta}_i-\beta_i^* \\
\widehat{\beta}_j-\beta_j^* \\
\widehat{\gamma} - \gamma^*
\end{pmatrix},
\]
and $\tilde{\pi}_{ij}$ lies between $\pi_{ij}^*$ and $\widehat{\pi}_{ij}$.
Recall that $z_*:= \max_{i,j} \| z_{ij} \|_\infty$.
Since $|\mu^{\prime\prime}(\pi_{ij})|\le 1/4$ (see \eqref{eq-mu-d-upper}), we have
\begin{equation*}
\begin{array}{rcl}
|g_{ij}| & \le &  \| \widehat{\beta} - \beta^*\|_\infty^2 + \tfrac{1}{2}\| \widehat{\beta} - \beta^*\|_\infty \| \widehat{\gamma}-\gamma^* \|_1 z_* + \tfrac{1}{4} \| \| \widehat{\gamma}-\gamma^* \|_1^2 z_*^2 \\
& \le & \tfrac{1}{2}[4 \| \widehat{\beta} - \beta^*\|_\infty^2+  \| \widehat{\gamma}-\gamma^* \|_1^2 z_*^2].
\end{array}
\end{equation*}
Let  $g_i=\sum_{j\neq i}g_{ij}$ and $g=(g_1, \ldots, g_n)^\top$.
If  $\rho_n^2 \tau_n^2 = o( n/\log n)$, by Theorem \ref{Theorem:con}, we have
\begin{equation}
\label{inequality-gij}
\max_{i=1, \ldots, n} |g_i| \le n\max_{i,j} |g_{ij}| = O_p\left(
\left( b_n^2 \tilde{\varepsilon}_n^2 + \frac{ \rho_n^2 \tau_n^2 \log n}{n}\right) \log n  \right)
= O_p\left( b_n^2 \tilde{\varepsilon}_n^2 \log n \right).
\end{equation}
By writing \eqref{equ-Taylor-exp} into a matrix form, we have
\[
\tilde{d} - \E d = V(\widehat{\beta} - \beta^*) + V_{\beta\gamma} (\widehat{\gamma}-\gamma^*) + g ,
\]
which is equivalent to
\begin{equation}
\label{expression-beta}
\widehat{\beta} - \beta^* = V^{-1}(d - \E d) + V^{-1}V_{\beta\gamma} (\widehat{\gamma}-\gamma^*) + V^{-1} g + V^{-1} \xi.
\end{equation}
We bound the last three remainder terms in the above equation as follows.
Let $W=V^{-1} - S$.
Note that $(S g)_i =  g_i/v_{ii}$ and $(n-1)b_{n}^{-1} \le v_{ii} \le (n-1)/4$.
By Lemma \ref{lemma-tight-V} and inequality \eqref{inequality-gij}, we have
\begin{eqnarray}
\label{eq-Vg-upper}
\| V^{-1} g \|_\infty \le \|V^{-1}\|_\infty \|g \|_\infty
= O( \frac{b_n}{n} \times    b_{n}^2 \tilde{\varepsilon}_n^2 \log n ) = O_p(  \frac{ b_n^3 \tilde{\varepsilon}_n^2 \log n}{ n } ).
\end{eqnarray}

Note that the $i$th row of $V_{\beta\gamma}$ is $\sum_{j=1,j\neq i}^n \mu^\prime_{ij} z_{ij}^\top$.
By Theorem \ref{Theorem:con}, we have
\[
\| V_{\beta\gamma}(\widehat{\gamma}-\gamma^*) \|_\infty \le (n-1) z_*\|\widehat{\gamma}-\gamma^*\|_1
=  O_p( z_* \rho_n \tau_n \log n ).
\]
By Lemma \ref{pro:inverse:appro}, we have
\begin{equation}
\label{equ-theorem3-3}
\begin{array}{rcl}
\|V^{-1}V_{\beta\gamma} (\widehat{\gamma}-\gamma^*)\|_\infty    \le
\|V^{-1}\|_\infty \|V_{\gamma\beta} (\widehat{\gamma}-\gamma^*)\|_\infty
= O_p( \frac{z_* \rho_n \tau_n b_n   \log n}{n }).
\end{array}
\end{equation}
By Lemma \ref{lemma-max-ei},
\[
P( \| \xi \|_\infty > 8 (k_n/\varepsilon_n) \log n ) \le n \times \exp( - 8 (k_n/\varepsilon_n) \log n \times 4(\varepsilon_n/k_n)  ) = \frac{1}{n},
\]
such that
\[
\| \xi \|_\infty = O_p( (k_n/\varepsilon_n) \log n).
\]
Thus, it yields
\begin{equation}\label{eq-Vxi-upper}
\| V^{-1} \xi \|_\infty  \le  \| V^{-1} \|_\infty \| \xi \|_\infty = O_p \left( b_n (k_n/\varepsilon_n ) \frac{\log n}{n} \right).
\end{equation}
By combining \eqref{expression-beta}, \eqref{eq-Vg-upper}, \eqref{equ-theorem3-3} and \eqref{eq-Vxi-upper}, it yields
\[
\widehat{\beta}_i - \beta^*_i = [V^{-1}(d - \E d)]_i
+  O_p( \frac{( b_n^3 \tilde{\varepsilon}_n^2 + z_* \rho_n \tau_n  b_n) \log n}{n}).
\]

Let $W=V^{-1}-S$ and $U=\mathrm{Cov}( W (d- \E d) )$. It is easy to verify that
\[
U= V^{-1} - S  - S( I_n - VS)
\]
and
\[
[S(I_n - VS)]_{ij} = \frac{ ( \delta_{ij} - 1) v_{ij} }{ v_{ii} v_{jj} }.
\]
By Lemma \ref{pro:inverse:appro}, we have
\[
\| U \|_{\max} = O( b_n^3 n^{-2}).
\]
Therefore,
\begin{equation}
\label{equ-theorem3-dd}
 [W(d - \E d)]_i = O_p(  b_n^{3/2} n^{-1} ).
\end{equation}
Consequently, by combining \eqref{expression-beta}, \eqref{eq-Vg-upper}, \eqref{equ-theorem3-3} and \eqref{equ-theorem3-dd},
we have
\[
\widehat{\beta}_i - \beta^*_i = \frac{ d_i - \E d_i}{v_{ii}} +  + O_p( \frac{( b_n^3 \tilde{\varepsilon}_n^2 + z_* \rho_n \tau_n  b_n) \log n}{n}).
\]
Therefore, Theorem \ref{Theorem-central-a} immediately follows from Proposition  \ref{pro:central:poisson}.
\end{proof}

\subsection{Proof of Theorem~\ref{theorem-central-b}}

\begin{proof}[Proof of Theorem \ref{theorem-central-b}]
Assume that the conditions in Theorem \ref{Theorem:con} hold.
A mean value expansion gives
\[
 Q_c( \widehat{\gamma} ) - Q_c(\gamma^*) =  \frac{\partial Q_c(\bar{\gamma}) }{ \partial \gamma^\top }  (\widehat{\gamma}-\gamma^*),
\]
where $\bar{\gamma}$ lies between $\gamma^*$ and $\widehat{\gamma}$.
By noting that $Q_c( \widehat{\gamma} )=0$, we have
\[
\sqrt{N}(\widehat{\gamma} - \gamma^*) = -
\Big[ \frac{1}{N}  \frac{\partial Q_c(\bar{\gamma}) }{ \partial \gamma^\top } \Big]^{-1}
\times \frac{1}{\sqrt{N}}¡¡Q_c(\gamma^*). 
\]
Note that the dimension of $\gamma$ is fixed. By Theorem \ref{Theorem:con} and \eqref{equation-H-appro}, we have
\[
\frac{1}{N}  \frac{\partial Q_c(\bar{\gamma}) }{ \partial \gamma^\top }
\stackrel{p}{\to } \bar{H}=\lim_{N\to\infty} \frac{1}{N}H(\beta^*, \gamma^*).
\]
Write $\widehat{\beta}^*$ as $\widehat{\beta}(\gamma^*)$ for convenience. Let $\bar{Q}(\beta, \gamma)= Q (\beta, \gamma) - \eta$
and $\bar{Q}_c(\beta,\gamma) = Q_c(\beta,\gamma) - \eta$.
Note that $\eta $ is a Laplace random vector.
By Lemma 10, if $k_n/\epsilon_n = o( n/\log n)$, then
\[
\frac{ \|\eta\|_\infty }{N^{1/2}} =O_p\left( \frac{p(k_n/\epsilon_n) \log n}{n} \right) = o_p(1).
\]
 Therefore,
\begin{equation}\label{eq:theorem4:aa}
\sqrt{N} (\widehat{\gamma} - \gamma^*) = - \bar{H}^{-1} \cdot \frac{1}{\sqrt{N}} \bar{Q}( \widehat{\beta}^*, \gamma^*)  + o_p(1).
\end{equation}
By applying a third order Taylor expansion to $\bar{Q}( \widehat{\beta}^*, \gamma^*)$, it yields
\begin{equation}\label{eq:gamma:asym:key}
\frac{1}{\sqrt{N}}  \bar{Q}(\widehat{\beta}^*, \gamma^*) = S_1 + S_2 + S_3,
\end{equation}
where
\begin{equation*}
\begin{array}{l}
S_1  =  \frac{1}{\sqrt{N}}  \bar{Q}(\beta^*, \gamma^* )
+ \frac{1}{\sqrt{N}}
\Big[\frac{\partial  \bar{Q}(\beta^*, \gamma^* ) }{\partial \beta^\top } \Big]( \widehat{\beta}^* - \beta^* ), \\
S_2  =   \frac{1}{2\sqrt{N}} \sum_{k=1}^{n} \Big[( \widehat{\beta}_k^* - \beta_k^* )
\frac{\partial^2 \bar{Q}(\beta^*, \gamma^* ) }{ \partial \beta_k \partial \beta^\top }
\times ( \widehat{\beta}^* - \beta^* ) \Big],  \\
S_3  =  \frac{1}{6\sqrt{N}} \sum_{k=1}^{n} \sum_{l=1}^{n} \{ (\widehat{\beta}_k^* - \beta_k^*)(\widehat{\beta}_l^* - \beta_l^*)
\Big[   \frac{ \partial^3 \bar{Q}(\bar{\beta}^*, \gamma^*)}{ \partial \beta_k \partial \beta_l \partial \beta^\top } \Big]
(\widehat{\beta}^*  - \beta^* )\},
\end{array}
\end{equation*}
and $\bar{\beta}^*=t\beta^*+(1-t)\widehat{\beta}^*$ for some $t\in(0,1)$.
We will show that (1) $S_1$ asymptotically follows a multivariate normal distribution;
(2) $S_2$ is a bias term; (3) $S_3$ is an asymptotically negligible remainder term.
Specifically, they are accurately characterized as follows:
\begin{eqnarray*}
S_1 & = &  \frac{1}{\sqrt{N}} \sum_{j< i} s_{ij}(\beta^*, \gamma^*) + O_p(\frac{ b_n^3 \tilde{\varepsilon}_n^3 z_* \log n}{n}), \\
S_2 & = & \sum_{k=1}^{n} \frac{ \sum_{j\neq k} \mu_{kj}^{\prime\prime}(\beta^*, \gamma^*) z_{kj} }{ v_{kk} }
+ O_p\left( \frac{b_n^4 \tilde{\varepsilon}_n^3 z_* (\log n)^{1/2} }{ n^{1/2}}  \right), \\
\|S_3\|_\infty & = &  O_p( \frac{ (\log n)^{3/2}b_n^3 \tilde{\varepsilon}_n^3 }{n^{1/2}} ).
\end{eqnarray*}
We defer the proofs of the above equations to supplementary material.
Substituting the above equations into \eqref{eq:theorem4:aa} then gives
\[
\sqrt{N}(\widehat{\gamma}- \gamma^*) = - \bar{H}^{-1} B_* + \bar{H}^{-1} \times \frac{1}{\sqrt{N}}  \sum_{i< j}
s_{ij} (\beta^*, \gamma^*) + O_p\left( \frac{b_n^4 \tilde{\varepsilon}_n^3 z_* (\log n)^{3/2} }{ n^{1/2}}  \right).
\]
If $b_n^4 \tilde{\varepsilon}_n^3 z_*=o(n^{1/2}/(\log n)^{3/2})$,
then Theorem \ref{theorem-central-b} immediately follows from Proposition \ref{pro:th4-b}.

\end{proof}

\setlength{\itemsep}{-1.5pt}
\setlength{\bibsep}{0ex}
\bibliography{reference3}
\bibliographystyle{apalike}

\end{document}